\newcommand{\media}{\mathfrak{M}}
\newtheorem{theorem}{Theorem}
\newtheorem{lemma}{Lemma}
\newtheorem{proposition}{Proposition}
\newtheorem{remark}{Remark}
\newtheorem{example}{Example}
\begin{document}

\title{{Metric properties of homogeneous and spatially inhomogeneous $F$-divergences}
\author{Nicolò De Ponti}
\thanks{N. De Ponti is with the Department of Mathematics, University of Pavia, Pavia 27100, Italy (e-mail: nicolo.deponti01@universitadipavia.it)}}

\IEEEpeerreviewmaketitle

\maketitle

\begin{abstract}
In this paper I investigate the construction and the properties of the so-called \textit{marginal perspective cost} $H$, a function related to Optimal Entropy-Transport problems obtained by a minimizing procedure, involving a cost function $c$ and an entropy function. In the pure entropic case, which corresponds to the choice $c=0$, the function $H$ naturally produces a symmetric divergence. I consider various examples of entropies and I compute the induced marginal perspective function, which includes some well-known functionals like the Hellinger distance, the Jensen-Shannon divergence and the Kullback-Liebler divergence. I discuss the metric properties of these functions and I highlight the important role of the so-called Matusita divergences. In the entropy-transport case, starting from the power like entropy $F_p(s)=(s^p-p(s-1)-1)/(p(p-1))$ and the cost $c=d^2$ for a given metric $d$, the main result of the paper ensures that for every $p>1$ the induced marginal perspective cost $H_p$ is the square of a metric on the corresponding cone space.
\end{abstract}

\begin{IEEEkeywords}
$f$-divergence, induced marginal perspective cost, Optimal Transport, Optimal Entropy-Transport, triangle inequality, power like entropies, Matusita divergences, Kullback-Liebler divergence, Hellinger distance, total variation.
\end{IEEEkeywords}

\IEEEoverridecommandlockouts

\section{Introduction}
\IEEEPARstart{G}{iven} a function $F\in \Gamma_0(\mathbb{R}_{+}):=\{f:[0,+\infty)\rightarrow [0,+\infty], f \ \mathrm{convex,\ lower \ semicontinuous \ and} \
 f(1)=0\}$, a finite set $\Omega=\{x_1,..,x_m\}$, and two probability densities 
$$\mu_1=\sum_{i=1}^mr_i\delta_{x_i}, \ \ \mu_2=\sum_{i=1}^mt_i\delta_{x_i}$$
such that $t_i>0$ when $r_i>0$ for every $i=1,..,m,$ the $F$-divergence of $\mu_1$ from $\mu_2$ is defined as
\begin{equation}
D_F(\mu_1||\mu_2):=\sum_{i=1}^mF\Big(\frac{r_i}{t_i}\Big)t_i=\sum_{i=1}^m\hat{F}(r_i,t_i)
\end{equation}
where $\hat{F}(r,t):=F\big(\frac{r}{t}\big)t$ is the perspective function induced by $F$ (here I am using the convention $F\big(\frac{0}{0}\big)0=0$).

Since their introduction by Csiszár \cite{Csiszar}, Ali and Silvey \cite{Ali}, $F$-divergences have become a fundamental tool in information theory and statistics. They can be interpreted as a sort of "distance function" on the set of probability distributions, even if they do not generally fulfill the symmetric property and the triangle inequality. I refer to Liese and Vajda \cite{Liese2}, \cite{Vajda}, and references therein for a systematic presentation of these functionals, including the \textit{total variation} (for $F(s)=|s-1|$), and the $\chi^{\alpha}$ \textit{divergences} generated by the choice $F(s)=|s-1|^{\alpha}$ (discussed by Vajda in \cite{Vajda2}). Another important class of divergences is represented by the so-called \textit{Matusita divergences} $F(s)=|s^a-1|^{\frac{1}{a}}$ \cite{Matusita}, which include as a particular case the well known \textit{Hellinger distance} $F(s)=(\sqrt{s}-1)^2$ \cite{Hellinger}.

Starting from a $F$-divergence, there is a simple variational way to generate a new symmetric divergence by setting
\begin{equation}
H_{F}(\mu_1||\mu_2):=\inf_{\mu} D_F(\mu||\mu_1)+D_F(\mu||\mu_2).
\end{equation}
This is related to the marginal perspective function $H$, the lower semicontinuous envelope of the function
\begin{equation}\label{minimizzazione tramite H, senza costo}
\tilde{H}(r,t)=\inf_{\theta >0}F\Big(\frac{\theta}{r}\Big)r+F\Big(\frac{\theta}{t}\Big)t=\inf_{\theta>0}\hat{F}(\theta,r)+\hat{F}(\theta,t).
\end{equation}
The function $H$ obtained in this way is jointly convex, lower semicontinuous and it is zero on the diagonal. As a result, one gets a natural map 
\begin{equation}
T_1:\Gamma_0({\mathbb{R}_{+}})\rightarrow \Gamma_0({\mathbb{R}_{+}}), \ \ \ \ T_1(F)(s):=H(1,s),
\end{equation}
with the additional property
\begin{equation}
D_{T_1(F)}(\mu_1||\mu_2)=D_{T_1(F)}(\mu_2||\mu_1).
\end{equation}

Using different functions $F\in \Gamma_0(\mathbb{R}_{+})$, that I also call \textit{entropy functions} in the present paper, the minimizing procedure \eqref{minimizzazione tramite H, senza costo} gives raise to well-known statistical functionals.

For the function $F(s)=U_1(s):=s\ln(s)-s-1$, the result is the Hellinger distance \cite{Hellinger}
\begin{equation}
H(r,t)=(\sqrt{r}-\sqrt{t})^2.
\end{equation}

When $F(s)=U_0(s):=s-1-\ln(s)$, one gets the Jensen-Shannon divergence \cite{Lin}
\begin{equation}
H(r,t)=r\ln(r)+t\ln(t)-(r+t)\ln\Big(\frac{r+t}{2}\Big).
\end{equation}

The previous examples are taken from the class of the power like entropies $\{U_p\}$
\begin{equation}
U_p(s):=\frac{1}{p(p-1)}(s^p-p(s-1)-1), \ \textrm{if} \ p\neq 0,1.
\end{equation}
They give raise to the family of functions
\begin{equation}\label{H di tipo potenza}
H(r,t)=\frac{2}{p}\Big[\mathfrak{M}_1(r,t)-\mathfrak{M}_{1-p}(r,t)\Big],
\end{equation}
where the expression is written in the terms of the power mean
\begin{equation}
\mathfrak{M}_p(r,t):=(\frac{r^p+t^p}{2})^{\frac{1}{p}}.
\end{equation}

The entropy $F(s)=s^2-2\ln(s)-1$ produces the symmetric Kullback-Leibler divergence \cite{Kullback}
\begin{equation}
H(r,t)=(r-t)\ln\Big(\frac{r}{t}\Big).
\end{equation}

The marginal perspective function can also be computed starting from non-smooth entropies as $F(s)=|s-1|$, which induces the celebrated total variation distance
\begin{equation}
H(r,t)=|r-t|.
\end{equation}

The metric properties of the $F$-divergences have been investigated by many authors like Csiszar, Endres, Kafka, Osterreicher, Schindelin, Vincze (\cite{Csiszar2}, \cite{Endres}, \cite{Kafka}, \cite{Osterreicher}, \cite{Osterreicher2}), to cite only a few.
In the pure entropic setting, I generalize a previous result of Osterreicher \cite{Osterreicher} and I prove that, for the power like entropy $U_p$, the induced function $H$ given by \eqref{H di tipo potenza} is the square of a metric on $[0,+\infty)$ for every $p\in (-\infty,\frac{1}{2}]\cup [1,+\infty).$

In the pure entropic case, I also characterize the limit of the sequence $T_1^{(n)}(F)$ and I prove that the total variation and its positive multiples are the only divergences that are also a distance. 
Under additional assumptions, the convergence properties of the sequence $T_a^{(n)}(F)$ are also studied, where I put $T_a(F):=2^{\frac{1}{a}-1}T_1(F)$, $a\in (0,1)$. I will show that this is strictly related to those divergences $F$ for which $H^a$ is a distance, and I will emphasize the central role of the class of Matusita divergences.

Recently, $F$-divergences have been considered by Liero, Mielke, Savaré \cite{LMS} as penalizing functionals in the formulation of Optimal Entropy-Transport problems, a generalization of Optimal-Transport problems obtained by relaxing the marginal constraints.
Given a cost function $c:X_1\times X_2\rightarrow [0,+\infty)$ and an admissible entropy function $F\in \Gamma_{0}(\mathbb{R}_{+})$, a crucial role in the theory is played by the induced \textit{marginal perspective cost} $H:X_1\times [0,+\infty)\times X_2\times[0,+\infty)\rightarrow [0,+\infty]$, the lower semicontinuos envelope of the function 

\begin{equation}
\label{minimizzazione tramite H}
\tilde{H}(x_1,r_1;x_2,r_2)=\inf_{\theta>0}r_1F\Big(\frac{\theta}{r_1}\Big)+r_2F\Big(\frac{\theta}{r_2}\Big)+\theta c(x_1,x_2).
\end{equation}

The function $H$ remains positively $1$-homogeneous with respect to $(r_1,r_2)$, a property used in \cite{LMS} in order to derive a "homogeneous formulation" of Optimal Entropy-Transport problems that allows the study of the metric and dynamical aspects of the theory. 

When the starting entropy $F$ has a strict minimum at $s=1$, and the cost $c$ is a symmetric function such that $c(x_1,x_2)=0$ if and only if $\ x_1=x_2$, I will show that the induced marginal perspective cost $H$ is symmetric, non-negative and $H(x_1,r_1;x_2,r_2)=0$ if and only if $(x_1,r_1)=(x_2,r_2)$ or $r_1=r_2=0$.

In the presence of a non-zero cost function $c$, an explicit computation of the induced marginal perspective cost is often unavailable. A special case, central in the study of Optimal Entropy-Transport problems, is given by the choices $X_1=X_2=X$, $c=d^2$ for a metric $d$ on $X$, and $F=U_p$. It holds

\begin{align}
&H_p(x_1,r;x_2,t)=\frac{2}{p}\Big[\mathfrak{M}_1(r,t)-\mathfrak{M}_{1-p}(r,t)\bigg(1+(1-p)\frac{d^2(x_1,x_2)}{2}\bigg)_{+}^{\frac{p}{p-1}}\Big], \ \ \ p\neq 0,1.
\end{align}

When $p=1$ or $p=0$, one gets

\begin{align}
&H_1(x_1,r;x_2,t)=2\Big[\mathfrak{M}_1(r,t)-\mathfrak{M}_0(r,t)e^{-\frac{d^2(x_1,x_2)}{2}}\Big],
\\
&H_0(x_1,r;x_2,t)=r\ln{r}+t\ln{t}-(r+t)\ln{\Big(\frac{r+t}{2+d^2(x_1,x_2)}\Big)}.
\end{align}

Our main theorem states that for any $p\geq 1$ the square root of $H_p$ satisfies the triangle inequality on the cone space over $X$. The latter is the space $\mathfrak{C}=Y/{\sim}$, where $Y=X\times [0,+\infty)$ and 
$$(x_1,r_1)\sim (x_2,r_2) \iff r_1=r_2=0 \ \mbox{or} \ r_1=r_2, x_1=x_2.$$
Thus, I provide new examples of entropy-transport metrics besides the \textit{Gaussian Hellinger-Kantorovich distance} ($p=1$) and the related \textit{Hellinger-Kantorovich distance} studied in \cite{LMS}. The class of examples includes, for $p=2$, a transport variant of the \textit{Vincze-Le Cam distance} \cite{Vincze}, \cite{LeCam},
\begin{equation}
H(r,t)=\frac{(r-t)^2}{2(r+t)}.
\end{equation}

This paper is organized as follows.

In Section II, I recall some basic concepts of convex analysis, in particular I discuss the connection between the entropy function and the induced perspective function.  

In the third section, I recall the definition of the power means and their main properties. The results in this section will be useful in the study of the marginal perspective cost generated by the power like entropies.

Section IV is devoted to the study of the costless version of the function $H$. I provide a list of examples of admissible entropy functions, which includes indicator functions, $\chi^{\alpha}$ divergences, Matusita divergences, power like entropies and other two families of convex functions that I have called \textit{power-logarithmic entropies} and \textit{double power entropies}. Then, I compute the induced marginal perspective function and I discuss the metric properties of the function obtained starting from some of the previous examples.
Finally, I study the convergence properties of the iteration of the minimizing procedure \eqref{minimizzazione tramite H, senza costo} and I will highlight the role of the class of Matusita divergences.

In the fifth section I introduce the notion of homogeneous marginal perspective cost and I discuss its main properties. 

In section VI, I present the Optimal Entropy-Transport problem and I briefly motivate the "homogeneous formulation" of this problem, via the homogeneous marginal perspective cost.

In the last section I focus on the marginal perspective cost $H_p$ induced by the power like entropy $U_p$ and by the cost $c=d^2$, for a given metric $d$. I prove the main theorem of the paper, which ensures that the function $H_p$ is the square of a metric on the corresponding cone space.

For the sake of simplicity, I limit the discussion to finite nonnegative measures over finite discrete set, but the results can be generalized to finite nonnegative Radon measures over Hausdorff topological spaces (see \cite{LMS}). I plan to address this case in a future work.

In this paper, a real function $f$ is increasing (resp. decreasing) if for any $r<s$ we have $f(r)\leq f(s)$ (resp. $f(r)\geq f(s)$).

\section{Entropy functions}
\label{Entropy function}

A function $F:[0,+\infty)\rightarrow [0,+\infty]$ belongs to the class $\Gamma_0({\mathbb{R}_{+}})$ of admissible entropy functions if $F$ is convex, lower semicontinuous and $F(1)=0$. The domain of the function $F$ is the set
\begin{equation} 
\mathrm{D}(F):=\big\{s\in [0,+\infty):F(s)<+\infty\big\}.
\end{equation}
Let $F\in\Gamma_0({\mathbb{R}_{+}})$, the recession function $\mathrm{rec}(F)$ and the recession constant $F^{'}_{\infty}$ are defined by
\begin{equation}\label{recession}
\mathrm{rec}(F)(r):=\lim_{\alpha \to +\infty}\frac{F(1+\alpha r)}{\alpha}, \ \ \ \ F^{'}_{\infty}:=\mathrm{rec}(F)(1).
\end{equation} 

The perspective function induced by $F$ is the function $\hat{F}:[0,+\infty)\times [0,+\infty)\rightarrow [0,+\infty]$, given by
\begin{equation}
\hat{F}(r,t):=\begin{cases} F\big(\frac{r}{t}\big)t &\mbox{if} \ t>0\\
\mathrm{rec}(F)(r) &\mbox{if} \ t=0.\end{cases}
\end{equation}
$\hat{F}$ is jointly convex, lower semicontinuous and $\hat{F}(r,r)=0$ for any $r$.

The right derivative $F'_{0}$ at $0$, and the asymptotic affine coefficient $\mbox{aff}F_{\infty}$ are defined by

\begin{equation}
F'_{0}:=\begin{cases}-\infty&\mbox{if} \ \ F(0)=+\infty,\\ 
\lim_{s\downarrow 0}\frac{F(s)-F(0)}{s} &\mbox{otherwise},\end{cases}
\end{equation}

\begin{equation}
\mbox{aff}F_{\infty}:=\begin{cases}+\infty&\mbox{if} \ \ F'_{\infty}=+\infty,\\ 
\lim_{s\to \infty}\big(F'_{\infty}s-F(s)\big)&\mbox{otherwise},\end{cases}
\end{equation}
 
which are well posed due to the convexity of $F$. 

The Legendre conjugate function $F^{*}: \mathbb{R}:\rightarrow (-\infty,+\infty]$ is defined by

\begin{equation}
F^{*}(\phi):=\sup_{s\geq 0}\{s\phi-F(s)\}.
\end{equation}

$F^{*}$ is the conjugate of the convex function $\tilde{F}:\mathbb{R}\rightarrow [0,+\infty]$ obtained by extending $F$ to $+\infty$ for negative arguments. It is convex and lower semicontinuous. 
Concerning the behavior of $F^{*}$, the following Lemma holds (\cite{LMS}, section $2.3$):
\begin{lemma}
\label{omeomorfismo coniugata}
The function $F^{*}$ is an increasing homeomorphism between $(F_0',F'_{\infty})$ and $(-F(0),\mathrm{aff}F_{\infty})$ with $F^{*}(0)=0$.
\end{lemma}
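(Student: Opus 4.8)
The plan is to exploit the standard Legendre--Fenchel duality between the slopes of $F$ and the values of $F^{*}$, reading off monotonicity, continuity and the endpoint values from the subdifferential correspondence. First I would record the elementary facts that hold on all of $\mathbb{R}$: as a Legendre conjugate, $F^{*}$ is automatically convex and lower semicontinuous, and since every $s$ in the supremum defining $F^{*}$ is nonnegative, $\phi_1 \le \phi_2$ gives $s\phi_1 - F(s) \le s\phi_2 - F(s)$ for all $s\ge 0$, so $F^{*}$ is nondecreasing. The value $F^{*}(0)=\sup_{s\ge 0}\{-F(s)\}=-\inf_{s\ge 0}F(s)=0$ is immediate from $F\ge 0$ and $F(1)=0$.

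Next I would pin down the behaviour at and beyond the two endpoints by analysing the concave map $g_\phi(s):=s\phi-F(s)$, whose right derivative $\phi-F'_{+}(s)$ is nonincreasing in $s$ by convexity of $F$. For $\phi\le F'_0$ this derivative is $\le \phi-F'_0\le 0$ for every $s\ge 0$, so $g_\phi$ attains its maximum at $s=0$ and $F^{*}(\phi)=g_\phi(0)=-F(0)$. For $\phi=F'_{\infty}$ (when finite) the derivative is $\ge 0$, so $g_\phi$ is nondecreasing and its supremum equals $\lim_{s\to\infty}g_{F'_\infty}(s)=\mathrm{aff}F_{\infty}$ by definition; for $\phi>F'_{\infty}$ one has $g_\phi(s)/s=\phi-F(s)/s\to \phi-F'_{\infty}>0$, hence $F^{*}(\phi)=+\infty$. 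This already locates the left endpoint value $-F(0)$ and the right endpoint value $\mathrm{aff}F_{\infty}$.

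Then I would treat the open interval $(F'_0,F'_{\infty})$, where the homeomorphism lives. For such $\phi$ the right derivative of $g_\phi$ is strictly positive at $s=0^{+}$ while $g_\phi(s)\to-\infty$ as $s\to\infty$, so the supremum is attained at an interior maximizer $s^{*}(\phi)>0$ characterized by $\phi\in\partial F(s^{*}(\phi))$. By the envelope theorem $(F^{*})'(\phi)=s^{*}(\phi)>0$, so $F^{*}$ is strictly increasing on $(F'_0,F'_{\infty})$; being convex and finite there it is also continuous. A continuous strictly increasing function on an interval is a homeomorphism onto its image, and by the endpoint analysis of the previous step the image is exactly $(-F(0),\mathrm{aff}F_{\infty})$, since $\lim_{\phi\downarrow F'_0}F^{*}(\phi)=-F(0)$ and $\lim_{\phi\uparrow F'_{\infty}}F^{*}(\phi)=\mathrm{aff}F_{\infty}$, with neither value attained in the open interval.

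The routine part is the duality bookkeeping; the main obstacle is the careful treatment of the degenerate endpoints --- the cases $F(0)=+\infty$ (so $F'_0=-\infty$ and the left endpoint value $-F(0)$ degenerates to $-\infty$), $F'_{\infty}=+\infty$, and $\mathrm{aff}F_{\infty}=+\infty$ --- together with verifying that the boundary values are genuinely the unattained infimum and supremum of $F^{*}$ over $(F'_0,F'_{\infty})$. Concretely, this requires checking that the interior maximizer $s^{*}(\phi)$ runs through all of $(0,+\infty)\cap\mathrm{int}\,\mathrm{D}(F)$ as $\phi$ sweeps $(F'_0,F'_{\infty})$, i.e. that $F'_{+}$ is a genuine bijection onto the slope interval, so that the limits of $F^{*}$ at the endpoints match the claimed interval endpoints; everything else follows from standard convex analysis.
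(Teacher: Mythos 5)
Your proof is correct in substance, and it is worth noting that the paper itself gives no proof of this lemma: it is quoted directly from \cite{LMS}, Section 2.3. You have therefore supplied the standard convex-analytic argument that the paper delegates to the literature, and your route (monotonicity from $s\geq 0$ in the supremum, $F^{*}(0)=-\inf F=0$, the endpoint evaluations $F^{*}(\phi)=-F(0)$ for $\phi\leq F'_0$ and $F^{*}(F'_{\infty})=\mathrm{aff}F_{\infty}$, blow-up beyond $F'_{\infty}$, then strict monotonicity and continuity on the open slope interval) is exactly the natural one. One step deserves repair: invoking the envelope theorem to write $(F^{*})'(\phi)=s^{*}(\phi)$ tacitly assumes the maximizer is unique (equivalently, that $F^{*}$ is differentiable at $\phi$), which fails when $F$ has affine pieces. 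You do not need differentiability: any maximizer $s^{*}(\phi_1)>0$ is a subgradient of $F^{*}$ at $\phi_1$, so for $\phi_1<\phi_2$ the inequality $F^{*}(\phi_2)\geq s^{*}(\phi_1)\phi_2-F\bigl(s^{*}(\phi_1)\bigr)=F^{*}(\phi_1)+s^{*}(\phi_1)(\phi_2-\phi_1)$ already yields strict increase, and continuity follows from convexity and finiteness on the open interval. Likewise, your closing worry that $F'_{+}$ must be shown to be a bijection onto the slope interval is unnecessary: identifying the image only requires the two one-sided limits of $F^{*}$, which follow from your endpoint evaluations together with the one-dimensional fact that a convex lower semicontinuous function is continuous relative to its effective domain (and when $F'_0=-\infty$, a direct estimate splitting the supremum over $s\leq\epsilon$ and $s>\epsilon$ gives $\lim_{\phi\to-\infty}F^{*}(\phi)=-F(0)$ by lower semicontinuity of $F$ at $0$, covering the degenerate case $F(0)=+\infty$). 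With these cosmetic adjustments the argument is complete and self-contained, which is arguably an improvement on the paper's citation.
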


The reverse entropy function $R:[0,\infty)\rightarrow [0,\infty]$ is defined by
\begin{equation}
R(s):=\begin{cases}F(\frac{1}{s})s&\mbox{if} \ s>0\\ F^{'}_{\infty}&\mbox{if} \ s=0,\end{cases}
\end{equation}
so that $R(s)=\hat{F}(1,s).$ In particular, $R$ is convex, lower semicontinuous and the map $F\mapsto R$ is an involution of $\Gamma_0({\mathbb{R}_{+}})$. Moreover, it holds $\hat{F}(r,t)=\hat{R}(t,r)$ and the function $R$ satisfies

\begin{equation}
\label{coefficienti R}
\begin{split}
R(1)=0, \ R(0)=F'_{\infty}, \ R'_{\infty}=F(0), \\
R'_{0}=-\mathrm{aff}F_{\infty}, \ \mathrm{aff}R_{\infty}=-F'_{0}.
\end{split}
\end{equation}

Starting from a function $F\in \Gamma_0(\mathbb{R}_{+})$, a finite set $\Omega=\{x_1,..,x_m\}$, and two probability densities 
\begin{equation}
\mu_1=\sum_{i=1}^mr_i\delta_{x_i}, \ \ \mu_2=\sum_{i=1}^mt_i\delta_{x_i},\end{equation}
the $F$-divergence of $\mu_1$ from $\mu_2$ is given by
\begin{equation}
D_F(\mu_1||\mu_2):=\sum_{i=1}^m\hat{F}(r_i,t_i)=\sum_{i=1}^m\hat{R}(t_i,r_i).
\end{equation}

The Legendre conjugates of $F$ and $R$ are related by 
\begin{equation}
\psi\leq -F^{*}(\phi)\iff \phi\leq -R^{*}(\psi).
\end{equation}

\section{Power means}

In this section I study the \textit{power means} (also called \textit{generalized means}), a family of functions that includes the well-known arithmetic, geometric and harmonic means. The property of these functions will be useful later on.

In what follows $r,t$ will denote two non-negative real numbers and $p$ a real parameter, which I suppose for the present not to be $0$. 
The $p$-power mean between $r$ and $t$ is given by
\begin{equation}
\mathfrak{M}_p(r,t):=(\frac{r^p+t^p}{2})^{\frac{1}{p}},
\end{equation}
except when $p<0$ and $r$ or $t$ is zero. In this case $\media_p$ is equal to zero:
\begin{equation}
\mathfrak{M}_p(r,t)=0 \ \ \ \ (p<0, \ r=0 \ \mathrm{or} \ t=0).
\end{equation}

In the case $p=0$ I put
\begin{equation}
\media_0(r,t):=\sqrt{rt}
\end{equation} 
so that $\lim_{p \to 0} \media_p(r,t)=\media_0(r,t).$

It is easy to see that $\media_p(r,r)=r$ for every $p \in \mathbb{R}$ and every $r\geq 0$. The function $\media_p$ is symmetric, i.e. $\media_p(r,t)=\media_p(t,r),$
and positively $1$-homogeneous in the sense that $\media_p(\lambda r,\lambda t)=\lambda \media_p(r,t)$ for every $\lambda\geq 0.$ Moreover, it is not difficult to prove that $M_p(r,s)\leq M_p(r,t)$ for every $p$, $r$ and $s\leq t.$

$\media_1$ is the well-known arithmetic mean, $\media_0$ is the geometric mean and $\media_{-1}$ is called harmonic mean. 

The main theorem (see $\cite{Bullen}$ for a proof) regarding the power means is the following:

\begin{theorem}
\label{proprieta' medie}
If $p_1<p_2$ then 
$$\media_{p_1}(r,t)\leq\media_{p_2}(r,t)$$
with the case of equality given by $r=t$, or $p_2\leq 0$ and $r\wedge t=0$.

\end{theorem}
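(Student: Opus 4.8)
The plan is to reduce the inequality, in the nondegenerate regime $r,t>0$, to the convexity of a single power function via Jensen's inequality, then to propagate the result across the different sign configurations of $p_1,p_2$ by a reciprocal duality, and finally to treat the value $p=0$ and the degenerate cases $r\wedge t=0$ separately using the stated conventions and continuity. First I would assume $r,t>0$ and handle the case $0<p_1<p_2$. Setting $q:=p_2/p_1>1$, $a:=r^{p_1}$ and $b:=t^{p_1}$, the desired bound $\media_{p_1}(r,t)\le\media_{p_2}(r,t)$ is equivalent, after raising both sides to the power $p_2>0$, to
\[
\left(\frac{a+b}{2}\right)^{q}\le\frac{a^{q}+b^{q}}{2},
\]
which is exactly Jensen's inequality for the convex function $x\mapsto x^{q}$ on $[0,\infty)$. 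Since $q>1$ this function is strictly convex, so equality holds if and only if $a=b$, i.e. if and only if $r=t$.

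Next I would dispose of negative exponents by duality. For $r,t>0$ one checks directly from the definition that $\media_{-p}(r,t)=1/\media_{p}(1/r,1/t)$. Hence if $p_1<p_2<0$, applying the already-proven case to $0<-p_2<-p_1$ with arguments $1/r,1/t$ and taking reciprocals reverses the inequality and yields $\media_{p_1}(r,t)\le\media_{p_2}(r,t)$, once more with equality precisely when $r=t$. This establishes monotonicity of $p\mapsto\media_{p}(r,t)$ on each of the open half-lines $(-\infty,0)$ and $(0,\infty)$.

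To connect the two half-lines and to insert the value $p=0$, I would invoke the continuity $\lim_{p\to0}\media_{p}(r,t)=\media_{0}(r,t)$ already recorded in this section: monotonicity on each open half-line together with continuity at $0$ gives monotonicity on all of $\mathbb{R}$, and the strict inequality for $r\ne t$ persists, which also covers the mixed case $p_1<0<p_2$. Finally I would treat the degenerate cases $r\wedge t=0$ through the conventions for $p<0$. If, say, $t=0$ and $p_2\le0$, then $\media_{p_2}(r,0)=0$ and, since $p_1<p_2\le0$, also $\media_{p_1}(r,0)=0$, so equality holds; this is exactly the second equality alternative in the statement. If instead $p_2>0$, then $\media_{p_2}(r,0)=r\,2^{-1/p_2}>0$ while every smaller mean is either $0$ (for $p_1\le0$) or equal to $r\,2^{-1/p_1}<r\,2^{-1/p_2}$ (for $0<p_1<p_2$), giving a strict inequality.

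The inequality itself is a one-line application of convexity, so the genuine work---and the part I expect to be most delicate---is the uniform bookkeeping of the equality cases across the sign configurations and, above all, the degenerate situations where $r$ or $t$ vanishes and the special conventions for $p<0$ take over. These are precisely what produce the second alternative ``$p_2\le0$ and $r\wedge t=0$'' in the equality characterization, and they must be reconciled with the strict-convexity argument, which only sees the regime $r,t>0$.
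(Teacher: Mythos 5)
Your proof is correct, but there is nothing in the paper to compare it against: the paper does not prove Theorem~\ref{proprieta' medie} at all, stating it with a pointer to the literature (``see \cite{Bullen} for a proof''). Your argument is the standard self-contained one and it goes through: Jensen's inequality for the strictly convex map $x\mapsto x^{q}$ with $q=p_2/p_1>1$ settles the regime $0<p_1<p_2$ together with its equality case; the reciprocal duality $\media_{-p}(r,t)=1/\media_{p}(1/r,1/t)$ (valid for $r,t>0$, which is the only place you use it) correctly transfers the result to $p_1<p_2<0$ with the inequality reversing under reciprocals; and the continuity $\lim_{p\to 0}\media_p=\media_0$, already recorded in this section of the paper, bridges the two half-lines and inserts $p=0$. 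What the citation to \cite{Bullen} buys the paper is brevity; what your version buys is that the whole paper becomes self-contained at the cost of half a page. Two small bookkeeping points you should make explicit. First, in the degenerate case $t=0$, $p_2>0$ you write $\media_{p_2}(r,0)=r\,2^{-1/p_2}>0$, which tacitly assumes $r>0$; if $r=0$ as well, the pair falls under the equality alternative $r=t$, and this subcase should be named so the trichotomy is complete. Second, continuity alone only preserves weak inequalities, so the persistence of \emph{strict} inequality across $p=0$ when $r\neq t$ (and $r,t>0$) requires interposing an exponent, e.g.\ $p\in(p_1,0)$ with $\media_{p_1}<\media_{p}\leq\media_0$ or $p\in(0,p_2)$ with $\media_0\leq\media_{p}<\media_{p_2}$; your half-line strictness supplies exactly this, but the choice of the intermediate exponent deserves a sentence rather than the phrase ``persists.''
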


In particular,
\begin{equation}
r \wedge t=\lim_{p \to -\infty}\media_p(r,t)\leq \media_{p}(r,t)\leq \lim_{p \to +\infty}\media_p(r,t)=r \vee t,
\end{equation}
for any $p\in \mathbb{R}$, $r,t \in [0,\infty).$

\section{Costless marginal perspective}
\label{sezione senza costo}

Let $F\in \Gamma_0(\mathbb{R}_{+})$ be an admissible entropy function and let $R$ be its reverse entropy. In general, for the induced perspective function one has $\hat{F}\neq \hat{R}$, so that the $F$-divergence does not satisfy the symmetric property. In order to replace $F$ with a new "symmetric entropy", a natural procedure is the following: define the marginal perspective function $H_F:[0,+\infty)\times [0,+\infty)\rightarrow [0,+\infty]$ as the lower semicontinuous envelope of the function

\begin{equation}\label{minimizzazione senza costo}
\tilde{H}_F(r_1,r_2):=\inf_{\theta>0}\Big(R\big(\frac{r_1}{\theta}\big)+R\big(\frac{r_2}{\theta}\big)\Big)\theta.\\
\end{equation}

An equivalent definition can be given in term of the induced perspective functions $\hat{F}$ or $\hat{R}$ by:
\begin{align}\label{minimizzazione sensza costo, versione perspective}
\tilde{H}_F(r_1,r_2):&=\inf_{\theta>0}\hat{F}(\theta,r_1)+\hat{F}(\theta,r_2)\\
&=\inf_{\theta>0}\hat{R}(r_1,\theta)+\hat{R}(r_2,\theta)
\end{align}

The infimum in the definition is a minimum and it occurs in the interval $[r_1,r_2]$ (without loss of generality I am assuming $r_1\leq r_2$): to see this it is enough to notice that the function $\theta \mapsto \hat{F}(\theta,r_1)+\hat{F}(\theta,r_2)$ is lower semicontinuous and it is decreasing in $[0,r_1]$ and increasing in $[r_2,+\infty)$.
I will prove in section \ref{sezione H con costo} (in a more general context), that the function $H_F$ is non-negative, symmetric, jointly convex and positively $1$-homogeneous. Moreover, when the function $F$ has a strict minimum at $1$, $H_F(r,t)=0$ if and only if $r=t$. 
It is important to notice, since $H_F$ is $1$-homogeneous, that the study of the function $H_F$ is equivalent to the study of the $1$-variable function $s\mapsto H_F(1,s)\in \Gamma_0(\mathbb{R}_{+}).$ I will continuously use this fact in the paper. 

\subsection{Examples}
I consider now different examples of admissible entropy function $F$ and I compute the expression of the induced marginal perspective $H_F$. I will in general suppose $rt>0$, so that I can avoid ambiguous expressions at the boundary of the domain that should be treated carefully. 

\begin{example}(Indicator functions)
The \textit{indicator function} of the closed interval with endpoints $a$ and $b$, $0\leq a\leq 1\leq b\leq +\infty$, is defined by

\begin{equation}
I_{[a,b]}(s)=\begin{cases}0&\mbox{if} \ \ s\in [a,b],\\ +\infty &\mbox{if} \ \ s\not\in [a,b].\end{cases}
\end{equation}

When $F=I_{[a,b]}$ one obtains
\begin{equation}
\tilde{H}_{I_{[a,b]}}(r,t)=\begin{cases}0 &\mbox{if} \ \frac{a}{b}\leq \frac{r}{t}\leq \frac{b}{a},\\
+\infty &\mbox{otherwise}
,\end{cases}
\end{equation}
where $\frac{b}{a}=+\infty$ if $a=0$ and $\frac{a}{b}=0$ if $b=+\infty$.
\end{example}

\begin{example}($\chi^{\alpha}$ divergences)
Given a parameter $\alpha\geq 1$, the $\chi^{\alpha}$ \textit{divergence} is defined as
\begin{equation}
\chi^{\alpha}(s)=|s-1|^{\alpha}.
\end{equation}
$\chi^1=|s-1|$ is the famous total variation entropy.

The entropy function $F=\chi^{\alpha}$ gives raise to the marginal perspective function
\begin{equation}
\tilde{H}_{\chi^{\alpha}}(r,t)=\frac{|r-t|^{\alpha}}{(r+t)^{\alpha-1}}.
\end{equation}
We can recognize the expression of the so-called \textit{Puri-Vincze divergence}.
\end{example}

\begin{example}(Matusita divergences)
For $0<a\leq 1$ the Matusita divergence is given by $M_a(s)=|s^a-1|^{\frac{1}{a}}$.
Clearly $\chi^1=M_1.$

When $F=M_a$ it is easy to see that
\begin{equation}
H_{M_a}(r,t)=2^{1-\frac{1}{a}}|r^a-t^a|^{\frac{1}{a}}.
\end{equation}
It is interesting to note that except for the constant factor $2^{1-\frac{1}{a}}$, the Matusita function $M_a$ remains invariant after the minimizing procedure \eqref{minimizzazione senza costo}. I will come back to this point in section \ref{Sezione caratterizzazione universale}.
\end{example}

\begin{example}(Power like entropies)
Let $p$ be any real number. I call power-like entropy of order $p$ the function $U_p:[0,+\infty)\rightarrow [0,+\infty]$ characterized by 

\begin{equation}
U_p\in \mathcal{C}^{\infty}(0,+\infty), \,  U_p(1)=U'_p(1)=0, \, U_p''(s)=s^{p-2}, \, U_p(0):=\lim_{s\downarrow 0}U_p(s).
\end{equation}
The function $U_p$ can be computed explicitly and one gets: 

\begin{equation}
\begin{cases}U_p(s)=\frac{1}{p(p-1)}(s^p-p(s-1)-1)&\mbox{if} \ \ p\neq 0,1,\\ U_1(s)=s\ln(s)-s+1,\\ U_0(s)=s-1-\ln(s),\end{cases}
\end{equation}
with $\displaystyle U_p(0)=1/p$ for $p>0$ and $U_p(0)=+\infty$ for $p\leq 0$.
This family of functions, also called Dichotomy Class, was introduced by Liese and Vajda \cite{Liese},\cite{Vajda}.

Given $F=U_p$, we obtain the following expression:
\begin{equation}
\begin{cases}\tilde{H}_{U_p}(r,t)=\frac{1}{p}\Big[r+t-2^{\frac{p}{p-1}}(r^{1-p}+t^{1-p})^{\frac{1}{1-p}}\Big] \ \ p\neq 0,1,\\
\tilde{H}_{U_1}(r,t)=r+t-2\sqrt{rt},\\
\tilde{H}_{U_0}(r,t)=r\ln{r}+t\ln{t}-(r+t)\ln{\Big(\frac{r+t}{2}\Big)}
.\end{cases}
\end{equation}

We can recognize some well-known statistical functionals: for example in the logarithmic entropy case $p=1$ it appears the Hellinger distance 
\begin{equation}
H_{U_1}(r,t)=(\sqrt{r}-\sqrt{t})^2.
\end{equation}
I have already notice that the same function is obtained starting from the entropy $U_{\frac{1}{2}}(s)=2(\sqrt{s}-1)^2=2M_{\frac{1}{2}}$.

For $p=0$ we have the Jensen-Shannon divergence, a squared distance between measures derived from the Kullback-Leibler divergence (\cite{Endres}). 

The quadratic entropy $U_2(s)=\frac{1}{2}(s-1)^2$ gives raise to the triangular discrimination 
\begin{equation}
\tilde{H}_{U_2}(r,t)=\frac{1}{2}H_{\chi^2}=\frac{1}{2}\frac{(r-t)^2}{(r+t)}.
\end{equation}
\end{example}

\begin{example}(Power-logarithmic entropies)
Given a real number $p\geq 1$, I call power-logarithmic entropy of order $p$ the function $V_p:[0,+\infty)\rightarrow [0,+\infty]$ 

\begin{equation}
V_p(s):=s^p-p\ln(s)-1, \ \ \ s>0,
\end{equation}
and $V_p(0)=+\infty.$ It is easy to see that $V_p\in \mathcal{C}^{\infty}(0,+\infty)$ and $V_p(0)=\lim_{s\downarrow 0}V_p(s)$. 

Starting from the power-logarithmic entropy of order $p$ one gets:
\begin{equation}
\tilde{H}_{V_p}(r,t)=(r+t)\ln\Big[\frac{rt(r^{p-1}+t^{p-1})}{r+t}\Big]-p\big(r\ln(t)+t\ln(r)\big).
\end{equation}
As expected, $H_{V_1}=H_{U_0}$ since $V_1=U_0$.
When $p=2$, one obtains the symmetric Kullback-Leibler divergence \cite{Kullback}:
\begin{equation}
\tilde{H}_{V_2}(r,t)=(r-t)\ln\Big(\frac{r}{t}\Big).
\end{equation}

\end{example}

\begin{example}(Double power entropies)
Given two parameters $p,q$ such that $p\geq 1, \ 0<q \leq 1$ and $p\neq q$, or $p<0, \ q\geq 1$, the double power entropy of order $p,q$ is given by
\begin{equation}
W_{p,q}(s):=qs^p-ps^q+p-q, \ \ \ s>0.
\end{equation}
$W_{p,q}$ is a strictly convex function, $W_{p,q}\in \mathcal{C}^{\infty}(0,+\infty)$, and it is extendex in $0$ by continuity so that $W_{p,q}(0)=p-q$ when $p,q$ are positive, $W_{p,q}(0)=+\infty$ when $p<0.$

A direct computation shows that:
\begin{equation}
\tilde{H}_{W_{p,q}}(r,t)=(q-p)rt\bigg[\frac{(r^{q-1}+t^{q-1})^p}{(r^{p-1}+t^{p-1})^q}\bigg]^{\frac{1}{p-q}}-(q-p)(r+t).
\end{equation}
For example, when $p=3/2, q=1/2$ one gets
\begin{equation}
H_{W_{\frac{3}{2},\frac{1}{2}}}(r,t)=r+t-(rt)^{\frac{1}{4}}(\sqrt{r}+\sqrt{t}).
\end{equation}
\end{example}

\begin{figure}
\centering
\includegraphics[scale=0.34]{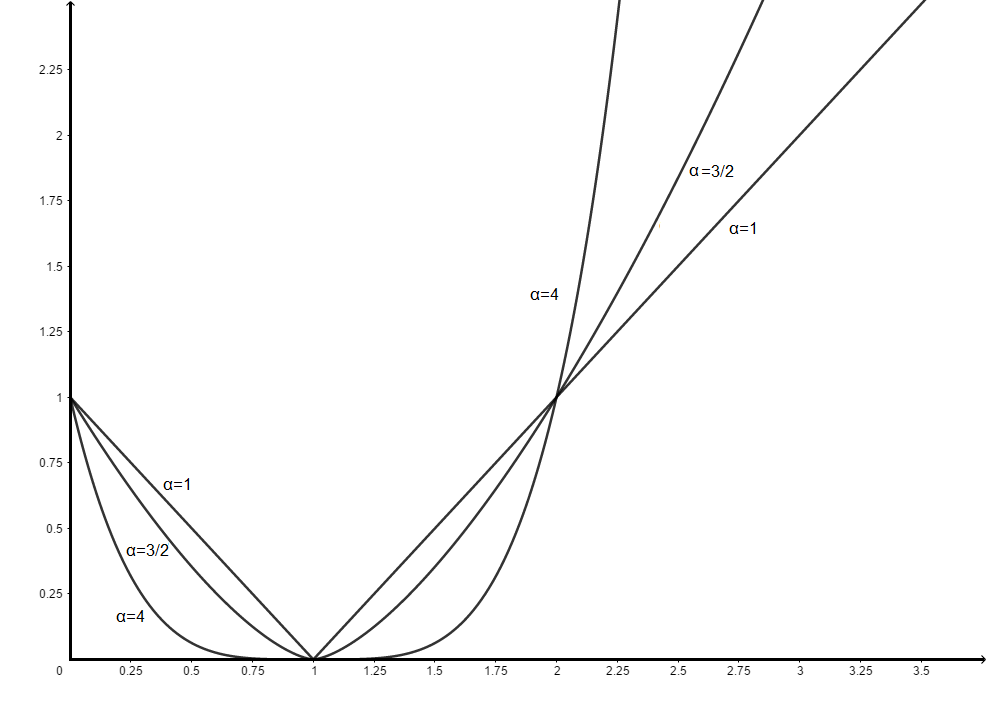}
\caption{$\chi^{\alpha}$ divergences}
\end{figure}

\begin{figure}
\centering
\includegraphics[scale=0.33]{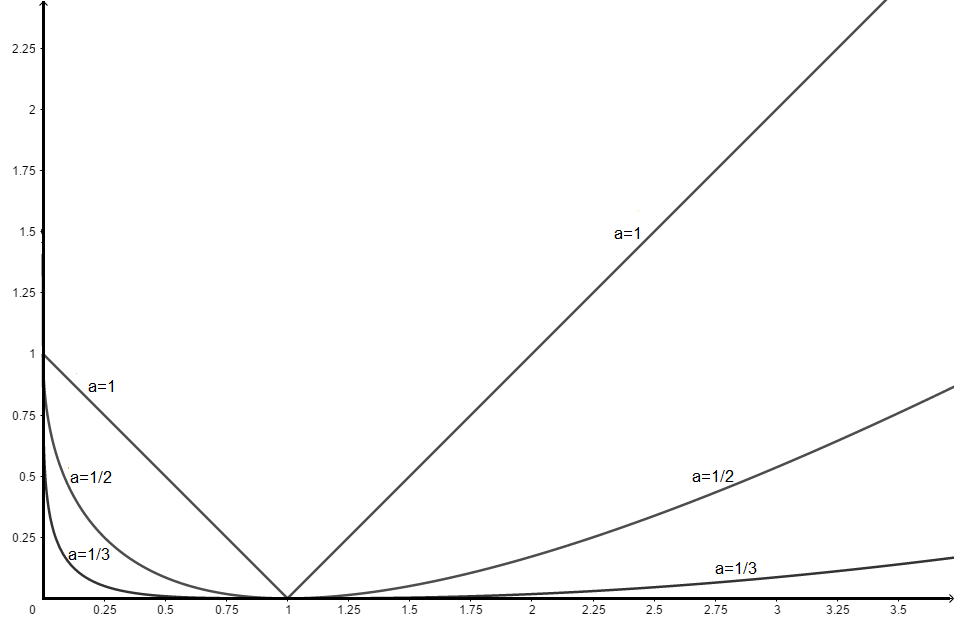}
\caption{Matusita divergences}
\end{figure}

\begin{figure}
\centering
\includegraphics[scale=0.35]{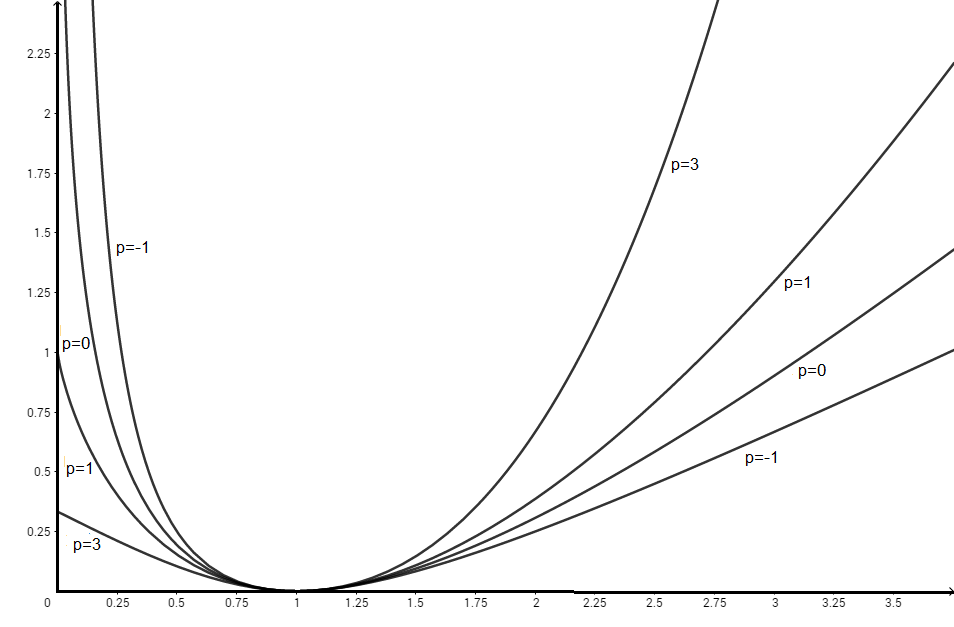}
\caption{Power-like entropies}
\end{figure}

\begin{figure}
\centering
\includegraphics[scale=0.34]{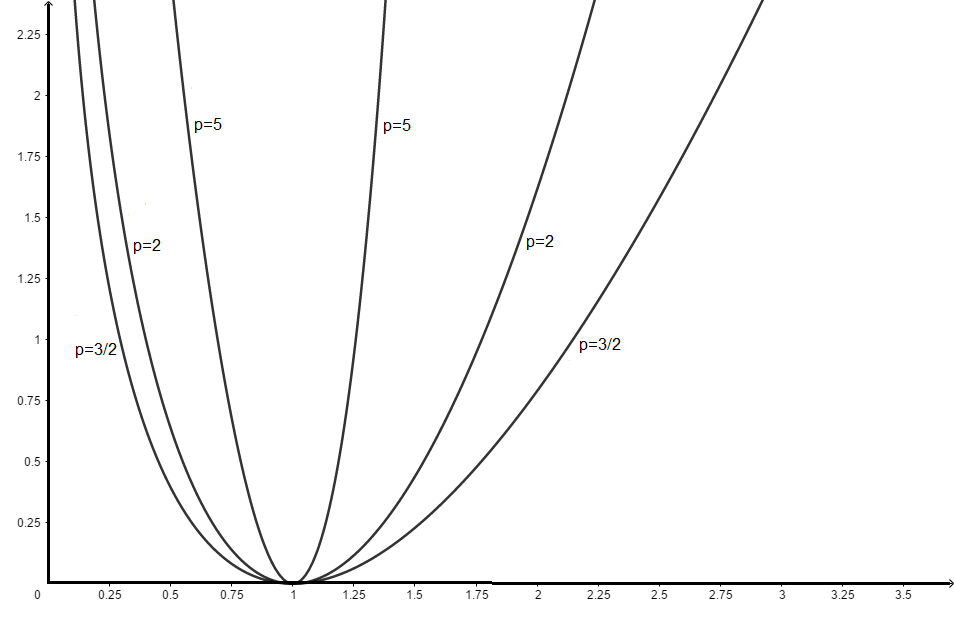}
\caption{Power logarithmic entropies}
\end{figure}

\begin{figure}
\centering
\includegraphics[scale=0.34]{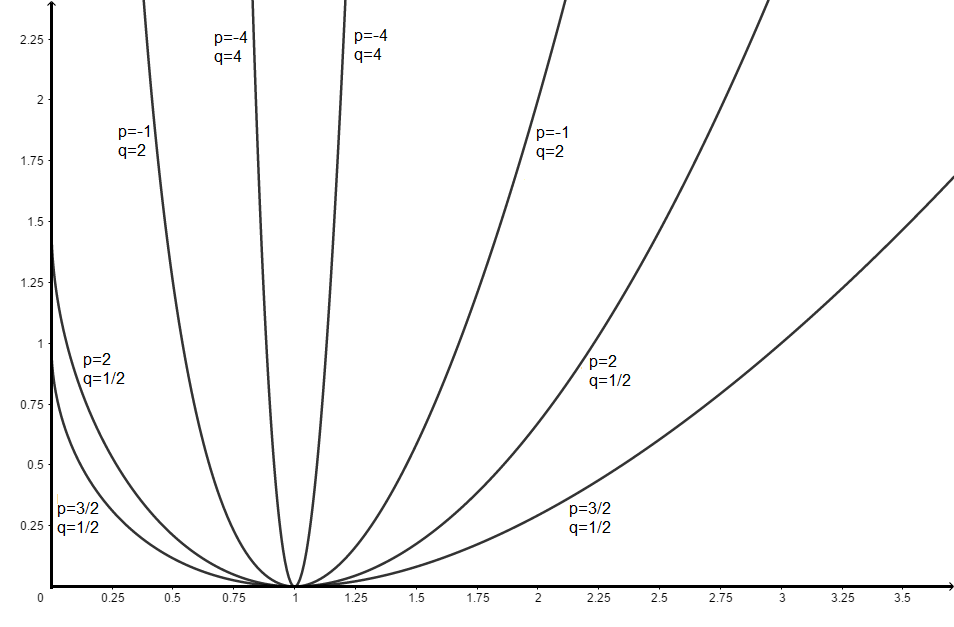}
\caption{Double-power entropies}
\end{figure}

\subsection{Divergences and triangle inequality}
As we have previously seen, starting from a function $F\in \Gamma_0(\mathbb{R}_{+})$ such that $F(s)=0$ if and only if $s=1$, the marginal perspective function $H$ is non-negative, symmetric and $H(r,t)=0$ if and only if $r=t$ (if no confusion is possible, from now on I will denote by $H$ the function $H_F$). In this section I begin the discussion regarding another property that $H$ has to fulfill in order to be a metric on $[0,\infty)$: the triangle inequality. 

When I write "$d$ is a metric on a space $X$" I mean that $d:X\times X\rightarrow [0,+\infty)$ is a function such that $d(x,y)=0$ if and only if $x=y$, it is symmetric, i.e. $d(x,y)=d(y,x)$ for every $x,y\in X$, and it satisfies the triangle inequality in the sense that $d(x,z)\leq d(x,y)+d(y,z)$ for every $x,y,z\in X$.

Since I will prove that the only divergence that is also a distance is the total variation, I will also discuss when the power $H^a$, $a\in (0,1)$, is a metric on $[0,+\infty)$. 

The convexity of the function $H$ implies that  
\begin{equation}
\label{monotonia funzione H}
H(r,t)\geq H(s,t) \ \mbox{and} \ H(r,s)\leq H(r,t) \ \mbox{for every} \ 0\leq r\leq s\leq t.
\end{equation}  

I recall this simple Lemma:
\begin{lemma}
\label{cambio metrica}
Let $(X,d)$ be a metric space and $f:[0,+\infty)\rightarrow [0,+\infty)$ be a concave function such that $f(r)=0$ if and only if $r=0$. Then $(X,f(d))$ is a metric space.
\end{lemma}
\begin{proof}
$f(d(x_1,x_2))\geq 0$ and $f(d(x_1,x_2))=0$ if and only if $d(x_1,x_2)=0$ which implies $x_1=x_2$. It is clear that $f(d)$ is symmetric. Since $f$ is concave and $f(r)>0$ for every $r>0$ it follows that $f$ is increasing and subadditive, thus
\begin{align*}
f\big(d(x_1,x_3)\big)&\leq f\big(d(x_1,x_2)+d(x_2,x_3)\big)\leq f\big(d(x_1,x_2)\big)+f\big(d(x_2,x_3)\big).
\end{align*}

\end{proof}

An easy consequence of the Lemma is that if $H^a$ is a metric, then $H^b$ is a metric for every $b\in(0,a]$.

Using the symmetry, the $1$-homogeneity of the function $H$ together with the property \eqref{monotonia funzione H}, it follows that the triangle inequality for the function $H^a$ is equivalent to the following inequality
\begin{equation}
\label{triangolare a una variabile}
H^a(u,1)\leq H^a(u,v)+H^a(v,1)=v^aH^a\Big(\frac{u}{v},1\Big)+H^a(v,1), \ \ \mathrm{for \ any} \ 0\leq u<v<1.
\end{equation}

A last useful remark is that 
\begin{equation}
\lim_{u\downarrow 0}H(u,1)<+\infty
\end{equation}
is a necessary condition for the existence of a power $a$ such that $H^a$ is a metric. 

Regarding the examples previously seen, it was proved by Kafka, Osterreicher and Vincze $\cite{Kafka}$ that $H^a_{\chi^{\alpha}}$ is a metric when $a=1/\alpha.$

The Matusita divergences clearly provide the distance $H^a_{M_a}$. 

When $p>1$, $\lim_{u\downarrow 0}H_{V_p}(u,1)=+\infty$ so that, except for the case $p=1$, the power-logarithmic entropy is not a metric for every power $a$.

I now turn the attention to the function $H_p:=H_{U_p}$. It has the following expression

\begin{equation}
\begin{cases}H_p(r,t)=\frac{2}{p}\Big[\mathfrak{M}_1(r,t)-\mathfrak{M}_{1-p}(r,t)\Big],&\mbox{if} \ p\neq 0,\\
H_0(r,t)=r\ln{r}+t\ln{t}-(r+t)\ln{\Big(\frac{r+t}{2}\Big)}
,\end{cases}
\end{equation} 
that is also valid when $rt=0$ with the convention $0\ln(0)=0$.
 
As I have already notice, $H_p$ is the square of a metric on $[0,+\infty)$ for $p=0,p=\frac{1}{2},p=1$. I investigate now the same question for every real number $p.$ This was already done by Osterreicher in the case $p\geq 1$ $\cite{Osterreicher}$. Following the same approach I prove:

\begin{theorem}
\label{triangolare costless power mean}
The induced marginal perspective function $H_p$ is the square of a metric on $[0,+\infty)$ for any $p\in (-\infty,\frac{1}{2}]\cup[1,+\infty)$. $\sqrt{H_p}$ does not satisfy the triangle inequality if $p\in (\frac{1}{2},1).$
\end{theorem}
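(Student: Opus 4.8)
By the general results of Section~\ref{sezione H con costo} (specialized to $c=0$), the function $H_p$ is symmetric, non-negative, positively $1$-homogeneous and vanishes exactly on the diagonal, the last property because $U_p$ attains a strict minimum at $s=1$. Thus the only point left to settle is the triangle inequality for $\sqrt{H_p}$, and by \eqref{triangolare a una variabile} with $a=\tfrac12$ it suffices to prove
\[
\sqrt{H_p(u,1)}\le \sqrt{H_p(u,v)}+\sqrt{H_p(v,1)},\qquad 0\le u<v<1 .
\]
Introducing the profile $\psi(s):=\sqrt{H_p(s,1)}=\sqrt{H_p(1,s)}$ and using $1$-homogeneity to normalize the intermediate point to $1$, this is equivalent to the single inequality $\sqrt{H_p(a,b)}\le\psi(a)+\psi(b)$ for all $a,b\ge 0$. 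It is tempting to deduce this from concavity of $v\mapsto\sqrt{H_p(u,v)}+\sqrt{H_p(v,1)}$, but that fails: $\psi$ is convex on $(0,1)$ and concave on $(1,\infty)$ (already for $p=2$), so the sum need not be concave and a sharper argument is required.

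For the positive range my plan is to fix $u\in(0,1)$ (the case $u=0$ following by continuity) and to study $\Delta(v):=\sqrt{H_p(u,v)}+\sqrt{H_p(v,1)}-\sqrt{H_p(u,1)}$ on $[u,1]$. Since $H_p$ vanishes on the diagonal, $\Delta(u)=\Delta(1)=0$, so it is enough to show $\Delta\ge 0$ in between. I would establish this by proving that $\Delta$ is unimodal, i.e.\ that $\Delta'(v)=\tfrac{1}{\sqrt u}\psi'(v/u)+\psi'(v)$ — a sum of a positive term (as $v/u>1$) and a negative one (as $v<1$) — changes sign exactly once, from $+$ to $-$; together with the vanishing at the endpoints this forces $\Delta\ge 0$. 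Equivalently, squaring twice turns the inequality into an algebraic relation between the power means $\mathfrak{M}_1$ and $\mathfrak{M}_{1-p}$ at the arguments $u,v,1$, which by homogeneity can be recast as a one–variable inequality in the ratio $u/v$ and attacked with the monotonicity and comparison properties of the power means (Theorem~\ref{proprieta' medie}).

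The heart of the matter, and the step I expect to be the main obstacle, is exactly this one–variable sign analysis, since it is here that the threshold $p=\tfrac12$ is produced. I would split it into the regime $p\ge 1$ (where $1-p\le 0$ and $\mathfrak{M}_{1-p}$ is a harmonic–type mean degenerating towards $\min$, the case treated by Österreicher) and the regime $p\le\tfrac12$ (where $1-p\ge\tfrac12$), reducing each to the sign of an explicit elementary function of a single variable and checking that it keeps the required sign throughout. The boundary and limit cases $p=1$ (Hellinger, $\sqrt{H_1}=|\sqrt r-\sqrt t|$) and $p=0$ (Jensen--Shannon) are already known to be squared metrics and are also recovered by continuity from $p\neq 0,1$.

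Finally, for $p\in(\tfrac12,1)$ the same one–variable analysis produces a sign reversal, yielding an explicit pair $(u,v)$ that violates the inequality above and hence the triangle inequality for $\sqrt{H_p}$. The underlying reason is visible near the diagonal: since $H_p(r,t)=\dfrac{(r-t)^2}{2(r+t)}+O\big((r-t)^4\big)$, for a small ordered triple $1+\varepsilon a<1+\varepsilon m<1+\varepsilon b$ the triangle inequality for $\sqrt{H_p}$ holds with equality at orders $\varepsilon$ and $\varepsilon^2$, and the first genuine discrepancy appears at order $\varepsilon^3$ with a coefficient $\gamma(p)$ computable from the diagonal Taylor expansion of $H_p$; one checks that $\gamma(p)$ has the sign forbidding the triangle inequality exactly when $p\in(\tfrac12,1)$. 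This simultaneously exhibits the counterexamples and shows that the range $(-\infty,\tfrac12]\cup[1,\infty)$ is optimal.
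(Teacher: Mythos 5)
The main assertion of the theorem --- that $\sqrt{H_p}$ satisfies the triangle inequality for every $p\in(-\infty,\tfrac12]\cup[1,+\infty)$ --- is not actually proved in your proposal; it is only announced as a plan. Your reduction to \eqref{triangolare a una variabile} with $a=\tfrac12$ is the same starting point as the paper's, but the proposed mechanism (show that $\Delta(v)=\sqrt{H_p(u,v)}+\sqrt{H_p(v,1)}-\sqrt{H_p(u,1)}$ is unimodal because $\Delta'(v)=\tfrac{1}{\sqrt u}\psi'(v/u)+\psi'(v)$ changes sign exactly once) is never carried out, and you yourself flag this sign analysis as ``the main obstacle''. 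That analysis \emph{is} the theorem: proving a single sign change of $\Delta'$ for every fixed $u\in(0,1)$ is a genuinely two-parameter problem (essentially monotonicity in $v$ of the ratio $-\psi'(v)/\psi'(v/u)$, uniformly in $u$), and nothing in your sketch indicates how the threshold $p=\tfrac12$ would emerge from it. The paper avoids the two-parameter analysis altogether via Lemma \ref{lemmakafka}: the elementary splitting $1-\sqrt u=(1-\sqrt v)+\sqrt v\,\bigl(1-\sqrt{u/v}\bigr)$ shows it suffices that the \emph{one-variable} ratio $h_p(u)=(1-\sqrt u)^2/H_p(u,1)$ be decreasing on $(0,1)$, and this hypothesis is then checked by the explicit derivative cascade $h_p\to\phi_p\to\psi_p\to\psi_p'$, in which $\psi_p'(u)=2^{-\frac{1}{1-p}}p(\tfrac12-p)(u^{1-p}+1)^{\frac{1}{1-p}-3}u^{-p-1}(1-\sqrt u)(1-u^{1-p})$; the factor $p(\tfrac12-p)$ is exactly where the thresholds $p=0,\tfrac12,1$ are produced, and reversing the signs yields the failure on $(\tfrac12,1)$. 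Until you either complete your sign analysis or verify the hypothesis of Lemma \ref{lemmakafka} directly, the positive half of the statement remains unproven.

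By contrast, your counterexample mechanism for $p\in(\tfrac12,1)$ is correct, and it is a genuinely different route from the paper's (which obtains the \emph{reversed} triangle inequality globally from the reversed monotonicity of $h_p$, rather than from a local expansion). I verified it: writing $r=1+x$, $t=1+y$, $m=\tfrac{x+y}{2}$, $\delta=x-y$, one finds $H_p(1+x,1+y)=\tfrac{\delta^2}{4}(1-m+m^2)+\tfrac{(2p+1)(2-p)}{192}\delta^4+\dots$, where all terms up to third order are $p$-independent (consistent with your expansion against $\tfrac{(r-t)^2}{2(r+t)}$, which is legitimate by $1$-homogeneity after normalizing $r+t$); hence for the triple $1-h,\,1,\,1+h$ the defect $\sqrt{H_p(1-h,1)}+\sqrt{H_p(1,1+h)}-\sqrt{H_p(1-h,1+h)}$ vanishes at orders $h$ and $h^2$ and equals $\tfrac{(2p-1)(p-1)}{32}h^3+O(h^4)$, negative exactly for $p\in(\tfrac12,1)$, so your $\gamma(p)$ exists and has the claimed sign (it also degenerates at $p=\tfrac12,1$, as it must). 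This local computation is a nice self-contained replacement for the paper's argument on the negative half, but note it only shows \emph{local} positivity of the defect for $p\le\tfrac12$ and $p\ge1$ and therefore cannot substitute for the missing global triangle inequality, which is what the paper's Lemma \ref{lemmakafka} computation actually delivers.
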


For the proof of the Theorem I will use the following lemma. It is the first example in the paper of a fact that will be recurrent: the central role of the class of Matusita divergences in the study of the metric properties of the marginal perspective function.

\begin{lemma}
\label{lemmakafka}
Given a number $a\in (0,1]$ and an induced marginal perspective function $H$, if
$$h(u):=\frac{(1-u^a)^{\frac{1}{a}}}{H(u,1)}$$ is decreasing in $[0,1),$ then $H^a$ satisfies the triangle inequality.
\end{lemma}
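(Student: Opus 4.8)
The plan is to start from the reduced one-variable criterion \eqref{triangolare a una variabile}, which the preceding discussion already shows is equivalent to the full triangle inequality for $H^a$. Writing $g(u):=H(u,1)$, I must establish
\begin{equation}
g(u)^a\leq v^a g\Big(\frac{u}{v}\Big)^a+g(v)^a,\qquad 0\leq u<v<1.
\end{equation}
The first observation is that the three arguments $u$, $u/v$ and $v$ all lie in $[0,1)$ and, because $v<1$, they obey the orderings $u\leq u/v$ and $u\leq v$. These two orderings are exactly the input needed to exploit the monotonicity hypothesis on $h$.

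The key idea, and the reason the Matusita class surfaces here, is to compare $H$ against the Matusita marginal perspective $k(u):=(1-u^a)^{1/a}$, which up to the constant $2^{1-1/a}$ equals $H_{M_a}(u,1)$. A direct computation shows that $k$ realizes \eqref{triangolare a una variabile} as an \emph{equality}:
\begin{equation}
k(u)^a=1-u^a=v^a\Big(1-\frac{u^a}{v^a}\Big)+(1-v^a)=v^a k\Big(\frac{u}{v}\Big)^a+k(v)^a.
\end{equation}
So the Matusita divergence sits precisely on the boundary of the triangle inequality, and the decreasing ratio $h=k/g$ quantifies how far $H$ departs from this extremal profile.

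Next I would substitute $g(w)=k(w)/h(w)$ into the target inequality, use the identity above to replace $k(u)^a$, and rearrange; the claim then reduces to
\begin{equation}
v^a k\Big(\frac{u}{v}\Big)^a\Big(h(u)^{-a}-h\Big(\frac{u}{v}\Big)^{-a}\Big)+k(v)^a\Big(h(u)^{-a}-h(v)^{-a}\Big)\leq 0.
\end{equation}
Each coefficient $v^a k(u/v)^a$ and $k(v)^a$ is non-negative. Since $h$ is decreasing and $u\leq u/v$, $u\leq v$, one has $h(u)\geq h(u/v)$ and $h(u)\geq h(v)$; as $h>0$ on $[0,1)$ and $t\mapsto t^{-a}$ is decreasing, both bracketed differences are $\leq 0$, so the whole expression is $\leq 0$ and the inequality follows.

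The genuine content is the comparison step rather than any delicate estimate: the obstacle is recognizing that the Matusita profile is the equality case and that monotonicity of $h$ then forces the correct sign. The remaining care is bookkeeping and well-posedness — I need $g(u)=H(u,1)>0$ for $u\neq 1$ (true since $H$ vanishes only on the diagonal) and $g$ finite on $[0,1)$ (guaranteed by the necessary condition $\lim_{u\downarrow 0}H(u,1)<+\infty$), so that $h$ is a positive function and the substitution $g=k/h$ is legitimate. The boundary value $u=0$ is then absorbed by the same chain of inequalities, with $u/v=0$ making the first bracket vanish and the second remain non-positive.
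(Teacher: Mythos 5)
Your proof is correct and is essentially the paper's own argument in rearranged form: the identity $k(u)^a=v^ak(u/v)^a+k(v)^a$ is exactly the paper's splitting $1-u^a=(v^a-u^a)+(1-v^a)$, and your sign analysis of $h(u)^{-a}-h(u/v)^{-a}$ and $h(u)^{-a}-h(v)^{-a}$ is the same use of the monotonicity of $h^a$ that the paper applies term by term. The added remarks on the Matusita equality case and on well-posedness of $h$ are sound but do not change the substance of the argument.
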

\begin{proof}
Due to the monotonicity of the square root function, one has that 
$$h^a(u)=\frac{1-u^a}{H^a(u,1)}$$
is decreasing in $[0,1)$, so that $h^a(u)\geq h^a(v)$ and $h^a(u)\geq h^a(\frac{u}{v})$ if $0\leq u<v<1$. 
It follows that 
\begin{align}
H^a(u,1)&=\frac{1-u^a}{h^a(u)}=\frac{1-v^a}{h^a(u)}+\frac{v^a-u^a}{h^a(u)}\\
&\leq \frac{1-v^a}{h^a(v)}+\frac{v^a\Big(1-\big(\frac{u}{v}\big)^a\Big)}{h^a(\frac{u}{v})}=H^a(u,v)+H^a(v,1).
\end{align}
\end{proof}

\begin{proof}[Proof of Theorem \ref{triangolare costless power mean}]
Using now Lemma $\ref{lemmakafka}$, it remains to show that the function 

$$h_p(u):=\frac{(1-\sqrt{u})^2}{f_p(u)}$$ 
is decreasing in $(0,1)$, where I have used the notation $f_p(u):=H_p(u,1).$
The derivative of the function $h_p$ is the following:

\begin{equation}
h_p'(u)=-\frac{2}{p}\bigg(\frac{1}{\sqrt{u}}-1\bigg)\frac{1}{f^2_p(u)}{\phi_p(u)},
\end{equation}
where I set

\begin{equation}
\phi_p(u)=2^{-1}({u^{\frac{1}{2}}+1})-2^{-\frac{1}{1-p}}(u^{1-p}+1)^{\frac{1}{1-p}-1}(u^{\frac{1}{2}-p}+1).
\end{equation}
Note that $\phi_p(1)=0$ and $\psi_p(u)=\sqrt{u}\phi_p'(u)$ satisfies:

\begin{equation}
\psi_p(u)=\frac{1}{4}-2^{-\frac{1}{1-p}}(u^{1-p}+1)^{\frac{1}{1-p}-2}u^{-p}\Big(\frac{1+u^{1-p}}{2}-p(1-\sqrt{u})\Big).
\end{equation}

The function $\psi_p$ is such that $\psi_p(1)=0$ and 

\begin{equation}
\psi_p'(u)=2^{-\frac{1}{1-p}}p(\frac{1}{2}-p)(u^{1-p}+1)^{\frac{1}{1-p}-3}u^{-p-1}(1-\sqrt{u})(1-u^{1-p}).
\end{equation}

Now let us suppose $p>1$: I have to prove that $\phi_p$ is positive in $(0,1)$. This is implied by $\psi_p(u)<0$ in $(0,1)$ which is true because $\psi'_p(u)$ is positive in $(0,1)$.
Similar considerations can be applied to the case $p<0$ and $p\in (0,\frac{1}{2}).$

For $p\in (\frac{1}{2},1)$ one gets $\psi_p'(u)<0$ in $(0,1)$ so $\psi_p$ is positive in $(0,1)$. This implies that $\phi_p$ is negative and so $h_p$ is increasing in $(0,1)$. As a consequence, an analysis of the proof of Lemma $\ref{lemmakafka}$ shows that the triangle inequality is reversed for these values of $p$.
\end{proof}

\begin{remark}
It was proved by Osterreicher and Vajda (\cite{Osterreicher2}) that, if $p\in (\frac{1}{2},1)$, $H_p^{1-p}$ is a metric.
\end{remark}

\subsection{Marginal perspective function and convergence properties}
\label{Sezione caratterizzazione universale}
We have seen that the construction of the marginal perspective function naturally produces a symmetric divergence. In this section I will show that this is not the only feature of the minimization procedure \eqref{minimizzazione senza costo}: iterating this process I will highlight the important role of the class of Matusita divergences.
 
I define the space $\Gamma_0^s(\mathbb{R}_{+})$ as the set of functions $F\in \Gamma_0(\mathbb{R}_{+})$ such that $F$ is equals to its reverse entropy $R$.

At the beginning of section \ref{sezione senza costo} we have seen how to generate a map $T_1: \Gamma_0(\mathbb{R}_{+})\rightarrow \Gamma_0^s(\mathbb{R}_{+})$: starting from a function $F\in \Gamma_0(\mathbb{R}_{+})$, I define $T_1(F)(s):=H_F(1,s)$, where $H_F$ is the lower semicontinuous envelope of the function $\tilde{H}_F$ obtained by \eqref{minimizzazione senza costo}.
I also denote by $T_a: \Gamma_0(\mathbb{R}_{+})\rightarrow \Gamma_0^s(\mathbb{R}_{+})$ the map given by $T_a(F):=2^{\frac{1}{a}-1}T_1(F)$ for every $a\in (0,1]$.

It is clear that the two trivial entropies
\begin{equation}
F(s)\equiv 0 \ \ \ \mathrm{and}\ \ \ \   F(s)=I_{\{1\}}=\begin{cases}0 \ \  \ \ \ \mathrm{if} \ s=1 \\ +\infty \ \ \mathrm{otherwise},\end{cases}
\end{equation}
are fixed points of the map $T_a$ for any $a\in (0,1]$.

Another important property that follows immediately from the definition is that
\begin{equation}\label{monotonia T_a}
F_1\geq F_2 \implies T_a(F_1)\geq T_a(F_2).
\end{equation}

Due to the difference between the case $a=1$ and the case $0<a<1$, I have divided the analysis of the behaviour of the map $T_a$. Nevertheless, the strategy behind the proofs is in common:  I will show that, under suitable conditions, the sequence $\{T^{(n)}_a(F)\}$ is monotone and the pointwise limit is a fixed point of the map $T_a$. I then prove that $T_a(F)=F$ implies $F(s)=c|s^a-1|^{\frac{1}{a}}, \ c\in [0,+\infty]$ (in the case $c=+\infty$ I mean that $c|s^a-1|^{\frac{1}{a}}=I_{\{1\}}(s)$). 

I start with a simple Lemma that provides a crucial monotonicity property.

\begin{lemma}
\label{triangolare implica monotonia}
For any $a\in (0,1]$, if $H^a$ satisfies the triangle inequality then $T_a(F)\geq F$.
\end{lemma}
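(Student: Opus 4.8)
The plan is to use the positive $1$-homogeneity of $H$ to reduce to a single variable, and then to feed the triangle inequality for $H^a$ into the power-mean comparison $\media_a\le\media_1$ of Theorem \ref{proprieta' medie}, which is exactly what produces the normalising factor. Since $H$ is positively $1$-homogeneous, the claim $T_a(F)\ge F$, i.e. $2^{\frac1a-1}H(1,s)\ge F(s)$ for every $s\ge 0$, is equivalent to the one-variable estimate
\begin{equation}
H(1,s)\ge 2^{1-\frac1a}F(s),\qquad s\ge 0 .
\end{equation}
First I would record the trivial half: taking the competitor $\theta=s$ in \eqref{minimizzazione senza costo} gives $\tilde{H}_F(1,s)\le \hat{F}(s,1)+\hat{F}(s,s)=F(s)$, so $H(1,s)\le F(s)$ for \emph{every} admissible $F$. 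Hence the whole content of the Lemma is the reverse bound with the explicit constant $2^{1-\frac1a}$. This constant is sharp and cannot be improved, since for the Matusita divergence one has $H_{M_a}(1,s)=2^{1-\frac1a}M_a(s)$, so that equality holds throughout and $M_a$ is the fixed point $T_a(M_a)=M_a$.

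For the reverse bound I would work with the minimiser $\theta^{\ast}$ of the (attained) infimum in \eqref{minimizzazione senza costo}, so that $H(1,s)=\hat{F}(\theta^{\ast},1)+\hat{F}(\theta^{\ast},s)$, and combine two ingredients. The first is the hypothesis: the triangle inequality for the metric $H^{a}$, which for an intermediate point $m$ reads $H^{a}(1,s)\le H^{a}(1,m)+H^{a}(m,s)$. The second is the elementary inequality $x+y\ge 2^{1-\frac1a}(x^{a}+y^{a})^{\frac1a}$ for $x,y\ge 0$ and $a\in(0,1]$, which is precisely $\media_1(x,y)\ge\media_a(x,y)$. Using $1$-homogeneity to normalise the two half-steps and writing $F(s)=\hat{F}(s,1)$, the aim is to pick the intermediate point $m$ so that $H(1,m)$ and $H(m,s)$ reassemble $F(s)$ after the power-mean step has generated the factor $2^{1-\frac1a}$; the computation of the constant itself is then routine bookkeeping with the means $\media_a$ and $\media_1$.

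The step I expect to be the genuine obstacle is the coupling between the \emph{one-sided} perspective $F(s)=\hat{F}(s,1)$ and the \emph{symmetric} quantity $H(1,s)$. The naive choice $m=\theta^{\ast}$ is useless: there $\hat{F}(\theta^\ast,1)=H(1,s)-\hat{F}(\theta^\ast,s)$ together with $H\le\hat{F}$ turns the triangle inequality into the mere subadditivity of $t\mapsto t^{a}$, yielding only the tautology $H(1,s)\ge 2^{1-\frac1a}H(1,s)$, so the hypothesis does no work. The metric inequality bites only through a true metric midpoint \emph{distinct} from the barycentre, and this is where the symmetric structure is decisive: in the situation where the Lemma is applied the entropy lies in $\Gamma_0^{s}(\mathbb{R}_+)$, so $\hat{F}$ is symmetric and the two endpoint values $\hat{F}(1,s)$ and $\hat{F}(s,1)$ of the convex profile $\theta\mapsto\hat{F}(\theta,1)+\hat{F}(\theta,s)$ both equal $F(s)$. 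The triangle inequality for $H^{a}$ then bounds the depth of this profile below by exactly the factor $2^{1-\frac1a}$, with equality on the Matusita class (the global tight case) and in the limit $s\to 1$ (the local Hellinger-type tight case). Once this comparison is secured, multiplying through by $2^{\frac1a-1}$ gives $T_a(F)\ge F$.
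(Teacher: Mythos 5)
Your reduction to the one--variable inequality $H(1,s)\ge 2^{1-\frac1a}F(s)$, the trivial upper bound $H(1,s)\le F(s)$ obtained from the competitor $\theta=s$, and the identification of the Matusita class as the equality case are all correct, and your prediction that the constant comes from the power--mean comparison $\mathfrak{M}_a\le\mathfrak{M}_1$ is exactly right. But the proposal stops precisely where the lemma begins: the decisive claim --- that the triangle inequality for $H^a$ ``bounds the depth of this profile below by exactly the factor $2^{1-\frac1a}$'' --- is asserted, never derived. You exhibit no intermediate point $m$ for which the metric inequality produces this bound, and the only concrete candidate you analyse, $m=\theta^{\ast}$, you yourself show yields no information. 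The appeal to symmetry ($F\in\Gamma_0^s(\mathbb{R}_{+})$, an assumption imported from the applications rather than from the statement) gives $\hat{F}(1,s)=\hat{F}(s,1)=F(s)$, but by itself this does not couple the metric $H^a$ to the one--sided value $F(s)$. That missing coupling is the entire content of the lemma, so this is a genuine gap, not a routine bookkeeping step.

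For comparison, the paper does not hunt for a midpoint at all: for \emph{every} $s$ it writes the two--step chain
\begin{equation*}
2^{\frac1a-1}H(1,s)+2^{\frac1a-1}H(s,t)=\tfrac12\big(2H^a(1,s)\big)^{\frac1a}+\tfrac12\big(2H^a(s,t)\big)^{\frac1a}\ \ge\ \big(H^a(1,s)+H^a(s,t)\big)^{\frac1a}\ \ge\ H(1,t),
\end{equation*}
using the convexity of $x\mapsto x^{\frac1a}$ (your power--mean step, which is where $2^{\frac1a-1}$ appears) followed by the triangle inequality, and then concludes by taking the infimum of the left--hand side in $s$ and closing with the identification $H(1,t)=F(t)$. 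Two remarks are in order. First, for the infimum of the left--hand side to be $T_a(F)(t)$ one must read the infimand at the level of the perspective function, i.e.\ as $2^{\frac1a-1}\big(\hat{F}(s,1)+\hat{F}(s,t)\big)$, which dominates the displayed quantity since $\hat{F}\ge H$; this is harmless. Second --- and this is where your ``tautology'' computation shows real insight --- all the content sits in the closing identification: with only the generally valid $H(1,t)\le F(t)$ the chain returns $T_a(F)\ge T_1(F)$, which is empty, exactly as your analysis at $\theta^{\ast}$ indicates. For $F\in\Gamma_0^s(\mathbb{R}_{+})$ the paper's closing step amounts to reading the small side of the triangle inequality as $\hat{F}(1,t)=F(t)$. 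So your diagnosis of where the difficulty lives is accurate, but a proof must either justify that reading or supply an alternative mechanism, and the proposal does neither.
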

\begin{proof}
For any $s,t\in \mathbb{R}_{+}$ the convexity of the function $x\mapsto x^{\frac{1}{a}}$ yields
\begin{align*}
&2^{\frac{1}{a}-1}H(1,s)+2^{\frac{1}{a}-1}H(s,t)=\frac{1}{2}\big(2H^a(1,s)\big)^{\frac{1}{a}}+\frac{1}{2}\big(2H^a(s,t)\big)^{\frac{1}{a}}\geq \big(H^a(1,s)+H^a(s,t)\big)^{\frac{1}{a}}\geq H(1,t)=F(t).
\end{align*}
The result follows by taking the infimum of the left hand side with respect to $s$. 
\end{proof}

\begin{lemma}\label{lemma:monotonia T1}
Given a function $F\in \Gamma_0^s(\mathbb{R}_{+})$ the sequence $\{T^{(n)}_1(F)\}$ is decreasing and it converges pointwise to a fixed point of the map $T_1$. 
\end{lemma}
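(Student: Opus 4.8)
The plan is to reduce the whole statement to the single pointwise inequality $T_1(G)\le G$, valid for every $G\in\Gamma_0^s(\mathbb{R}_{+})$, and then to combine it with the order-preserving property \eqref{monotonia T_a} and a commutation of infima. First I would prove that bound. Choosing the admissible competitor $\theta=1$ in the variational definition
\[
\tilde H_G(1,s)=\inf_{\theta>0}\bigl(\hat G(\theta,1)+\hat G(\theta,s)\bigr)
\]
and using $G(1)=0$ gives $\tilde H_G(1,s)\le \hat G(1,s)=R_G(s)=G(s)$, the last equality being exactly the hypothesis $G\in\Gamma_0^s$. Passing to the lower semicontinuous envelope only decreases the function, so $T_1(G)(s)=H_G(1,s)\le G(s)$. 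Since $T_1$ maps into $\Gamma_0^s(\mathbb{R}_{+})$, every iterate $G_n:=T_1^{(n)}(F)$ again lies in $\Gamma_0^s$, and applying the bound with $G=G_n$ yields $G_{n+1}\le G_n$. Hence $\{G_n\}$ is decreasing.

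A decreasing sequence of non-negative functions converges pointwise, so $F_\infty:=\inf_n G_n$ exists. As an infimum of convex functions vanishing at $1$ it is convex with $F_\infty(1)=0$, and passing to the limit in $G_n(s)=sG_n(1/s)$ shows that $F_\infty$ inherits the reverse-entropy symmetry on $(0,+\infty)$.

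It remains to check that $F_\infty$ is fixed. The inequality $T_1(F_\infty)\le F_\infty$ is immediate from \eqref{monotonia T_a}: from $F_\infty\le G_n$ we get $T_1(F_\infty)\le T_1(G_n)=G_{n+1}$, and taking the infimum over $n$ concludes. For the reverse inequality I would compute $\tilde H_{F_\infty}$ directly. Since $F_\infty$ is a decreasing limit, for every $\theta,s>0$ one has $F_\infty(\theta)+sF_\infty(\theta/s)=\inf_n\bigl(G_n(\theta)+sG_n(\theta/s)\bigr)$, and because infima commute
\[
\tilde H_{F_\infty}(1,s)=\inf_n\tilde H_{G_n}(1,s)\ge \inf_n H_{G_n}(1,s)=\inf_n G_{n+1}(s)=F_\infty(s).
\]

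The delicate step, which I expect to be the main obstacle, is turning $\tilde H_{F_\infty}\ge F_\infty$ into $T_1(F_\infty)=H_{F_\infty}(1,\cdot)\ge F_\infty$, because taking the lower semicontinuous envelope could a priori strictly decrease $\tilde H_{F_\infty}$. This conversion is harmless as soon as $F_\infty$ is itself lower semicontinuous: then $F_\infty$ is a convex l.s.c.\ minorant of $\tilde H_{F_\infty}(1,\cdot)$ and hence lies below its envelope, giving $F_\infty\le T_1(F_\infty)$ and thus equality. Lower semicontinuity of $F_\infty$ can only fail at $s=0$, since on $(0,+\infty)$ the convex function $F_\infty$ is continuous, and there the recession constant of the limit need not equal the limit of the recession constants. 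To deal with this I would use that for $s>0$ the minimizer defining $\tilde H_{F_\infty}(1,s)$ lies in $[\,s\wedge1,\,s\vee1\,]$, so that $\tilde H_{F_\infty}(1,s)$, and therefore $T_1(F_\infty)$, is insensitive to the value $F_\infty(0)$. Replacing $F_\infty$ by its l.s.c.\ envelope $\bar F_\infty$ (which differs from $F_\infty$ at most at $0$) then leaves $T_1$ unchanged, and the argument above applied to the l.s.c.\ function $\bar F_\infty$ yields $T_1(\bar F_\infty)=\bar F_\infty$. This exhibits the pointwise limit, up to its value at the single point $0$, as a genuine fixed point of $T_1$.
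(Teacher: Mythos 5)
Your proof is correct and follows essentially the same route as the paper: test $\theta=1$ (the paper tests $s=1$ or $s=t$) to obtain $T_1(G)\le G$ on $\Gamma_0^s(\mathbb{R}_{+})$, pass to the monotone pointwise limit, and get the fixed-point property by commuting the infimum over $\theta$ with the decreasing limit in $n$ --- which is exactly the content of the paper's inequality $H^{(n)}(1,s)+H^{(n)}(s,t)\ge H^{(n+1)}(1,t)$ followed by ``take the limit in $n$, then minimize in $s$''. Your closing discussion of the lower semicontinuous envelope is a genuine refinement of a point the paper dismisses with ``it is clear that $F^{\infty}\in \Gamma_0^s(\mathbb{R}_{+})$''; the only small repair needed is that continuity of $F_{\infty}$ on $(0,+\infty)$ requires finiteness there, which holds because $\mathrm{D}(T_1^{(n)}(F))$ grows multiplicatively (if $s_0\in \mathrm{D}(F)$, taking $\theta$ with $\theta,s/\theta\in \mathrm{D}(F)$ shows $s_0^{2^n}\in \mathrm{D}(T_1^{(n)}(F))$) except in the trivial case $F=I_{\{1\}}$, which is already a fixed point.
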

\begin{proof}
Since the map $s\mapsto H(1,s)+H(s,t)$ is equals to $H(1,t)$ when $s=1$ or $s=t$, it follows that $T_1(F)(s)\leq F(s)$ for any $s$. Thus, for any $s$ the sequence $T^{(n)}_1(F)(s)$ is a decreasing sequence bounded from below and thus it has a limit that I denote by $F^{\infty}(s)$, and it is clear that $F^{\infty}\in \Gamma_0^s(\mathbb{R}_{+})$. The limit function $F^{\infty}$ is a fixed point of $T_1$: with the same reasoning as at the beginning of the proof, one gets $T_1(F^{\infty})\leq F^{\infty}$; for the reverse inequality I notice that for any $1\leq s\leq t$ and any $n$ it holds
$$H^{(n)}(1,s)+H^{(n)}(s,t)\geq H^{(n+1)}(1,t),$$ where $H^{(n)}$ is the perspective function induced by $T_1^{(n)}(F)$. The result follows taking the limit with respect to $n$ and then minimizing with respect to $s$. 
\end{proof}

\begin{theorem}\label{punto fisso T_1}
The only fixed point of the map $T_1$ are the functions of the form $c|s-1|$ where $c\in [0,+\infty]$. In particular, an induced marginal perspective function $H$ is a metric on $\mathbb{R}_{+}$ if and only if $H=cM_1$, $c\in (0,+\infty)$.
\end{theorem}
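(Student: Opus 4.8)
My plan is to isolate a single rigid functional identity that every fixed point must satisfy, solve it, and then check separately that the listed functions really are fixed points. For the latter, the Matusita computation with $a=1$ gives $H_{M_1}(r,t)=|r-t|$, hence $T_1(|s-1|)(s)=|1-s|=|s-1|$; since $\tilde H_{cF}=c\,\tilde H_F$ and therefore $T_1(cF)=cT_1(F)$, every $c|s-1|$ with $c\in[0,+\infty)$ is a fixed point, while the limiting value $c=+\infty$ corresponds to $I_{\{1\}}$, already noted to be fixed.

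For the converse, suppose $T_1(F)=F$. Because $T_1$ takes values in $\Gamma_0^s(\mathbb{R}_+)$ the fixed point is symmetric, $R=F$, so $\hat F(1,s)=sF(1/s)=F(s)$. Fix $s>1$ and consider on $[1,s]$ the map $\phi(\theta):=\hat F(\theta,1)+\hat F(\theta,s)=F(\theta)+sF(\theta/s)$, which is convex in $\theta$ and satisfies $\phi(1)=\phi(s)=F(s)$. Convexity forces $\phi\le F(s)$ on $[1,s]$, the chord through the two endpoints being the constant $F(s)$; on the other hand the fixed-point identity gives $F(s)=H_F(1,s)\le \tilde H_F(1,s)=\inf_{\theta\in[1,s]}\phi(\theta)$, the infimum being attained in $[1,s]$ as observed right after the definition of $\tilde H_F$, and hence $\phi\ge F(s)$. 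Therefore $\phi\equiv F(s)$, that is
\begin{equation}\label{star identity}
F(\theta)+sF(\theta/s)=F(s),\qquad 1\le\theta\le s,\ s>1.
\end{equation}

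The heart of the matter is to turn \eqref{star identity} into affinity. Differentiating it in $s$ at the (all but countably many) points where the convex $F$ is differentiable, and writing $y=\theta/s\in(0,1)$, gives $F'(s)=F(y)-yF'(y)$, whose right-hand side is independent of $s$; for each such $y$ the relation holds for all $s\ge 1/y$, so $F'$ is a.e.\ constant on $(1,+\infty)$ and by convexity $F(s)=c(s-1)$ there, with $c:=F'_\infty\ge0$. The symmetry $F(y)=yF(1/y)$ then propagates this to $(0,1)$ and yields $F(s)=c|s-1|$ for all $s$. The degenerate endpoints are read off from \eqref{coefficienti R}: $c=0$ gives $F\equiv0$, while $F(0)=F'_\infty=+\infty$ forces $\mathrm D(F)=\{1\}$, i.e.\ $F=I_{\{1\}}$, matching the convention $c=+\infty$. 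I expect this passage — the a.e.\ differentiation and the handling of the two degenerate values of $c$ — to be the main technical obstacle, together with the mild care needed so that the lower-semicontinuous envelope does not spoil the equality $\tilde H_F(1,s)=F(s)$ used to pin $\phi$ from below.

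Finally, for the metric statement I run the same engine directly on $H$. If $H$ is a metric, set $G(s):=H(1,s)\in\Gamma_0^s(\mathbb{R}_+)$ and, for $1\le\theta\le s$, let $\psi(\theta):=H(1,\theta)+H(\theta,s)=G(\theta)+sG(\theta/s)$. The triangle inequality gives $\psi\ge H(1,s)=G(s)$, whereas convexity together with $\psi(1)=\psi(s)=G(s)$ gives $\psi\le G(s)$; hence $\psi\equiv G(s)$, which is exactly \eqref{star identity} for $G$, so $G=c|s-1|$ by the computation above and $H(r,t)=c|r-t|=cM_1$ by $1$-homogeneity, with $c\in(0,+\infty)$ since $c\in\{0,+\infty\}$ does not yield a finite positive-definite metric. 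Conversely $cM_1$ with $c\in(0,+\infty)$ is the scaled total-variation distance $c|r-t|$, which is a metric, closing the equivalence. Alternatively, the implication ``$H$ metric $\Rightarrow T_1(F)=F$'' follows from Lemma \ref{triangolare implica monotonia}, reducing the metric statement to the fixed-point classification already obtained.
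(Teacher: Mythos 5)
Your proof is correct, and while its crux --- the convexity sandwich showing that a fixed point of $T_1$ forces the additive identity $F(\theta)+sF(\theta/s)=F(s)$ for $1\le\theta\le s$ --- coincides exactly with the paper's argument, you diverge from the paper in two genuine ways. First, to solve the functional equation the paper avoids calculus entirely: it specializes the identity to $s=2$ and $s=t/2$ for $t>2$, obtaining $F(2)+2F(t/2)=F(t)$ and $F(t/2)+\tfrac{t}{2}F(2)=F(t)$, and subtracts to get $F(t/2)=F(2)\big(\tfrac{t}{2}-1\big)$ outright. Your a.e.\ differentiation is sound (re-choosing $\theta=ys$ so that $y=\theta/s$ stays fixed while $s$ ranges over $(1/y,+\infty)$ is the right move, and convexity upgrades an a.e.\ constant derivative to affinity), but the paper's algebraic trick buys freedom from all differentiability bookkeeping. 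Second, for the metric statement the paper does not rerun the rigidity argument on $H$: it combines Lemma \ref{triangolare implica monotonia} (triangle inequality implies $T_1(F)\ge F$) with Lemma \ref{lemma:monotonia T1} ($T_1(F)\le F$ on $\Gamma_0^s(\mathbb{R}_{+})$) to conclude that $F=H(1,\cdot)$ is a fixed point and then quotes the classification. Your direct sandwich $\psi\equiv G(s)$, using joint convexity of $H$ and the triangle inequality for the two bounds, is a self-contained alternative that bypasses both lemmas; you also correctly note the lemma route as a fallback, so both mechanisms are available in your write-up.

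One point deserves tightening: your handling of the degenerate case. The bare implication ``$F(0)=F'_{\infty}=+\infty$ forces $\mathrm{D}(F)=\{1\}$'' is false for general symmetric entropies (take $F=U_0+U_1$, which lies in $\Gamma_0^s(\mathbb{R}_{+})$, has $F(0)=F'_{\infty}=+\infty$, yet $\mathrm{D}(F)=(0,+\infty)$), so you must separately exclude a fixed point with $\mathrm{D}(F)\cap(1,+\infty)=(1,b)$, $b<+\infty$, on which your affinity argument only yields $F(s)=c(s-1)$ locally. This closes in one line from the fixed-point inequality $F(s)=H_F(1,s)\le\tilde H_F(1,s)$: for $s\in(b,b^2)$ choose $\theta\in(s/b,b)$, so that $F(\theta)<+\infty$ and $sF(\theta/s)<+\infty$ (the latter by the symmetry $F(y)=yF(1/y)$ with $1/y\in(1,b)$), whence $F(s)<+\infty$, contradicting $s\notin\mathrm{D}(F)$; thus either $\mathrm{D}(F)\supseteq(1,+\infty)$ or $\mathrm{D}(F)=\{1\}$, i.e.\ $F=I_{\{1\}}$. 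Note that the paper's proof is equally silent here --- its subtraction step tacitly presumes $F(2)<+\infty$ --- so this is a shared, easily repaired elision rather than a defect specific to your route.
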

\begin{proof}
It is clear that the function $cM_1$ is a fixed point of $T_1$ for any $c\in[0,+\infty]$. I show now that they are the  only fixed points: since $s\mapsto H(r,s)+H(s,t)$ is a convex function that has the same value when $s=r$ and $s=t$, $T_1(F)=F$ implies that $H(r,s)+H(s,t)=H(r,t)$ for any $r\leq s\leq t$. Using the homogeneous property of the function $H$, this is equivalent to the fact that $F(s)+sF(\frac{t}{s})=F(t)$ for any $1\leq s\leq t$. In particular, taking  any $t>2$, $s=2$ and $s=\frac{t}{2}$, it holds 
\begin{equation}
F(2)+2F\Big(\frac{t}{2}\Big)=F(t), \ \ \ \ \ F\Big(\frac{t}{2}\Big)+\frac{t}{2}F(2)=F(t).
\end{equation}
By taking the difference of the previous equations, one gets $F\big(\frac{t}{2}\big)=F(2)\big(\frac{t}{2}-1\big)$ for any $t>2$ so that, using again the homogeneity and the symmetry, $H(r,t)=F(2)|r-t|$ for any $r,t$.

In order to conclude the proof I notice that if $H$ is a metric then, using Lemma \ref{triangolare implica monotonia}, it follows $T_1(F)\geq F$. Since Lemma \ref{lemma:monotonia T1} provides the converse inequality, $F$ is a fixed point of $T_1$ and the only fixed points that induces a metric on $\mathbb{R}_{+}$ are the functions of the form $cM_1$ with $c\in (0,+\infty).$
\end{proof}

In order to deal with the case $0<a<1$ I need some preliminary results and some additional assumptions.
I start by proving that every metric of the form $H^a$, $a\in(0,1]$, is a complete metric.

\begin{lemma}
\label{H completa}
Let $F\in \Gamma_0^s(\mathbb{R}_{+})$ and let us suppose that $D=H^a$ is a metric for a number $a\in (0,1]$. Then it exists $c>0$ such that $$F(s)>c|s^a-1|^{\frac{1}{a}}$$
and $H^a$ is a complete metric.
\end{lemma}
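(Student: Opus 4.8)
The plan is to prove the two assertions separately; in both the decisive tool is the triangle inequality for $D=H^a$ evaluated on triangles anchored at the vertex $0$. First I would reduce the domination statement to a lower bound on $H(1,\cdot)$: choosing $\theta=s$ in the infimum defining $\tilde H_F$ gives $H(1,s)=T_1(F)(s)\le F(s)$ (the same inequality used in the proof of Lemma~\ref{lemma:monotonia T1}), so it suffices to bound $H(1,\cdot)$ from below. Since $D$ takes finite values and separates points, the constant $L:=H(1,0)$ satisfies $0<L<\infty$. The key step is then to apply the triangle inequality to the triple $0,u,1$ with $0\le u<1$: using the $1$-homogeneity of $H$ in the form $H(0,u)=uH(0,1)=uL$, the inequality $H^a(0,1)\le H^a(0,u)+H^a(u,1)$ reads $L^a\le u^aL^a+H(u,1)^a$, whence $H(u,1)\ge L\,(1-u^a)^{1/a}=L\,M_a(u)$.

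For $s>1$ I would pass to the reciprocal by homogeneity: $H(1,s)=sH(1/s,1)\ge sL(1-s^{-a})^{1/a}=L(s^a-1)^{1/a}=L\,M_a(s)$, applying the bound just obtained at the point $1/s\in(0,1)$. Combining the two ranges with $F\ge H(1,\cdot)$ yields $F(s)\ge L\,M_a(s)$ for every $s$, with $L=H(1,0)>0$. Since $M_a(s)>0$ for $s\neq 1$, any constant $c\in(0,L)$ then gives the strict bound $F(s)>c\,|s^a-1|^{1/a}$ away from the point $s=1$ (at $s=1$ both sides vanish, so strictness can hold only off $1$).

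For completeness, the plan is to show that $D$-Cauchy sequences are bounded in the Euclidean sense and that $D$ is compatible with the usual topology of $[0,\infty)$. A $D$-Cauchy sequence $\{s_n\}$ is $D$-bounded, and since $D(s_n,0)=H(s_n,0)^a=(Ls_n)^a$ with $L>0$, boundedness of $D(s_n,0)$ forces $\{s_n\}$ to be bounded above; by Bolzano--Weierstrass a subsequence $s_{n_k}$ then converges in the usual sense to some $s_*\in[0,\infty)$. It remains to check $D(s_{n_k},s_*)\to 0$: if $s_*>0$ this follows from the continuity of the finite jointly convex function $H$ on the open set $(0,\infty)^2$ together with $H(s_*,s_*)=0$, while if $s_*=0$ it follows directly from $D(s_{n_k},0)=(Ls_{n_k})^a\to 0$. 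A Cauchy sequence possessing a convergent subsequence converges to the same limit, so $\{s_n\}$ is $D$-convergent and $D$ is complete.

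The only conceptually delicate point is recognizing that the entire Matusita lower bound is produced by the single triangle inequality based at $0$; once that triangle is chosen, everything reduces to homogeneity and elementary topology. The place where the hypothesis that $H^a$ is a metric is genuinely used is in asserting $0<L=H(1,0)<\infty$, which simultaneously encodes the necessary finiteness $\lim_{u\downarrow 0}H(u,1)<\infty$ and the strict separation of points; this is also precisely what guarantees that $D$-bounded sets are Euclidean-bounded in the completeness argument.
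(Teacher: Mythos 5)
Your proof is correct, and it takes a genuinely different route from the paper's. For the domination, the paper runs the converse of Lemma~\ref{lemmakafka}: with $g(u)=F^a(u)/(1-u^a)$ it uses the exact decomposition $D(u,1)=\frac{g(u)}{g(v)}D(v,1)+\frac{g(u)}{g(u/v)}D(u,v)$, deduces from the triangle inequality that $g(v^2)\le g(v)$, and rules out $\inf g=0$ by contradiction along the squares $w_{n+1}=w_n^2\to 0$ using continuity of $g$; the constant $c$ obtained this way is non-constructive. You instead extract the explicit constant $L=H(1,0)\in(0,\infty)$ from the single triangle $\{0,u,1\}$ plus $1$-homogeneity (first $H(0,u)=uL$, then $H(1,s)=sH(1/s,1)$ for $s>1$), and pass from $H(1,\cdot)=T_1(F)$ to $F$ via $T_1(F)\le F$. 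This is cleaner, and making the step $F\ge T_1(F)$ explicit also tidies a small imprecision in the paper, where $g$ is written with $F^a(u)$ although the exact decomposition holds verbatim only for $H^a(u,1)$ (the two differ in general, e.g.\ $H_{M_a}=2^{1-\frac{1}{a}}M_a$). For completeness, the paper proves the two-sided comparison $c_1M_a^a\le D\le c_2M_1^a$ (the upper bound from convexity and symmetry of $F$ with $F(0)<\infty$) and invokes completeness and topological equivalence of the two comparison metrics; you avoid the upper comparison altogether, using $D(s,0)=(Ls)^a$ to make $D$-Cauchy sequences Euclidean-bounded, then Bolzano--Weierstrass and continuity of the finite convex function $H$ on $(0,\infty)^2$ (homogeneity at $s_*=0$) to upgrade subsequential to full $D$-convergence --- more elementary and self-contained, at the cost of not exhibiting the bi-Lipschitz-type bound $D\le c_2M_1^a$ that quantifies how $D$ metrizes the usual topology. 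Your caveat that strictness in $F(s)>c|s^a-1|^{\frac{1}{a}}$ can only hold for $s\neq 1$ applies equally to the paper's own formulation.
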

\begin{proof}
For any $0\leq u<v<1$ I rewrite the distance between $u$ and $1$ as  
\begin{equation}
D(u,1)=\frac{g(u)}{g(v)}D(v,1)+\frac{g(u)}{g(\frac{u}{v})}D(u,v)
\end{equation}
where  $\displaystyle g(u):=\frac{F^a(u)}{1-u^a}.$ Since the triangle inequality holds, at least one of the numbers $\frac{g(u)}{g(v)}$ and $\frac{g(u)}{g(\frac{u}{v})}$ is less or equal than $1$. Choosing $u:=v^2$, it follows $g(v^2)\leq g(v)$ for any $v<1$. By contradiction let us suppose it does not exists a positive constant $c$ such that $F(s)>c|s^a-1|^{\frac{1}{a}}$, then it exists a sequence $v_n\in (0,1)$ such that $g(v_n)\rightarrow 0$. So, I can find a $\bar{v}\in (0,1)$ such that $D(0,1)=g(0)>g(\bar{v})$. On the other hand, since the sequence $w_n$ defined by $w_0=\bar{v}$, $w_n=w^2_{n-1}$ converges to $0$, by continuity of the function $g$ we have that $g(w_n)\rightarrow g(0)$ which is a contradiction since $g(0)>g(w_0)$ and $g(w_n)$ is decreasing.

Now it is easy to show that the metric $D$ is complete: since $H^a$ is a metric, $H$ is symmetric and $D(0,1)=F^a(0):=c_2<+\infty$. From the convexity of the function $F$ it follows $F^a(s)\leq c_2|s-1|^a$ so that
$$c_1M^a_a\leq D\leq c_2M^a_1.$$
The result follows using the fact thta $M^a_a, M^a_1$ are two complete metrics that induce the same convergence.
\end{proof}

Recall that, given a metric space $(X,d)$ and the interval $I=[0,1]$, a curve $\gamma:I\rightarrow X$ is a $\textit{constant speed geodesic}$ if 
\begin{equation}
d\big(\gamma(t),\gamma(t')\big)=d(\gamma(0),\gamma(1))|t-t'| \ \ \ \textit{for every} \, \, t,t'\in I.
\end{equation} 
A metric space $(X,d)$ is a geodesic space if for every pair of points $x,y\in X$ it exists a constant speed geodesic between $x$ and $y$.
A well-known fact is that a complete metric space is a geodesic space if and only if for every pair of points $x,y\in X$ it exists $z\in X$ such that $d(x,z)=d(z,y)=\frac{1}{2}d(x,y)$. The point $z$ is called $\textit{mid-point}$ between $x$ and $y$.

I am now ready to prove the analogous of Theorem \ref{punto fisso T_1} in the case $0<a<1$, under an additional assumption. 

\begin{theorem}
\label{limite è punto fisso}
Let $F\in\Gamma_0^s(\mathbb{R}_{+})$ and let us suppose that $H^a$, $a\in(0,1)$, is a distance and $T_a(F)=F$. Then $F(s)=c|s^a-1|^{\frac{1}{a}}$ for a constant $c\in (0,+\infty)$.
\end{theorem}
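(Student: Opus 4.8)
The plan is to show that the fixed-point hypothesis $T_a(F)=F$ forces the complete metric $D:=H^a$ (complete by Lemma \ref{H completa}) to be a geodesic space, and then to exploit the geodesics to establish the additivity relation $D(r,s)+D(s,t)=D(r,t)$ for $0\le r\le s\le t$. Once this additivity is available, the conclusion follows from a functional equation that, after a change of variables, is exactly the one already solved in the proof of Theorem \ref{punto fisso T_1}. The essential difference with the case $a=1$ is that there the convexity of $s\mapsto H(1,s)+H(s,t)$ yields additivity almost for free, whereas for $a\in(0,1)$ I must replace that argument by a genuinely metric one, producing midpoints.

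First I would rewrite the fixed-point condition. Since $T_a(F)=F$ means $F(\cdot)=2^{1/a-1}H(1,\cdot)$ with $H=H_F$, I substitute this identity into the minimization $H(1,t)=\min_{\theta>0}\big(\hat F(\theta,1)+\hat F(\theta,t)\big)$ (the infimum being attained, as remarked after its definition) and use the $1$-homogeneity of $H$ to obtain
\[
H(1,t)=2^{1/a-1}\min_{\theta>0}\big(H(1,\theta)+H(\theta,t)\big).
\]
Let $\theta^{*}$ realize the minimum (so $\theta^{*}>0$), and set $A=D(1,\theta^{*})$, $B=D(\theta^{*},t)$. Writing $H=D^{1/a}$ and using that $x\mapsto x^{1/a}$ is strictly convex (because $1/a>1$) together with the triangle inequality for $D$, I get the chain
\[
D(1,t)^{1/a}=\tfrac12(2A)^{1/a}+\tfrac12(2B)^{1/a}\ge (A+B)^{1/a}\ge D(1,t)^{1/a},
\]
so equality holds everywhere. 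Strict convexity forces $A=B$ and the triangle equality forces $A+B=D(1,t)$; hence $\theta^{*}$ is a $D$-midpoint of $1$ and $t$. By the homogeneity $D(\lambda r,\lambda t)=\lambda^{a}D(r,t)$, midpoints then exist for every pair of positive reals.

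Combining the existence of midpoints with the completeness and the topological equivalence to the Euclidean distance furnished by Lemma \ref{H completa}, and invoking the recalled characterization of geodesic spaces, $D$ admits a constant-speed geodesic $\gamma$ between $1$ and any $t>1$, whose image lies in the completion of the dyadic midpoints, all of which sit in $[1,t]$. The main obstacle is converting this into pointwise additivity, and here I would use the monotonicity \eqref{monotonia funzione H}: if $\gamma$ left the interval $[1,t]$, say $\gamma(\tau_0)=a<1$, then $D(a,t)\ge D(1,t)$ by \eqref{monotonia funzione H}, so $D(1,a)+D(a,t)>D(1,t)$, contradicting the geodesic identity $D(1,\gamma(\tau_0))+D(\gamma(\tau_0),t)=D(1,t)$; excursions above $t$ are excluded symmetrically. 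Thus $\gamma$ stays in $[1,t]$, and being continuous for $D$ (hence for the Euclidean distance), by the intermediate value theorem it meets every $s\in[1,t]$. Evaluating the geodesic identity at such $s$ gives $D(1,s)+D(s,t)=D(1,t)$, and rescaling yields $D(r,s)+D(s,t)=D(r,t)$ for all $0<r\le s\le t$, the case $r=0$ being recovered by a limiting argument.

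Finally I would solve the resulting functional equation. Setting $g(u):=D(1,u)$ and using $D(s,t)=s^{a}g(t/s)$ (from the $1$-homogeneity of $H$), additivity reads $g(s)+s^{a}g(t/s)=g(t)$ for $1\le s\le t$. The substitution $p=s^{a}$, $q=t^{a}$, $G(w):=g(w^{1/a})$ turns this into $G(p)+pG(q/p)=G(q)$ for $1\le p\le q$, which is precisely the identity appearing in the proof of Theorem \ref{punto fisso T_1}; the same elementary manipulation (evaluate at $p=2$ and at $p=q/2$ and subtract) gives $G(w)=K(w-1)$ with $K=G(2)$, hence $g(t)=K(t^{a}-1)$ for $t\ge 1$ and, by the symmetry of $H$, $g(t)=K|t^{a}-1|$ for all $t$. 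Unwinding the definitions, $F(t)=2^{1/a-1}H(1,t)=2^{1/a-1}D(1,t)^{1/a}=c\,|t^{a}-1|^{1/a}$ with $c=2^{1/a-1}K^{1/a}$; since $D$ is a finite, non-degenerate metric one has $K=D(1,2)\in(0,\infty)$, so $c\in(0,+\infty)$, as claimed.
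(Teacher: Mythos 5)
Your proof is correct and shares the paper's overall skeleton: both arguments extract a midpoint from the equality case of the Jensen/triangle-inequality chain (you phrase it via convexity of $x^{1/a}$, the paper via concavity of $x^a$ --- the same estimate), both rely on Lemma \ref{H completa} for completeness, and both finish with the homogeneity-driven functional equation. You deviate in two places, both toward self-containedness. First, where the paper passes from midpoints to the classification of complete one-dimensional geodesic spaces (citing Burago) to obtain the isometry $H^a(r,t)=|\phi(t)-\phi(r)|$, you prove the additivity $D(r,s)+D(s,t)=D(r,t)$ directly: you confine the geodesic to $[1,t]$ via the monotonicity \eqref{monotonia funzione H} and sweep out every intermediate point with the intermediate value theorem, the continuity in the Euclidean topology being justified by the two-sided bound $c_1M^a_a\leq D\leq c_2M^a_1$ from Lemma \ref{H completa}. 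This trades an external classification result for an elementary confinement argument, which you carry out correctly. Second, rather than re-solving the functional equation for $\phi$ by hand as the paper does (evaluating at $t=2r$ and $r=2$), you substitute $p=s^a$, $q=t^a$, $G(w)=g(w^{1/a})$ to reduce it verbatim to the equation already dispatched in Theorem \ref{punto fisso T_1} --- a clean observation that makes the kinship between the two fixed-point theorems explicit. One small slip: with your normalization the constant is $K=G(2)=D(1,2^{1/a})$, not $D(1,2)$; since either quantity is finite and positive because $D$ is a finite non-degenerate metric, this does not affect the conclusion $c\in(0,+\infty)$.
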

\begin{proof}
Since $T_a(F)=F$ one has that for any $r,t$ it exists $s$ such that
\begin{equation}
\label{identità per punto fisso}
2^{\frac{1}{a}-1}H(r,s)+2^{\frac{1}{a}-1}H(s,t)=H(r,t),
\end{equation}
Using the fact that $H^a$ is a metric and the concavity of the function $f(x)=x^a$ one gets
\begin{equation}\label{dis: da dis a ug}
H^a(r,t)\leq H^a(r,s)+H^a(s,t)=\frac{[2^{\frac{1}{a}}H(r,s)]^a+[2^{\frac{1}{a}}H(s,t)]^a}{2}\leq \Big(2^{\frac{1}{a}-1}H(r,s)+2^{\frac{1}{a}-1}H(s,t)\Big)^a.
\end{equation}
Equation \eqref{identità per punto fisso} implies the equality in the inequality \eqref{dis: da dis a ug}, in particular $H^a(r,s)=H^a(s,t).$ 

Since $r,t$ are two arbitrary points and $H^a$ is a complete metric from Lemma \ref{H completa}, it follows that $(\mathbb{R}_{+},H^a)$ is a one dimensional geodesic space, so it must be isometric to $(\mathbb{R}_{+},|\cdot|)$ (for a reference see $\cite{Burago}$, chapter $2$). 
In particular it exists $\phi:\mathbb{R}_{+}\rightarrow \mathbb{R}_{+}$ increasing and continuous such that I can write $H^a(r,t)=|\phi(t)-\phi(r)|$. From the $1$-homogeneity of the function $H$, it follows $H^a(r,t)=r^aH^a(1,\frac{t}{r})$ for $r>0$, so that 
\begin{equation}
\label{identità fondamentale per la phi}
\phi(t)-\phi(r)=r^a\Big(\phi\big(\frac{t}{r}\big)-\phi(1)\Big), \ \ t\geq r.
\end{equation}
Evaluating equation \eqref{identità fondamentale per la phi} for $t=2r$ I get 
\begin{equation}
\label{prima identità phi}
\phi(2r)-\phi(r)=r^a(\phi(2)-\phi(1)), \ \ \ r\geq 1,
\end{equation}
whereas the choice $r=2$ yields
\begin{equation}\label{eq: phi valutata in t}
\phi(t)-\phi(2)=2^a\Big(\phi\big(\frac{t}{2}\big)-\phi(1)\Big), \ \ \ t\geq 2.
\end{equation}
Now consider the previous equation with $t=2r$, it follows
\begin{equation}
\label{seconda identità phi}
\phi(2r)-\phi(r)=\phi(2)+2^a(\phi(r)-\phi(1))-\phi(r), \ \ \ r\geq 1.
\end{equation}
Using now the identities \eqref{prima identità phi} and \eqref{seconda identità phi}, it follows  
$$r^a(\phi(2)-\phi(1))=\phi(2)+2^a(\phi(r)-\phi(1))-\phi(r) \ \ \mathrm{for \ any} \ r\geq 1,$$
and I can compute $\phi(r)$ as
$$\phi(r)=\frac{\phi(2)-\phi(1)}{2^a-1}(r^a-1)+\phi(1),$$
so that $\displaystyle H^a(1,r)=(r^a-1)\frac{H^a(2,1)}{2^a-1}$ for any $r\geq 1$, which prove the theorem.
\end{proof}	
\begin{remark}
I do not know if the assumption that $H^a$ is a metric can be removed in order to obtain the same characterization as in Theorem \ref{punto fisso T_1}. The difficulty is that the value of the function $s\mapsto 2^{\frac{1}{a}-1}H(r,s)+2^{\frac{1}{a}-1}H(s,t)$ at $s=r$ and $s=t$ is strictly greater that $H(r,t)$, unless $a=1$.
\end{remark}

In order to obtain that also in the case $0<a<1$ the limit function is a fixed point of the map $T_a$, I need the following Lemma:

\begin{lemma}\label{lemma:lim-min}
Let $X$ be a compact space and let $f_n:X\rightarrow [0,+\infty]$ be a sequence of lower semicontinuous functions such that $f_n(x)\leq f_{n+1}(x)$ for every $n\in \mathbb{N}$ and every $x\in X$. 
Then 
$$\lim_{n\to \infty}\min_{x\in X}f_n(x)=\min_{x\in X}f_{\infty}(x),$$
where I put $f_{\infty}(x):=\lim_{n\to \infty}f_n(x).$
\end{lemma}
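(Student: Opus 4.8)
The plan is to set $m_n:=\min_{x\in X}f_n(x)$ and $m_\infty:=\min_{x\in X}f_\infty(x)$ and to establish the two inequalities $\lim_n m_n\le m_\infty$ and $\lim_n m_n\ge m_\infty$ separately. First I would check that all these minima are genuinely attained: each $f_n$ is lower semicontinuous on the compact space $X$, hence attains its infimum; moreover $f_\infty=\sup_n f_n$ is a pointwise supremum of lower semicontinuous functions and is therefore itself lower semicontinuous, so it too attains its minimum on $X$. Since $f_n\le f_{n+1}\le f_\infty$ for every $n$, the numbers $m_n$ form a nondecreasing sequence bounded above by $m_\infty$, so the limit $L:=\lim_n m_n$ exists and satisfies $L\le m_\infty$. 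This is the easy half.

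For the reverse inequality $L\ge m_\infty$ I would exploit the compactness of $X$. Let $x_n\in X$ be a minimizer of $f_n$, so that $f_n(x_n)=m_n\le L$. By compactness the family $\{x_n\}$ admits a cluster point $x^{\ast}\in X$ (a convergent subnet, or a convergent subsequence when $X$ is metrizable, which covers the application where $X$ is a compact interval). Now fix an index $k$. For every $n\ge k$ monotonicity gives $f_k(x_n)\le f_n(x_n)=m_n\le L$, and passing to the cluster point the lower semicontinuity of the single function $f_k$ yields $f_k(x^{\ast})\le L$. As this holds for all $k$, letting $k\to\infty$ gives $f_\infty(x^{\ast})=\lim_k f_k(x^{\ast})\le L$, whence $m_\infty\le f_\infty(x^{\ast})\le L$. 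Together with the easy half this forces $L=m_\infty$, which is the assertion.

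The delicate point, and the main obstacle, is precisely this reverse inequality: it is here that compactness is indispensable and that the order of the two limiting operations must be handled with care. One must first fix $k$ and pass to the limit in $n$ along the subnet, using lower semicontinuity of $f_k$ at the cluster point $x^{\ast}$, and only afterwards let $k\to\infty$; interchanging these two limits naively is not legitimate. This is a Dini-type phenomenon adapted to the monotone increasing, lower semicontinuous setting (the classical Dini statement being its counterpart for decreasing sequences of upper semicontinuous functions), and dropping the compactness hypothesis makes the conclusion fail.
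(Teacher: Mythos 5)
Your proof is correct, but it takes a genuinely different route from the paper's. The paper argues by open covers: for any $a<\min_X f_\infty$, pointwise convergence gives for each $x$ an index $n(x)$ with $f_{n(x)}(x)>a$, so the superlevel sets $\{f_n>a\}$ --- open precisely because the $f_n$ are lower semicontinuous --- cover $X$; a finite subcover together with monotonicity produces a single $N$ with $f_N>a$ everywhere, whence $\lim_n\min_X f_n\geq a$, and letting $a\uparrow\min_X f_\infty$ concludes. You instead track minimizers $x_n$ of $f_n$, extract a cluster point $x^{\ast}$ by compactness, use lower semicontinuity of each \emph{fixed} $f_k$ along the subnet to get $f_k(x^{\ast})\leq L$, and only then let $k\to\infty$; your explicit warning about the order of these two limits is exactly the delicate point, and you handle it correctly (the argument also survives infinite values, since $f_k(x^{\ast})\leq L$ is vacuous when $L=+\infty$). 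The trade-offs: the paper's cover argument uses only the covering definition of compactness, never mentions minimizers or subnets, and so transfers verbatim to any compact topological space; your argument needs subnets in the non-metrizable case (you rightly note that subsequences suffice in the actual application, where $X=[1,s]$ is a compact interval), but in exchange it is more informative, exhibiting a concrete point $x^{\ast}$ --- a cluster point of the approximate minimizers --- at which $f_\infty(x^{\ast})\leq\lim_n\min_X f_n$, i.e.\ a minimizer of $f_\infty$. Your preliminary observations (each minimum is attained since $f_n$ is lower semicontinuous on a compact set, and $f_\infty=\sup_n f_n$ is lower semicontinuous as a supremum of lower semicontinuous functions) match the paper's, which asserts the same facts without the supremum justification.
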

\begin{proof}
The functions $f_n$ and $f_{\infty}$ are lower semicontinuous over a compact set so that they have a minimum. 
Since $f_n(x)\leq f_{\infty}(x)$ for every $x\in X$ it is clear that
$$\lim_{n\to \infty}\min_{x\in X}f_n(x)\leq\min_{x\in X}f_{\infty}(x).$$
Let us suppose now $a<\min_{x\in X}f_{\infty}(x)$, so that for every $x\in X$ $a<f_{\infty}(x)$. Since $\lim_{n}f_n(x)=f_{\infty}(x)$, it exists $n=n(x)$ such that $a<f_n(x)$. It follows that the family $\{a<f_n\}_{n\in \mathbb{N}}$ is an open cover of $X$. Let $n_1,...,n_j$ be a finite collection of indexes such that 
$$X\subset \{a<f_{n_1}\}\cup ... \cup \{a<f_{n_j}\}.$$
Let $N:=\max\{n_1,...,n_j\}$, so that $X\subset \{a<f_N\}$ since $f_n$ are increasing. This implies that $a<f_n(x)$ for every $x\in X$ so that $a<\lim_{n\to \infty}\min_{x\in X}f_n(x)$. Since $a$ is an arbitrary number less than $\min_{x\in X}f_{\infty}(x)$, the Lemma follows.
\end{proof}

I can now state the Theorem about the convergence of the iterations of the map $T^a$.

\begin{theorem}\label{th:triangolare implica convergenza}
Let $a\in(0,1)$. Given a function $F\in \Gamma_0^s(\mathbb{R}_{+})$, if $H^a$ is a metric then the sequence $\{T_a^{(n)}(F)\}$ converges pointwise to a fixed point of the map $T_a$. In particular, if the limit function $F^{\infty}$ is such that $(H^{\infty})^a$ is a metric, then $F^{\infty}(s)=c|s^a-1|^{\frac{1}{a}}$ where $c\in (0,+\infty).$ 
\end{theorem}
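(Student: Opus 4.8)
The plan is to exploit a structural difference from the case $a=1$ treated in Lemma \ref{lemma:monotonia T1}: here the normalizing factor $2^{\frac{1}{a}-1}>1$ makes the iteration \emph{increasing} rather than decreasing, and then to identify the limit as a fixed point by commuting the pointwise limit with the minimization that defines $T_a$.

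First I would establish monotonicity of the sequence. Since $H^{a}=H_F^{a}$ is a metric by hypothesis, Lemma \ref{triangolare implica monotonia} gives $T_a(F)\geq F$. Applying the order-preserving property \eqref{monotonia T_a} and arguing by induction yields $T_a^{(n+1)}(F)\geq T_a^{(n)}(F)$ for every $n$, so $\{T_a^{(n)}(F)\}$ is an increasing sequence in $\Gamma_0^s(\mathbb{R}_{+})$. Its pointwise limit $F^{\infty}(s):=\sup_n T_a^{(n)}(F)(s)\in[0,+\infty]$ is convex (an increasing limit of convex functions), lower semicontinuous (a supremum of lower semicontinuous functions), satisfies $F^{\infty}(1)=0$, and, passing to the limit in the defining relation $F_n(s)=sF_n(1/s)$ of $\Gamma_0^s(\mathbb{R}_{+})$, obeys $F^{\infty}(s)=sF^{\infty}(1/s)$; hence $F^{\infty}\in\Gamma_0^s(\mathbb{R}_{+})$.

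The heart of the argument is to prove $T_a(F^{\infty})=F^{\infty}$. The inequality $T_a(F^{\infty})\geq F^{\infty}$ is immediate: from $F^{\infty}\geq T_a^{(n)}(F)$ and \eqref{monotonia T_a} we get $T_a(F^{\infty})\geq T_a^{(n+1)}(F)$ for all $n$, and taking the supremum gives the claim. For the reverse inequality I would fix $s>0$ and, setting $g_n(\theta):=\widehat{T_a^{(n)}(F)}(\theta,1)+\widehat{T_a^{(n)}(F)}(\theta,s)$, write $T_a^{(n+1)}(F)(s)=2^{\frac{1}{a}-1}\min_\theta g_n(\theta)$, where the observation that the minimizing $\theta$ lies in $[1\wedge s,1\vee s]$ holds for \emph{every} entropy, so all these minimizations run over one fixed compact interval. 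The perspective functions are lower semicontinuous and increase to $g_\infty(\theta)=\widehat{F^{\infty}}(\theta,1)+\widehat{F^{\infty}}(\theta,s)$, so Lemma \ref{lemma:lim-min} applied on $[1\wedge s,1\vee s]$ yields $\lim_n\min_\theta g_n=\min_\theta g_\infty$, that is $\lim_n T_a^{(n+1)}(F)(s)=2^{\frac{1}{a}-1}\tilde H_{F^{\infty}}(1,s)$. The subtle point, which I expect to be the main obstacle, is that $T_a$ is built from the lower semicontinuous envelope $H_{F^{\infty}}$ and not from $\tilde H_{F^{\infty}}$; I would dispose of it by noting that $\tilde H_F(r_1,r_2)=\inf_\theta[\hat F(\theta,r_1)+\hat F(\theta,r_2)]$ is a partial infimum of a jointly convex function, hence convex, hence continuous and equal to its envelope $H_F$ at every interior point. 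Thus $2^{\frac{1}{a}-1}\tilde H_{F^{\infty}}(1,s)=T_a(F^{\infty})(s)$ for $s>0$, while the left-hand side equals $\lim_n T_a^{(n+1)}(F)(s)=F^{\infty}(s)$; lower semicontinuity extends the identity to the boundary, giving $T_a(F^{\infty})=F^{\infty}$.

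Finally, the ``in particular'' clause is a direct citation: the fixed point $F^{\infty}$ lies in $\Gamma_0^s(\mathbb{R}_{+})$, satisfies $T_a(F^{\infty})=F^{\infty}$, and is assumed to have $(H^{\infty})^{a}$ a metric, so all hypotheses of Theorem \ref{limite è punto fisso} are met and $F^{\infty}(s)=c|s^{a}-1|^{\frac{1}{a}}$ with $c\in(0,+\infty)$.
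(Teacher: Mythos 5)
Your proof follows essentially the same route as the paper's: monotonicity of the iterates from Lemma \ref{triangolare implica monotonia} together with \eqref{monotonia T_a}, the exchange of minimum and increasing pointwise limit via Lemma \ref{lemma:lim-min} on the compact interval where the minimizer must lie, and the final appeal to Theorem \ref{limite è punto fisso}. One caution on your treatment of the envelope subtlety: the claim that $\tilde H_{F^{\infty}}$, being convex, agrees with its lower semicontinuous envelope at $(1,s)$ requires $(1,s)$ to be interior to the domain of $\tilde H_{F^{\infty}}$, and $F^{\infty}$ need not be finite --- by the remark following the theorem, $F^{\infty}$ can equal $I_{\{1\}}$ (e.g.\ for $F=M_1$, $a\in(0,1)$), in which case $(1,s)$ lies on the boundary of the domain and the interior-point argument does not apply (your use of the same fact for the finite iterates $T_a^{(n)}(F)$ is fine). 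Fortunately that step is superfluous in your own setup: for the reverse inequality you only need $T_a(F^{\infty})(s)=2^{\frac{1}{a}-1}H_{F^{\infty}}(1,s)\leq 2^{\frac{1}{a}-1}\tilde H_{F^{\infty}}(1,s)=F^{\infty}(s)$, which is automatic because an envelope never exceeds the function, and together with your inequality $T_a(F^{\infty})\geq F^{\infty}$ (derived from monotonicity --- a small, valid variant the paper does not use) this sandwiches the fixed-point identity. The paper instead disposes of the envelope by applying $sc^{-}$ after passing to the limit and observing that $F^{\infty}$, as an increasing limit of lower semicontinuous functions, is itself lower semicontinuous, so $sc^{-}(F^{\infty})=F^{\infty}$.
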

\begin{proof}
Lemma \ref{triangolare implica monotonia} implies that $T_a(F)\geq F$. By the monotonicity property \eqref{monotonia T_a} the sequence $T_a^{(n)}(F)$ is increasing so it converges pointwise to a function $F^{\infty}:\mathbb{R}_{+}\rightarrow [0,\infty]$. Since $H^a$ is a metric, $F$ is convex and finite everywhere (thus continuous), as well as $T_a^{(n)}(F)$. I want to show that $F^{\infty}$ is a fixed point of $T_a$: 
\begin{align*}
&T_a(F^{\infty})(s)=sc^{-}\Big(2^{\frac{1}{a}-1}\inf_{\theta >0}\big(F^{\infty}(\theta)+\theta F^{\infty}(\frac{s}{\theta})\big)\Big)=sc^{-}\Big(2^{\frac{1}{a}-1}\lim_{n\to \infty}\inf_{\theta>0}\big(T_a^{(n)}(F)(\theta)+\theta T_a^{(n)}(F)(\frac{s}{\theta})\big)\Big)\\
&=sc^{-}\Big(\lim_{n\to \infty}T_a^{(n+1)}(F)(s)\Big)=F^{\infty}(s)
\end{align*}
where I have denoted by $sc^{-}(f)$ the lower semicontinuous envelope of the function $f$ and I have used Lemma \ref{lemma:lim-min} applied to $f_n(\theta):=T_a^{(n)}(F)(\theta)+\theta T_a^{(n)}(F)(\frac{s}{\theta})$ and $X:=[1,s]$.
The conclusion follows from Theorem \ref{limite è punto fisso}.
\end{proof}
\begin{remark}
It is not difficult to show that $F^{\infty}$ can be equal to $I_{\{1\}}$. For example, take $F(s)=|s-1|$ and consider the sequence $T_a^{(n)}(F)$ with $a\in (0,1).$
\end{remark}

In the final part of this section I want to study the connection between the behaviour of the function $F$ in a neighborhood of $1$ and the limit function $F^{\infty}$.
I start with two lemmas:

\begin{lemma}
\label{lemma:limite dall'alto}
Let $a\in (0,1]$, $b>1$, $c\in (0,+\infty)$ and $\bar{F}\in \Gamma_0^s(\mathbb{R}_{+})$ be the function defined by 
$$\bar{F}(s):=\begin{cases}c|s^a-1|^{\frac{1}{a}} \  \ s\in [\frac{1}{b},b], \\ +\infty \ \ \ \ \ \ \ \ \ \mathrm{otherwise}. \end{cases}$$
Then $\displaystyle \lim_{n\to \infty} T_a^{(n)}({\bar{F}})(s)=c|s^a-1|^{\frac{1}{a}}.$
\end{lemma}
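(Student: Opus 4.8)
The plan is to trap the iterates between the full Matusita divergence $cM_a(s):=c|s^a-1|^{\frac1a}$ and the truncation $\bar F$, to show that $\{T_a^{(n)}(\bar F)\}$ is monotone, and then to enlarge the set on which the limit coincides with $cM_a$ until it fills $(0,+\infty)$, using the explicit minimizer of the Matusita perspective. First I would record that $cM_a$ is a fixed point of $T_a$: the Matusita example gives $H_{M_a}(1,s)=2^{1-\frac1a}|s^a-1|^{\frac1a}$, so that $T_a(cM_a)=2^{\frac1a-1}T_1(cM_a)=cM_a$ (and in particular $M_a\in\Gamma_0^s$). Since $\bar F\ge cM_a$ pointwise, the monotonicity property \eqref{monotonia T_a} together with an induction yields the lower bound $T_a^{(n)}(\bar F)\ge cM_a$ for every $n$.

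Next I would check that the sequence is decreasing, that is $T_a(\bar F)\le\bar F$. Outside $[\frac1b,b]$ this is trivial because $\bar F=+\infty$ there. A direct computation gives that the minimizer of $\theta\mapsto\hat M_a(\theta,1)+\hat M_a(\theta,s)=|\theta^a-1|^{\frac1a}+|\theta^a-s^a|^{\frac1a}$ (for $1\le s$) is $\theta^\ast(s)=\big(\frac{1+s^a}{2}\big)^{\frac1a}$, with $1<\theta^\ast(s)<s$; moreover $2^{\frac1a-1}\big(cM_a(\theta^\ast)+s\,cM_a(\theta^\ast/s)\big)=cM_a(s)$. For $s\in[1,b]$ one has $\theta^\ast(s)<s\le b$ and $\theta^\ast(s)>1\ge s/b$, so $\theta^\ast(s)$ and $\theta^\ast(s)/s$ both lie in $[\frac1b,b]$ where $\bar F=cM_a$; hence
\begin{equation*}
T_a(\bar F)(s)\le 2^{\frac1a-1}\Big(\bar F(\theta^\ast)+s\,\bar F(\theta^\ast/s)\Big)=2^{\frac1a-1}\Big(cM_a(\theta^\ast)+s\,cM_a(\theta^\ast/s)\Big)=cM_a(s)=\bar F(s),
\end{equation*}
and $s\in[\frac1b,1]$ is covered by symmetry. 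Thus $T_a(\bar F)\le\bar F$, so by \eqref{monotonia T_a} the sequence $\{T_a^{(n)}(\bar F)\}$ is decreasing; being bounded below by $cM_a$ it converges pointwise to a function $F^\infty$ with $cM_a\le F^\infty\le\bar F$, and on $[\frac1b,b]$ the two bounds squeeze to give $F^\infty=cM_a$.

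Finally I would propagate $F^\infty=cM_a$ from $[\frac1b,b]$ to the whole half-line by induction. Assume $F^\infty=cM_a$ on $[\frac1\beta,\beta]$ for some $\beta\ge b$ and define $\beta'$ by $(\beta')^a:=2\beta^a-1$. For $s\in(\beta,\beta']$ the equivalence $\theta^\ast(s)\le\beta\iff s\le\beta'$, together with $\theta^\ast(s)\ge s/\beta$ (which at $s=\beta'$ reduces to $2(\beta^a-1)^2\ge0$), shows $\theta^\ast(s)\in[s/\beta,\beta]$, so $\theta^\ast(s)$ and $\theta^\ast(s)/s$ both lie in $[\frac1\beta,\beta]$. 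Passing to the limit in the elementary inequality $T_a^{(n+1)}(\bar F)(s)\le 2^{\frac1a-1}\big(T_a^{(n)}(\bar F)(\theta^\ast)+s\,T_a^{(n)}(\bar F)(\theta^\ast/s)\big)$, each term converging pointwise, I obtain
\begin{equation*}
F^\infty(s)\le 2^{\frac1a-1}\big(cM_a(\theta^\ast)+s\,cM_a(\theta^\ast/s)\big)=cM_a(s),
\end{equation*}
so $F^\infty(s)=cM_a(s)$, the interval $[\frac1{\beta'},\frac1\beta)$ being handled by symmetry. Iterating from $\beta_0=b$ with $\beta_{k+1}^a=2\beta_k^a-1$ gives $\beta_k^a-1=2^k(b^a-1)\to+\infty$, hence $\bigcup_k[\frac1{\beta_k},\beta_k]=(0,+\infty)$ and $F^\infty=cM_a$ everywhere, which is the claim.

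The step I expect to be the main obstacle is the last one: it depends on knowing the minimizer $\theta^\ast(s)$ of the Matusita perspective in closed form and on verifying, sharply up to the enlarged endpoint $s=\beta'$, the two membership conditions $\theta^\ast(s)\le\beta$ and $\theta^\ast(s)\ge s/\beta$, and then on checking that the recursion $\beta_{k+1}^a=2\beta_k^a-1$ really exhausts $(0,+\infty)$. I would emphasize that, since I only pass to the limit in the scalar inequality for $T_a^{(n+1)}(\bar F)(s)$ and never interchange the limit with $\inf_\theta$ or with the lower semicontinuous envelope, no continuity of $T_a$ along the decreasing convergence $T_a^{(n)}(\bar F)\downarrow F^\infty$ is needed.
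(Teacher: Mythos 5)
Your proof is correct and follows essentially the same route as the paper's: both rest on the explicit Matusita minimizer $\theta^{*}(s)=\mathfrak{M}_a(1,s)$, the fixed-point property of $c|s^a-1|^{\frac{1}{a}}$ combined with the monotonicity \eqref{monotonia T_a}, and an interval-doubling recursion that exhausts $(0,+\infty)$. The only differences are minor: the paper uses the bound $\mathfrak{M}_a(1,s)\leq\mathfrak{M}_1(1,s)$ to get the simpler sufficient recursion $b_{n+1}=2b_n-1$ and computes $T_a^{(n)}(\bar{F})$ exactly on the growing intervals, whereas you use the sharp threshold $(\beta')^a=2\beta^a-1$ and pass to the limit in a one-sided inequality after first establishing the monotone sandwich $c|s^a-1|^{\frac{1}{a}}\leq T_a^{(n)}(\bar{F})\leq \bar{F}$.
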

\begin{proof}
It is sufficient to consider the case $s>1$; by definition we have
\begin{equation}\label{eq:minimizzazione per matusita}
T_a(\bar{F})(s)=2^{\frac{1}{a}-1}\inf_{\theta \in [1,s]}\bar{F}(\theta)+\theta \bar{F}\Big(\frac{s}{\theta}\Big).
\end{equation}
When $b^2<s$ it is clear that $T_a(\bar{F})(s)=+\infty$. Moreover, I notice that in the case
\begin{equation}\label{eq:condizione per minimizzante}
\mathfrak{M}_a(1,s)\leq b, \ \mathrm{and} \ \frac{s}{\mathfrak{M}_a(1,s)}\leq b,
\end{equation}
the expression \eqref{eq:minimizzazione per matusita} is minimized by $\theta=\mathfrak{M}_a(1,s)$, so that $T_a(\bar{F})(s)=c|s^a-1|^{\frac{1}{a}}$ for such an $s$. Using now the bound given by Theorem \ref{proprieta' medie}, I deduce that the inequalities \eqref{eq:condizione per minimizzante} are certainly satisfied when $1\leq s\leq 2b-1$. The theorem is now an easy consequence of the fact that the sequence $b_0:=b$, $b_{n+1}:=2b_n-1$ is strictly increasing and it diverges to $+\infty$.
\end{proof}

\begin{lemma}
\label{lemma:limite dal basso}
Let $a\in (0,1]$, $b>1$, $c\in (0,+\infty)$ and $\underbar{F}\in \Gamma_0^s(\mathbb{R}_{+})$ be the function defined by 
$\underbar{F}(s):=c|s^a-1|^{\frac{1}{a}}$ when $s\in [\frac{1}{b},b]$ and extended linearly outside in such a way that the left derivative of $\underbar{F}$ at $b$ is the slope of the linear extension in $[b,+\infty)$.
Then $\displaystyle \lim_{n\to \infty} T_a^{(n)}(\underbar{F})(s)=c|s^a-1|^{\frac{1}{a}}.$
\end{lemma}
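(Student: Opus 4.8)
The plan is to mirror the argument of Lemma~\ref{lemma:limite dall'alto}: I will show that a single application of $T_a$ enlarges the interval on which $\underline{F}$ agrees with the Matusita divergence $M_a$ (scaled by $c$), i.e.\ with $c|s^a-1|^{1/a}$, from $[\tfrac1b,b]$ to $[\tfrac1{2b-1},2b-1]$, while never exceeding $c|s^a-1|^{1/a}$, and then iterate. Throughout I use that $cM_a$ is a fixed point of $T_a$: since the marginal perspective depends linearly on the entropy and the Matusita example gives $H_{M_a}(r,t)=2^{1-1/a}|r^a-t^a|^{1/a}$, one computes $T_a(cM_a)=cM_a$. By symmetry of all functions involved it suffices to treat $s>1$. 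Finally, because $cM_a$ is convex and $\underline{F}$ continues $cM_a$ along its tangent line at $b$ (and along the symmetric tangent at $\tfrac1b$), one has $\underline{F}\le c|s^a-1|^{1/a}$ on all of $[0,+\infty)$, with equality exactly on $[\tfrac1b,b]$.

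The upper bound is immediate: from $\underline{F}\le cM_a$ and the monotonicity \eqref{monotonia T_a} one gets $T_a(\underline{F})\le T_a(cM_a)=cM_a$, and inductively $T_a^{(n)}(\underline{F})\le cM_a$ for every $n$.

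The core step is the matching lower bound on the enlarged interval. Fix $1<s<2b-1$ and set $g(\theta):=\underline{F}(\theta)+\theta\,\underline{F}(s/\theta)=\hat{\underline{F}}(\theta,1)+\hat{\underline{F}}(\theta,s)$, so that $T_a(\underline{F})(s)=2^{\frac1a-1}\inf_{\theta\in[1,s]}g(\theta)$; since $\hat{\underline{F}}$ is jointly convex, $g$ is convex in $\theta$. Let $G$ denote the analogous functional built from $cM_a$, whose infimum over $[1,s]$ is attained at the power mean $\theta^\ast=\mathfrak{M}_a(1,s)$ and equals $2^{1-\frac1a}c|s^a-1|^{1/a}$. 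By Theorem~\ref{proprieta' medie} (using $0<a\le1$) I have $\sqrt{s}=\mathfrak{M}_0(1,s)\le\theta^\ast\le\mathfrak{M}_1(1,s)=\tfrac{1+s}{2}$, whence $\theta^\ast\le\tfrac{1+s}2<b$ and $s/\theta^\ast\le\sqrt{s}<b$, where the bound $2b-1\le b^2$ has been used. Thus $\theta^\ast$ and $s/\theta^\ast$ both lie in the open band $(\tfrac1b,b)$, so $\underline{F}$ coincides with $cM_a$ at both arguments and $g=G$ on a whole neighbourhood of $\theta^\ast$. Consequently $g'(\theta^\ast)=G'(\theta^\ast)=0$, and the convexity of $g$ forces $\theta^\ast$ to be a global minimiser of $g$ on $[1,s]$, giving $T_a(\underline{F})(s)=c|s^a-1|^{1/a}$. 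Letting $s\uparrow 2b-1$ by continuity and invoking symmetry, $T_a(\underline{F})=c|s^a-1|^{1/a}$ on $[\tfrac1{2b-1},2b-1]$.

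It remains to iterate. The function $T_a(\underline{F})$ is again convex and symmetric, bounded above by $cM_a$, and equal to $cM_a$ on $[\tfrac1{b_1},b_1]$ with $b_1=2b-1$; the previous step used only these three properties, so it applies verbatim and produces agreement on $[\tfrac1{b_2},b_2]$ with $b_2=2b_1-1$, and so on. With $b_0=b$ and $b_{n+1}=2b_n-1$ the sequence $b_n$ is strictly increasing and diverges to $+\infty$, and $T_a^{(n)}(\underline{F})=c|s^a-1|^{1/a}$ on $[\tfrac1{b_n},b_n]$; hence, for each fixed $s$, the value $T_a^{(n)}(\underline{F})(s)$ equals $c|s^a-1|^{1/a}$ for all large $n$, which gives the claimed pointwise convergence. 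I expect the only delicate point to be the lower bound of the third paragraph — ruling out that the sub-diagonal linear pieces of $\underline{F}$, which lie strictly below $cM_a$, could depress the infimum defining $T_a$ — and this is exactly what the convexity of $g$ together with the location of the critical point $\theta^\ast$ strictly inside the agreement band resolves.
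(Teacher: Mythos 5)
Your proof is correct, but it takes a genuinely different route from the paper's. The paper proves this lemma by first verifying the triangle-type inequality \eqref{eq:triangolare per F barrato} for $\underline{F}$ --- via Lemma \ref{lemmakafka}, comparing $\underline{F}$ on $(b,+\infty)$ with the auxiliary affine functions $l_r$ and using convexity of $c|s^a-1|^{\frac{1}{a}}$ --- then invoking Theorem \ref{th:triangolare implica convergenza} to obtain monotone convergence of $T_a^{(n)}(\underline{F})$ to a fixed point $F^{\infty}$ of $T_a$, and finally identifying $F^{\infty}=cM_a$ by a rigidity argument: two distinct fixed points of $T_a$ agreeing at some $s_0>1$ must agree at points accumulating at $1$, which is incompatible with $F^{\infty}=c|s^a-1|^{\frac{1}{a}}$ on $[\frac{1}{b},b]$. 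You instead transplant the mechanism of Lemma \ref{lemma:limite dall'alto} to the lower barrier: since $\theta^\ast=\mathfrak{M}_a(1,s)$ and $s/\theta^\ast$ lie strictly inside $(\frac{1}{b},b)$ whenever $1<s<2b-1$ (by Theorem \ref{proprieta' medie} together with $2b-1\leq b^2$), the convex function $\theta\mapsto \underline{F}(\theta)+\theta\underline{F}(s/\theta)$ coincides near $\theta^\ast$ with the Matusita functional, hence has vanishing derivative at $\theta^\ast$, so convexity makes $\theta^\ast$ a global minimizer and $T_a(\underline{F})=c|s^a-1|^{\frac{1}{a}}$ exactly on the enlarged band; iterating with $b_{n+1}=2b_n-1$ (the same recursion as in Lemma \ref{lemma:limite dall'alto}) gives, at each fixed $s$, exact equality for all large $n$ --- a stabilization statement stronger than the paper's monotone pointwise convergence. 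Your route is more elementary and self-contained, dispensing with Lemma \ref{lemmakafka}, with Theorem \ref{th:triangolare implica convergenza}, and with the fixed-point rigidity step, and it unifies the two barrier lemmas under one argument; the paper's route, at the price of the ratio-monotonicity computation behind \eqref{eq:triangolare per F barrato}, yields as a by-product that the power $a$ of the marginal perspective of $\underline{F}$ satisfies the triangle inequality, a fact of independent interest given the paper's theme. Two points you should make explicit, though neither is a gap: the iteration requires that $T_a$ maps into $\Gamma_0^s(\mathbb{R}_{+})$ with convex, finite (hence continuous) images --- convexity and symmetry come from Lemma \ref{l-duale H} and the definition of $T_a$, finiteness from your upper bound $T_a^{(n)}(\underline{F})\leq cM_a$ via \eqref{monotonia T_a} --- and continuity is what legitimizes passing to the closed endpoint $s=2b-1$ (alternatively, track open bands only, since your core step evaluates the iterate solely at points strictly inside the band); and the identification of the lower semicontinuous envelope with the raw infimum at the relevant points is covered by Lemma \ref{l-duale H}, exactly as the paper implicitly uses it in Lemma \ref{lemma:limite dall'alto}.
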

\begin{proof}
The lemma follows if I prove that 
\begin{equation}\label{eq:triangolare per F barrato}
\underbar{F}^a(t)\leq \underbar{F}^a(s)+s\underbar{F}^a\Big(\frac{t}{s}\Big)
\end{equation}
for every $1\leq s \leq t$. Indeed \eqref{eq:triangolare per F barrato} implies that $H^a$ is a distance, so that, by Theorem \ref{th:triangolare implica convergenza}, $T_a^{(n)}(\underbar{F})$ must converge to a function $F^{\infty}$ that is a fixed point of $T_a$. 
Since $T_a^{(n)}(\underbar{F})(s)=c|s^a-1|^{\frac{1}{a}}$ for every $n$ and every $s\in [\frac{1}{b},b]$, it holds $F^{\infty}(s)=c|s^a-1|^{\frac{1}{a}}$ for every $s\in [\frac{1}{b},b]$ and this implies that $F^{\infty}(s)=c|s^a-1|^{\frac{1}{a}}$ for every $s$. Indeed, let us suppose by contradiction it exists $s_0>1$ such that $F^{\infty}(s_0)\neq c|(s_0)^a-1|^{\frac{1}{a}}$ and consider the constant $k\neq c$ such that $F^{\infty}(s_0)=k|(s_0)^a-1|^{\frac{1}{a}}.$ Since $F^{\infty}$ and $k|(s_0)^a-1|^{\frac{1}{a}}$ are fixed points of $T_a$ and they coincide in $s_0$, it must exists another number $s_1$, $1<s_1<s_0$, where they coincide. Iterating the argument it is easy to show that $F^{\infty}$ and $k|(s_0)^a-1|^{\frac{1}{a}}$ have to coincide on a sequence of numbers that converges to $1$ but this is absurd since $F^{\infty}(s)=c|s^a-1|^{\frac{1}{a}}$ for every $s\in [\frac{1}{b},b]$ and the functions $c|s^a-1|^{\frac{1}{a}}$ and $k|s^a-1|^{\frac{1}{a}}$ coincide only at $s=1.$

It remains to show that \eqref{eq:triangolare per F barrato} holds. I use Lemma \ref{lemmakafka}: I have to prove that the function
$$s\mapsto \frac{|s^a-1|^{\frac{1}{a}}}{\underbar{F}(s)}$$
is increasing in $(1,+\infty)$: this is obvious in the interval $(1,b]$; consider now two numbers $r,t$ such that $b<r<t$. I define $s\mapsto l_r(s)$ to be the affine function that coincide with $\underbar{F}$ at $b$ and such that $l_r(r)=c|r^a-1|^{\frac{1}{a}})$, and I notice that the convexity of the function $s\mapsto c|s^a-1|^{\frac{1}{a}}$ implies that the slope of $l_r$ is greater or equal than the positive slope of the function $\underbar{F}$ in $(b,+\infty)$. Using again the convexity of the function $c|s^a-1|^{\frac{1}{a}}$ and the trivial fact that the quotient 
$$s\mapsto \frac{l_r(s)}{\underbar{F}(s)}$$
is increasing in $(b,+\infty)$, I conclude because
\begin{equation}
\frac{|t^a-1|^{\frac{1}{a}}}{\underbar{F}(t)}\geq \frac{l_r(t)}{\underbar{F}(t)}\geq \frac{l_r(r)}{\underbar{F}(r)}=\frac{c|r^a-1|^{\frac{1}{a}}}{\underbar{F}(r)}.
\end{equation}
\end{proof}

\begin{theorem}
Let $F\in \Gamma_0^s(\mathbb{R}_{+})$ be a function such that 
\begin{equation}\lim_{s\to 1}\frac{F(s)}{c|s^a-1|^{\frac{1}{a}}}=1.
\end{equation}
Then 
\begin{equation}\lim_{n\to +\infty}T_a^{(n)}(F)(s)=c|s^a-1|^{\frac{1}{a}}.
\end{equation}
\end{theorem}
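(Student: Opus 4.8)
The plan is to trap the iterates $T_a^{(n)}(F)$ between the two extremal comparison functions built in Lemmas \ref{lemma:limite dall'alto} and \ref{lemma:limite dal basso} and to push them through the monotonicity \eqref{monotonia T_a}. Fix $\epsilon\in(0,1)$. By the hypothesis $\lim_{u\to1}F(u)/\big(c|u^a-1|^{1/a}\big)=1$ there is a radius $b>1$ such that
\[
(1-\epsilon)\,c\,|u^a-1|^{1/a}\ \le\ F(u)\ \le\ (1+\epsilon)\,c\,|u^a-1|^{1/a}\qquad\text{for all }u\in[1/b,b],
\]
the two inequalities being equalities at $u=1$. I will sandwich $F$ between an upper barrier coming from the $+\infty$--extended Matusita profile and a lower barrier coming from the tangent--extended one, both matching $F$ on $[1/b,b]$ up to the factors $(1\pm\epsilon)$.

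First the easy half. Let $\bar F_\epsilon\in\Gamma_0^s(\mathbb{R}_{+})$ be the function of Lemma \ref{lemma:limite dall'alto} associated with the constant $(1+\epsilon)c$ and the interval $[1/b,b]$, so that $\bar F_\epsilon(u)=(1+\epsilon)c|u^a-1|^{1/a}$ on $[1/b,b]$ and $\bar F_\epsilon\equiv+\infty$ elsewhere. Then $F\le\bar F_\epsilon$ everywhere: on $[1/b,b]$ this is the right-hand inequality above, and off $[1/b,b]$ it is trivial. Hence, by \eqref{monotonia T_a}, $T_a^{(n)}(F)\le T_a^{(n)}(\bar F_\epsilon)$ for every $n$, and Lemma \ref{lemma:limite dall'alto} gives that $T_a^{(n)}(\bar F_\epsilon)$ converges pointwise to $(1+\epsilon)c|u^a-1|^{1/a}$. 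Therefore $\limsup_n T_a^{(n)}(F)(s)\le(1+\epsilon)\,c\,|s^a-1|^{1/a}$ for every $s$.

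The delicate half, which I expect to be the main obstacle, is the lower barrier, because the function $\underline F_\epsilon$ of Lemma \ref{lemma:limite dal basso} (equal to $(1-\epsilon)c|u^a-1|^{1/a}$ on a matching interval and prolonged by its tangent lines) must be placed \emph{below} $F$, and convexity of $F$ alone does not guarantee $\underline F_\epsilon\le F$ past $b$: since the tangent at the matching point is steeper than every secant of the convex Matusita profile, convexity only yields secant-type minorants of $F$ and may leave $F$ underneath $\underline F_\epsilon$ just beyond the matching point. I plan to overcome this by choosing the matching radius cleverly. Writing $g(u):=(1-\epsilon)c|u^a-1|^{1/a}$ and using $F(1)=g(1)=0$, if $F'_+<g'$ held a.e.\ on $(1,b)$ then integration would give $F(b)<g(b)$, contradicting the left-hand bound; hence there is $b^\ast\in(1,b)$ with $F'_+(b^\ast)\ge g'(b^\ast)$ and $F(b^\ast)\ge g(b^\ast)$. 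Taking $\underline F_\epsilon$ with matching interval $[1/b^\ast,b^\ast]$, convexity of $F$ gives, for $u\ge b^\ast$,
\[
F(u)\ \ge\ F(b^\ast)+F'_+(b^\ast)\,(u-b^\ast)\ \ge\ g(b^\ast)+g'(b^\ast)\,(u-b^\ast)\ =\ \underline F_\epsilon(u);
\]
on $[1/b^\ast,b^\ast]$ the inequality $\underline F_\epsilon=g\le F$ is the matching bound, and on $[0,1/b^\ast]$ it follows from the preceding case because both $F$ and $\underline F_\epsilon$ coincide with their reverse entropies ($F=R$), so $F(u)=uF(1/u)\ge u\,\underline F_\epsilon(1/u)=\underline F_\epsilon(u)$. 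Thus $\underline F_\epsilon\le F$ everywhere; \eqref{monotonia T_a} gives $T_a^{(n)}(\underline F_\epsilon)\le T_a^{(n)}(F)$, and Lemma \ref{lemma:limite dal basso} yields $\liminf_n T_a^{(n)}(F)(s)\ge(1-\epsilon)\,c\,|s^a-1|^{1/a}$.

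Combining the two barriers, for every $s$ and every $\epsilon\in(0,1)$,
\[
(1-\epsilon)\,c\,|s^a-1|^{1/a}\ \le\ \liminf_n T_a^{(n)}(F)(s)\ \le\ \limsup_n T_a^{(n)}(F)(s)\ \le\ (1+\epsilon)\,c\,|s^a-1|^{1/a}.
\]
Letting $\epsilon\downarrow0$ collapses both sides onto $c\,|s^a-1|^{1/a}$, which proves $\lim_{n\to+\infty}T_a^{(n)}(F)(s)=c\,|s^a-1|^{1/a}$. The whole argument is a squeeze driven by the two convergence lemmas; the only nonroutine point is positioning the tangent-extended lower barrier underneath $F$, which I settle by the mean-value selection of $b^\ast$ together with the symmetry $F=R$.
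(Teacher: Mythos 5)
Your proof is correct, and at its core it is the paper's own argument: sandwich $F$ between the two Matusita barriers of Lemma~\ref{lemma:limite dall'alto} and Lemma~\ref{lemma:limite dal basso}, push the sandwich through the iterates via the monotonicity \eqref{monotonia T_a}, apply the two convergence lemmas, and let $\epsilon\downarrow 0$. Where you genuinely depart from the paper is the lower barrier. The paper simply asserts $(1-\epsilon)\underbar{F}\leq F\leq (1+\epsilon)\bar{F}$ with both barriers built on the same interval $[1/b,b]$ furnished by the hypothesis; the upper inequality is indeed trivial off $[1/b,b]$, but the lower one is not, and your diagnosis is accurate: the tangent-extended barrier leaves $b$ with slope $g'(b)$, while the hypothesis plus convexity of $F$ only control secant slopes of $F$, so for a generic admissible $b$ the inequality $(1-\epsilon)\underbar{F}\le F$ can actually fail beyond $b$ (take, e.g., $F$ equal to the Matusita profile on $[1,b_1]$ and tangent-extended there, with $b$ slightly larger than $b_1$ and $\epsilon$ chosen between the quadratically small relative error of the tangent on $[b_1,b]$ and the linearly small derivative gap $1-g'(b_1)/g'(b)$). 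Your repair --- selecting $b^{\ast}\in(1,b)$ with $F'_{+}(b^{\ast})\ge g'(b^{\ast})$ via the identity $F(b)-F(1)=\int_1^b F'_{+}$, matching the barrier on $[1/b^{\ast},b^{\ast}]$ where the hypothesis still gives $F\ge g$, comparing tangents on $[b^{\ast},+\infty)$, and transferring the bound to $[0,1/b^{\ast}]$ through the symmetry $F=R$ --- is sound, and Lemma~\ref{lemma:limite dal basso} applies verbatim with constant $(1-\epsilon)c$ and radius $b^{\ast}$. A small further difference: by invoking the two lemmas directly with the rescaled constants $(1\pm\epsilon)c$ you avoid the homogeneity $T_a(\lambda F)=\lambda T_a(F)$ that the paper's chain $(1-\epsilon)T_a^{(n)}(\underbar{F})\le T_a^{(n)}(F)\le(1+\epsilon)T_a^{(n)}(\bar{F})$ uses implicitly (it is easy, but unstated). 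In short: same skeleton as the paper, but your version closes a real gap that the published proof glosses over, at the modest cost of the mean-value selection of $b^{\ast}$.
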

\begin{proof}
For every $\epsilon>0$ it exists a $b>1$ such that 
$$(1-\epsilon)c|s^a-1|^{\frac{1}{a}}\leq F(s) \leq (1+\epsilon)c|s^a-1|^{\frac{1}{a}}, \ \ \ s\in\Big[\frac{1}{b},b\Big],$$ 
so that
$$(1-\epsilon)\underbar{F}\leq F\leq (1+\epsilon)\bar{F},$$
where $\underbar{F},\bar{F}$ are defined in Lemma \ref{lemma:limite dall'alto} and \ref{lemma:limite dal basso}.
Take now an arbitrary $s\in \mathbb{R}_{+}$, from the monotonicity property \eqref{monotonia T_a} it follows 
$$(1-\epsilon)T_a^{(n)}(\underbar{F})\leq T_a^{(n)}(F)\leq (1+\epsilon)T_a^{(n)}(\bar{F}),$$
so that by Lemma \ref{lemma:limite dall'alto} and Lemma \ref{lemma:limite dal basso} one gets
$$(1-\epsilon)c|s^a-1|^{\frac{1}{a}}\leq \liminf_{n\to \infty}T_a^{(n)}(F)(s)\leq \limsup_{n\to \infty} T_a^{(n)}(F)(s)\leq (1+\epsilon)c|s^a-1|^{\frac{1}{a}}.$$
Since $\epsilon$ is arbitrary, it exists the limit of $T_a^{(n)}(F)(s)$ and it is equal to $c|s^a-1|^{\frac{1}{a}}$.

\end{proof}

\section{Marginal perspective cost}\label{sezione H con costo}
\subsection{Marginal perspective function}
In this section I introduce the marginal perspective cost. I will modify the definition of marginal perspective function that we have seen in section \ref{sezione senza costo} in order to take into account the presence of a cost function. The construction is motivated by the study of optimal entropy-transport problem (see \cite{LMS}, section $5$, and the section \ref{sezione entropia-trasporto} of the present paper).

First of all, given a number $c\in[0,+\infty)$ and an admissible entropy function $F$, the marginal perspective function $H_c:[0,\infty)\times[0,\infty)\rightarrow [0,\infty]$ is defined as the lower semicontinuous envelope of the function

\begin{equation}
\label{espressione H tilde}
\tilde{H}_c(r_1,r_2):=\inf_{\theta>0}\theta\Big(R\big(\frac{r_1}{\theta}\big)+R\big(\frac{r_2}{\theta}\big)+c\Big),
\end{equation} 
where $R$ is the reverse entropy function of $F$. Of course, the function $H_0$ coincides with the marginal perspective function $H_F$ introduced in section \ref{sezione senza costo}.
When the numbers $r_1,r_2$ are positive, the function $\tilde{H}_c$ can be also computed as

\begin{equation}
\label{espressione H tilde tramite F}
\tilde{H}_c(r_1,r_2)=\inf_{\theta>0}r_1F\Big(\frac{\theta}{r_1}\Big)+r_2F\Big(\frac{\theta}{r_2}\Big)+\theta c,
\end{equation}
or in terms of the perspective function as

\begin{align}
\tilde{H}_c(r_1,r_2)=&\inf_{\theta>0}\hat{F}(r_1,\theta)+\hat{F}(r_2,\theta)+\theta c \\ 
=&\inf_{\theta>0}\hat{R}(r_1,\theta)+\hat{R}(\theta,r_2)+\theta c.
\end{align}

For $c=+\infty$ I set 

\begin{equation}
H_{\infty}(r_1,r_2)=F(0)r_1+F(0)r_2.
\end{equation}

The following lemma, proved in $\cite{LMS}$ (lemma $5.3$), gives a dual characterization of $H_c$:

\begin{lemma}\label{l-duale H}
For every $c\geq 0$ the function $H_c$ can be represented as 

\begin{align}
\begin{split}
\label{duale H}
H_c(r_1,r_2)=\sup\{r_1\psi_1+r_2\psi_2:\psi_i\in \mathrm{D}(R^{*}), \ R^{*}(\psi_1)+R^{*}(\psi_2)\leq c\}. 
\end{split}
\end{align}

In particular, the marginal perspective function is lower semicontinuous, convex and positively $1$-homogeneous with respect to $(r_1,r_2)$, increasing and concave with respect to $c$. Moreover, $H_c$ coincides with $\tilde{H}_c$ in the interior of its domain.
\end{lemma}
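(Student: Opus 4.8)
The plan is to read off \eqref{duale H} by identifying $H_c$ with the biconjugate of $\tilde H_c$ and computing the two conjugates explicitly. This route is attractive because it avoids any delicate minimax exchange: the only interchanges that occur are between two suprema, which are always legitimate. The first ingredient I would record is the perspective/conjugate identity. Writing $\tilde H_c(r_1,r_2)=\inf_{\theta>0}\big(\hat R(r_1,\theta)+\hat R(r_2,\theta)+\theta c\big)$ and using $\hat R(r,\theta)=\theta R(r/\theta)$ for fixed $\theta>0$, the substitution $s=r/\theta$ gives
\[
\sup_{r\ge 0}\big(r\psi-\hat R(r,\theta)\big)=\theta\sup_{s\ge 0}\big(s\psi-R(s)\big)=\theta R^*(\psi),\qquad \theta>0.
\]
Here Lemma \ref{omeomorfismo coniugata} applied to $R$, together with the relations \eqref{coefficienti R}, pins down $\mathrm{D}(R^*)$ and guarantees that $R^*$ is proper, which is what keeps the underlying Fenchel--Young inequality $r\psi\le \hat R(r,\theta)+\theta R^*(\psi)$ meaningful at the boundary.

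Next I would verify that $\tilde H_c$ is a proper convex function. The map $(r_1,r_2,\theta)\mapsto \hat R(r_1,\theta)+\hat R(r_2,\theta)+\theta c$ is jointly convex, since each $\hat R(r_i,\theta)$ is the (jointly convex) perspective of the convex function $R$; partial minimization in $\theta$ then leaves a convex function of $(r_1,r_2)$. Properness is clear because $\tilde H_c\ge 0$ while $\tilde H_c(r,r)\le rc<+\infty$ (take $\theta=r$ and use $R(1)=0$). Consequently $H_c$, being the lower semicontinuous envelope of a proper convex function, coincides with its closure and hence with its biconjugate $\tilde H_c^{**}$. The core computation is then the conjugate itself: since $-\inf_\theta=\sup_\theta$,
\begin{align*}
\tilde H_c^*(\psi_1,\psi_2)&=\sup_{r_1,r_2\ge 0,\ \theta>0}\Big(r_1\psi_1+r_2\psi_2-\hat R(r_1,\theta)-\hat R(r_2,\theta)-\theta c\Big)\\
&=\sup_{\theta>0}\theta\big(R^*(\psi_1)+R^*(\psi_2)-c\big),
\end{align*}
where the last line uses the identity above in each variable. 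This supremum equals $0$ when $R^*(\psi_1)+R^*(\psi_2)\le c$ and $+\infty$ otherwise, i.e. $\tilde H_c^*$ is the convex indicator of $S_c:=\{(\psi_1,\psi_2)\in \mathrm{D}(R^*)^2: R^*(\psi_1)+R^*(\psi_2)\le c\}$. One further conjugation yields $H_c=\tilde H_c^{**}=\sup_{(\psi_1,\psi_2)\in S_c}(r_1\psi_1+r_2\psi_2)$, which is exactly \eqref{duale H}.

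The remaining ``in particular'' assertions then follow directly from \eqref{duale H}. As a supremum of the linear functions $(r_1,r_2)\mapsto r_1\psi_1+r_2\psi_2$, the function $H_c$ is automatically lower semicontinuous, convex and positively $1$-homogeneous in $(r_1,r_2)$. Monotonicity in $c$ is immediate because enlarging $c$ enlarges the feasible set $S_c$. Concavity in $c$ uses the convexity of $R^*$: if $\psi^{(i)}\in S_{c_i}$ then $\lambda\psi^{(1)}+(1-\lambda)\psi^{(2)}\in S_{\lambda c_1+(1-\lambda)c_2}$, and linearity of the objective gives $H_{\lambda c_1+(1-\lambda)c_2}\ge \lambda H_{c_1}+(1-\lambda)H_{c_2}$. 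Finally, the equality $H_c=\tilde H_c$ on the interior of the domain is the standard fact that a convex function agrees with its closure wherever it is continuous, i.e. on the interior of its domain.

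The step I expect to be the genuine obstacle is making the boundary behaviour rigorous: the tidy conjugate computation conceals the need to control $\hat R(r,\theta)$ and $R^*(\psi)$ at the edges of their domains (the cases $r_i=0$, the limits $\psi_i\to R'_\infty$, and the recession constant $R'_\infty=F(0)$, which is what links the computation to the $c=+\infty$ convention $H_\infty(r_1,r_2)=F(0)(r_1+r_2)$), and to confirm that partial minimization in $\theta$ does not produce an improper or non-closed object. These are precisely the points where Lemma \ref{omeomorfismo coniugata} and the relations \eqref{coefficienti R} carry the weight.
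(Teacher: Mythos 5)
Your argument is correct, and it is worth noting that the paper itself does not prove this lemma at all: it is quoted verbatim from \cite{LMS} (Lemma 5.3), so your biconjugation proof is a genuinely self-contained substitute rather than a reproduction of anything in the text. The chain of reasoning holds up at each point I would probe. The scaling identity $\sup_{r\ge 0}(r\psi-\hat R(r,\theta))=\theta R^{*}(\psi)$ for $\theta>0$ is exact; joint convexity of $(r_1,r_2,\theta)\mapsto \hat R(r_1,\theta)+\hat R(r_2,\theta)+\theta c$ plus infimal projection over the convex set $\{\theta>0\}$ gives convexity of $\tilde H_c$; properness follows from $\tilde H_c(r,r)\le rc$ via $\theta=r$ and $R(1)=0$; and since $\tilde H_c\ge 0$ has an affine minorant, Fenchel--Moreau identifies its lower semicontinuous envelope with $\tilde H_c^{**}$. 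The conjugate computation interchanges only suprema, and $\sup_{\theta>0}\theta\big(R^{*}(\psi_1)+R^{*}(\psi_2)-c\big)$ is indeed the indicator of $S_c$ (for the case $R^{*}(\psi_1)+R^{*}(\psi_2)<c$ the supremum is $0$ as a limit $\theta\downarrow 0$, not attained, which is harmless; note also that $R^{*}(\psi)\ge\psi>-\infty$ and $R^{*}(0)=0$, so $S_c\ni(0,0)$ and the constraint $R^{*}(\psi_1)+R^{*}(\psi_2)\le c$ already forces $\psi_i\in\mathrm{D}(R^{*})$). The ``in particular'' assertions are correctly read off the support-function form, and the concavity-in-$c$ argument via convexity of $R^{*}$ is exactly right. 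Your closing worry about boundary behaviour is in fact less serious than you suggest: the biconjugation route absorbs it automatically, as the example $(r_1,r_2)=(0,0)$ with $R(0)=+\infty$ shows --- there $\tilde H_c(0,0)=+\infty$ while the dual formula and the lsc envelope both give $0$, so the discrepancy lives exactly where Rockafellar's theorem (a proper convex function agrees with its closure on the interior of its domain) says it may, which is also precisely the content of the lemma's final sentence. What your approach buys, compared with leaving the statement as an external citation, is a two-conjugate derivation in which every interchange is between suprema and hence unconditional; the only background facts consumed are Fenchel--Moreau and the continuity of finite convex functions on the interior of the domain, with Lemma \ref{omeomorfismo coniugata} and \eqref{coefficienti R} needed only to describe $\mathrm{D}(R^{*})$, not to drive the proof.
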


\subsection{Induced marginal perspective cost}

When $c=c(x_1,x_2)$ is a function $c:X_1\times X_2\rightarrow [0,+\infty]$, the induced marginal perspective cost is the function $H:X_1\times [0,+\infty)\times X_2\times[0,+\infty)\rightarrow [0,+\infty]$ defined as
\begin{equation}
H(x_1,r_1;x_2,r_2):=H_{c(x_1,x_2)}(r_1,r_2).
\end{equation}

A particularly important case is when $X_1=X_2=X$ and $c$ is induced by a metric $d$ on $X$. 

Given a metric space $(X,d)$, I am interested in determining when the function $H$ is the power of a metric on the corresponding cone space. The latter is the space $\mathfrak{C}=Y/{\sim}$, where $Y=X\times [0,+\infty)$ and 
\begin{equation}
(x_1,r_1)\sim (x_2,r_2) \iff r_1=r_2=0 \ \mbox{or} \ r_1=r_2, x_1=x_2.
\end{equation}

It is important to highlight that the space $\mathfrak{C}$ can be endowed with a "natural" metric $d_{\mathfrak{C}}$ (see $\cite{Burago}$, Prop. $3.6.13$):
\begin{equation}
\label{metrica naturale cono}
d_{\mathfrak{C}}^2\big((x_1,r_1),(x_2,r_2)\big)=r_1^2+r_2^2-2r_1r_2\cos(d(x_1,x_2)\wedge \pi).
\end{equation}

\begin{theorem}
\label{manca la triangolare}
Let $F(s)$ be an admissible entropy function with a strict minimum at $s=1$ and let $c$ be a symmetric function such that $c(x_1,x_2)=0$ if and only if $\ x_1=x_2$. Then the induced marginal perspective cost $H$ is symmetric, non-negative and $H(x_1,r_1;x_2,r_2)=0$ if and only if $(x_1,r_1)\sim(x_2,r_2)$. In particular, $H$ is a well defined function on the cone $\mathfrak{C}$.
\end{theorem}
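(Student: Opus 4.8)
The plan is to establish the three asserted properties separately, using the primal formula \eqref{espressione H tilde} for the elementary facts and the dual representation of Lemma \ref{l-duale H} whenever I need a strictly positive lower bound. Symmetry is immediate: the right-hand side of \eqref{espressione H tilde} is symmetric in $(r_1,r_2)$, hence so is its lower semicontinuous envelope $H_c$, and since $c(x_1,x_2)=c(x_2,x_1)$ I get $H(x_1,r_1;x_2,r_2)=H_{c(x_1,x_2)}(r_1,r_2)=H_{c(x_2,x_1)}(r_2,r_1)=H(x_2,r_2;x_1,r_1)$. Non-negativity is equally direct: since $R\ge0$, $\theta>0$ and $c\ge0$, every term in \eqref{espressione H tilde} is non-negative, so $\tilde H_c\ge0$ and therefore $H_c\ge0$; equivalently, $\psi_1=\psi_2=0$ is admissible in Lemma \ref{l-duale H} and already yields the bound $H_c\ge0$.

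Next I would prove that $H$ vanishes on $\sim$-equivalent pairs. If $r_1=r_2=0$, the choice $\psi_1=\psi_2=0$ is feasible in the dual problem (here $R^*(0)=0\le c$ by Lemma \ref{omeomorfismo coniugata} applied to $R$) and the objective is identically zero, so $H_c(0,0)=0$. If instead $r_1=r_2=r>0$ and $x_1=x_2$, then $c=c(x_1,x_2)=0$ and, evaluating \eqref{espressione H tilde} at $\theta=r$, I obtain $\tilde H_0(r,r)\le r\,(R(1)+R(1))=0$; combined with non-negativity this gives $H_0(r,r)=0$. This settles the ``if'' direction.

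The heart of the matter is the converse, namely $H(x_1,r_1;x_2,r_2)>0$ whenever $(x_1,r_1)\not\sim(x_2,r_2)$. Here I would first extract from the strict minimum of $F$ at $s=1$ the three structural facts it encodes: $F(0)>0$; since a convex $F$ with a strict minimum at $1$ cannot be non-increasing on $[1,+\infty)$, its recession slope $F'_\infty=R(0)$ is strictly positive (possibly $+\infty$) by \eqref{coefficienti R}; and for $r>0$ the perspective $\hat F(\theta,r)=rF(\theta/r)$ vanishes exactly at $\theta=r$. I then split into two cases. If $r_1\ne r_2$, monotonicity of $H_c$ in $c$ reduces the claim to $c=0$; for $r_1,r_2>0$ the value $H_0$ agrees with $\tilde H_0$ in the interior of its domain and equals $\min_\theta\bigl(\hat F(\theta,r_1)+\hat F(\theta,r_2)\bigr)$, which is strictly positive because the two summands vanish only at $\theta=r_1$ and $\theta=r_2$ respectively and hence never simultaneously. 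The boundary case $r_1=0<r_2$ I would handle dually: using $F'_\infty>0$ there is $\psi_1<0$ with $R^*(\psi_1)<0$, which leaves room to choose $\psi_2>0$ with $R^*(\psi_1)+R^*(\psi_2)\le0$, so the admissible objective $r_2\psi_2$ is strictly positive. If instead $r_1=r_2=r>0$ but $x_1\ne x_2$, then $c:=c(x_1,x_2)>0$; since $R^*(0)=0$, $R^*$ is continuous and, by Lemma \ref{omeomorfismo coniugata} applied to $R$, takes small positive arguments up to $R'_\infty=F(0)>0$, so I may pick $\psi>0$ with $2R^*(\psi)\le c$, making $\psi_1=\psi_2=\psi$ admissible with objective $2r\psi>0$, whence $H_c(r,r)>0$.

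The main obstacle is precisely this positivity step, and more specifically isolating the three structural facts above from the single hypothesis that $F$ has a strict minimum at $1$; strictness is essential, since for a merely non-strict minimum the two perspective terms could vanish at a common $\theta$ and $H$ could then be zero off the diagonal. Finally, the three properties together show that the zero set of $H$ coincides with the set of $\sim$-equivalent pairs; in particular $H$ is compatible with the identification defining $\mathfrak C$ and descends to a well-defined function there, as claimed.
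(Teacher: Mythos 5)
Your proposal follows the paper's own proof very closely in structure and in most details: symmetry, non-negativity, and the vanishing on $\sim$-equivalent pairs are handled exactly as in the paper (the dual representation with $\psi_1=\psi_2=0$, resp.\ the choice $\theta=r$ in \eqref{espressione H tilde}); your treatment of the case $r_1=0<r_2$ (choosing $\psi_1<0<\psi_2$ with $R^{*}(\psi_1)+R^{*}(\psi_2)\leq 0$) and of the case $r_1=r_2=r>0$, $x_1\neq x_2$ (choosing $\psi_1=\psi_2=\psi>0$ with $2R^{*}(\psi)\leq c$) is the paper's argument essentially verbatim, resting on the same structural facts $F(0)>0$, $F'_{\infty}>0$, $F'_{0}<0$, $\mathrm{aff}F_{\infty}>0$ extracted from the strict minimum via Lemma \ref{omeomorfismo coniugata} and \eqref{coefficienti R}. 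Your reduction of the whole case $r_1\neq r_2$ to $c=0$ via the monotonicity of $c\mapsto H_c$ is a small genuine simplification; the paper instead disposes of $c>0$ directly by showing that $H=0$ together with $c>0$ forces $r_1=r_2=0$.

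The one place where you deviate is the case $c=0$, $0<r_1\neq r_2$, and as written it has a gap. You assert $H_0(r_1,r_2)=\tilde H_0(r_1,r_2)=\min_{\theta}\bigl(\hat F(\theta,r_1)+\hat F(\theta,r_2)\bigr)>0$, justified by the statement of Lemma \ref{l-duale H} that $H_c$ coincides with $\tilde H_c$ \emph{in the interior of its domain}. But $r_1,r_2>0$ does not guarantee that $(r_1,r_2)$ lies in that interior: admissible entropies with a strict minimum at $1$ may have $\mathrm{D}(F)$ a proper subinterval of $[0,+\infty)$ (e.g.\ $F=I_{\{1\}}$, or $F(s)=|s-1|$ on $[\frac{1}{2},2]$ extended by $+\infty$), and then points of the open quadrant can sit on the boundary of the domain of $H_0$, where positivity of $\tilde H_0$ does not automatically pass to the lower semicontinuous envelope (one only has $H_0\leq\tilde H_0$, the envelope being a liminf over neighbouring points). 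This is precisely where the paper uses a different device: a compactness argument on near-minimizing $\theta_n$, which lie in a fixed interval $[K_1,K_2]$, combined with the lower semicontinuity of $F$, forcing $\bar\theta/r_1=\bar\theta/r_2=1$. Your step can be repaired either by running that compactness argument along points approximating $(r_1,r_2)$, using the \emph{joint} lower semicontinuity of $\hat F$ to get $\hat F(\bar\theta,r_1)=\hat F(\bar\theta,r_2)=0$ and hence a contradiction with $r_1\neq r_2$, or dually: strictness of the minimum gives $\partial R^{*}(0)=\{1\}$, so $R^{*}(\psi)=\psi+o(\psi)$ near $0$, and the admissible pair $\psi_2=\epsilon$, $\psi_1=(R^{*})^{-1}(-R^{*}(\epsilon))$ yields the objective $(r_2-r_1)\epsilon+o(\epsilon)>0$ for small $\epsilon$, entirely in the spirit of your other dual arguments. (In fairness, the paper itself passes from $H_0=0$ to $\tilde H_0=0$ without comment, so the same caveat applies there; its compactness argument is the natural patch for both.)
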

\begin{proof}
Since $0\in\mathrm{D}(R^{*})$ and $R^{*}(0)=0$ it is clear that $H\geq 0$.
Moreover, when $r_1=r_2=0$ it follows from the dual representation \eqref{duale H} that $H(x_1,r_1;x_2,r_2)=0$. If $(x_1,r_1)\sim(x_2,r_2)$ and $r_1=r_2>0$ then  $c(x_1,x_2)=0$ and the fact that the marginal perspective cost is null follows from the possible choice $\theta=r_1$ in the expression \eqref{espressione H tilde}. 
Since $c$ is symmetric it is clear that 
$$H(x_1,r_1;x_2,r_2)=H(x_2,r_2;x_1,r_1).$$ 
It remains to prove that $H=0$ implies $(x_1,r_1)\sim(x_2,r_2)$. Lemma \ref{omeomorfismo coniugata} and equation \eqref{coefficienti R} tell us that $R^{*}$ is an increasing homeomorphism between $(-\mathrm{aff}F_{\infty}, F(0))$ and $(-F'_{\infty},-F'_{0})$ with $R^{*}(0)=0$. Since $F$ is a convex function with a strict minimum at $s=1$, it holds $\mathrm{aff}F_{\infty}>0, \ F(0)>0, \ F'_{\infty}>0 , F'_{0}<0$. In particular, it exists a positive number $k>0$ such that the function $R^{*}$ is finite, continuous and strictly increasing in $(-k,k)$. Hence, it follows again from the representation \eqref{duale H} that $H(x_1,r_1;x_2,r_2)=0$ and $c(x_1,x_2)>0$ implies $r_1=r_2=0$. Moreover, when $c(x_1,x_2)=0$ we must have $r_1=r_2$: suppose by contradiction that $0=r_1<r_2$ (the other case is similar), in the equation \eqref{duale H} we find $-k<\psi_1<0<\psi_2<k$ such that $R_1^{*}(\psi_1)+R^{*}(\psi_2)\leq 0$, contradicting the fact $H=0$. Finally, when $H(x_1,r_1;x_2,r_2)=0$, $c(x_1,x_2)=0$ and $r_1,r_2$ are positive I can prove that $r_1=r_2$ using the fact that $\tilde{H}_0=0$ implies $r_1=r_2$ because, using now the expression \eqref{espressione H tilde tramite F}, I know that for every natural $n$ it exists $\theta_n$ such that 
$$0\leq r_1F\Big(\frac{\theta_n}{r_1}\Big)+r_2F\Big(\frac{\theta_n}{r_2}\Big)<\frac{1}{n}.$$ In particular, for $n$ large enough, $\theta_n \in [K_1,K_2]$ for some constants $0<K_1<1<K_2$, and by extracting a subsequence $\theta_{n_j}$ it follows that $\theta_{n_j}\rightarrow \bar{\theta}$. The lower semicontinuity of $F$ forces $\frac{\bar{\theta}}{r_1}=\frac{\bar{\theta}}{r_2}=1$ so that $r_1=r_2$.
\end{proof}
 
If the function $F$ has not a strict minimum at $s=1$, the induced marginal perspective cost can be null even if $r_1\neq r_2$. To see this, take $F:[0,+\infty)\rightarrow [0,+\infty)$ defined by
\begin{equation}
F(s):=\begin{cases}0 &\mbox{if} \ 0\leq s\leq 1,\\ s-1&\mbox{if}\ \ s>1,\end{cases}\end{equation}
that gives $H_0\equiv 0$, so that $H(x_1,r_1;x_2,r_2)\equiv 0$.

\section{Entropy-Transport problem}\label{sezione entropia-trasporto}

In this section I consider two discrete spaces $X_1=\{x_1^1,x_1^2,..,x_1^m\}$ and $X_2=\{x_2^1,x_2^2,..,x_2^n\}$ and I let $c:X_1\times X_2\rightarrow [0,+\infty]$ be a proper (i.e. not identically $+\infty$) cost function that I will denote by $c_{i,j}:=c(x_1^i,x_2^j).$ I will also denote by $\mathcal{M}(X_i)$ the set of finite, nonnegative measures on $X_i, \ i=1,2$ (I refer to \cite{LMS} for a more general topological setting).

Given two finite measures $\mu_i\in \mathcal{M}(X_i),$ which can be identified with vectors $(r_1,...,r_m)\in \mathbb{R}_{+}^m, (t_1,...,t_n)\in \mathbb{R}_{+}^n$ by
\begin{equation}
\mu_1=\sum_{i=1}^m r_i\delta_{x_1^i}, \ \ \ \ \mu_2=\sum_{j=1}^n t_j\delta_{x_2^j}, \ \ \ \ r_i, t_j>0,
\end{equation} 
the classical Optimal-Transport problem between $\mu_1$ and $\mu_2$ is defined as the minimization of the functional 
\begin{equation}
\mathcal{C}(\gamma):=\sum_{i,j}c_{i,j}\gamma_{i,j}
\end{equation}
with respect to any positive measure $\boldsymbol{\gamma}\in \mathcal{M}(X_1\times X_2),$ $\boldsymbol{\gamma}=\sum_{i,j}\gamma_{i,j}\delta_{(x_1^i,x_2^j)},$ that satisfies the marginal constraints

\begin{align}
\label{condizioni sulle marginali}
\begin{split}
&\sum_j \gamma_{i,j}=r_i, \ \ i=1,..,m, 
\\ 
&\sum_i \gamma_{i,j}=t_j, \ \ j=1,..,n,
\end{split}
\end{align}
a condition that forces the measures $\mu_1, \mu_2$ to have equal mass, i.e. $\sum_i r_i=\sum_j t_j.$

Optimal Entropy-Transport problems arise naturally when one tries to relax the request on the marginals \eqref{condizioni sulle marginali}.
Let $F$ be a superlinear entropy function, the Optimal Entropy-Transport problem between $\mu_1$ and $\mu_2$ is defined as the minimization of the functional 

\begin{equation}
\mathcal{E}(\boldsymbol{\gamma}|\mu_1,\mu_2):=\sum_i r_i F\Big(\frac{\sum_j\gamma_{i,j}}{r_i}\Big)+\sum_j t_j F\Big(\frac{\sum_i\gamma_{i,j}}{t_j}\Big)+\sum_{i,j}c_{i,j}\gamma_{i,j}
\end{equation}
with respect to any positive measure $\boldsymbol{\gamma}\in \mathcal{M}(X_1\times X_2),$ $\boldsymbol{\gamma}=\sum_{i,j}\gamma_{i,j}\delta_{(x_1^i,x_2^j)}.$ 

I notice that the presence of the admissible entropy functions $F$ in the cost functional $\mathcal{E}$ penalizes the measures $\boldsymbol{\gamma}$ that do not satisfy the constraints \eqref{condizioni sulle marginali} (at least when $F$ have a strict minimum at $1$), and it allows to minimize with respect any measure $\boldsymbol{\gamma}\in \mathcal{M}(X_1\times X_2).$

Given a measure $\boldsymbol{\gamma}\in \mathcal{M}(X_1\times X_2)$ such that 
\begin{align}
&\sum_{k=1}^n \gamma_{i,k}>0, \ \ i=1,..,m, \\ 
&\sum_{h=1}^m \gamma_{h,j}>0, \ \ j=1,..,n.
\end{align}
I call marginal perspective cost functional $\mathcal{H}(\mu_1,\mu_2|\boldsymbol{\gamma})$ the quantity
\begin{equation}
\mathcal{H}(\mu_1,\mu_2|\boldsymbol{\gamma}):=\sum_{i,j}H\bigg(x_1^{i},\frac{r_i}{\sum_{k=1}^n \gamma_{i,k}};x_2^{j},\frac{t_j}{\sum_{h=1}^m \gamma_{h,j}}\bigg)\gamma_{i,j}.
\end{equation}

An important result (Theorem $5.5$, \cite{LMS}) tell us that 
\begin{equation}
\mathsf{ET}(\mu_1,\mu_2):=\inf_{\boldsymbol{\gamma}\in \mathcal{M}(X_1\times X_2)}\mathcal{E}(\boldsymbol{\gamma}|\mu_1,\mu_2)=\inf_{\boldsymbol{\gamma}\in \mathcal{M}(X_1\times X_2)}\mathcal{H}(\mu_1,\mu_2|\boldsymbol{\gamma}).
\end{equation}
The advantages of the $\mathcal{H}$-formulation of the problem are based on the homogeneity of the marginal perspective cost, which allows another useful formulation of the problem on the cone space, and the intrinsic metric properties of the function $H$ (see \cite{LMS} for the special case of the Hellinger-Kantorovich distance and the rest of the present paper for other examples).

It is interesting to notice that one can recover the usual pure entropy problem in the case 
\begin{equation}X_1=X_2=X=\{x_1,...,x_m\} \ \ \mathrm{and} \  c(x_1,x_2)=\begin{cases} 0 &\mbox{if} \ x_1=x_2, \\
+\infty &\mbox{otherwise}.
\end{cases}
\end{equation}
In this case, it is not difficult to show (example E.$5$, \cite{LMS}) that, given two measures 
\begin{equation}
\mu_1=\sum_{i=1}^m r_i\delta_{x_i}, \ \ \ \mu_2=\sum_{i=1}^m t_i\delta_{x_i}, \ \ \ r_i,t_i>0,
\end{equation}
it holds
\begin{equation}
\mathsf{ET}(\mu_1,\mu_2)=\sum_{i=1}^m H_0\Big(\frac{r_i}{\gamma_i},\frac{t_i}{\gamma_i}\Big)\gamma_i=\sum_{i=1}^m f\Big(\frac{r_i}{t_i}\Big)t_i,
\end{equation}
where $\gamma_i >0, \  i=1,...,m,$ and $f(s)=H_0(s,1).$

\section{Triangle inequality in the Entropy-Transport case}

In this section I deal with the case $X_1=X_2=X$, $F=U_p$ and  $c(x_1,x_2)=d^2(x_1,x_2)$, where $d:X\times X\rightarrow [0,\infty)$ is a metric on the space $X$. I denote by $H_p$ the induced marginal perspective cost. In the case $p\neq 0,1$ it holds:

\begin{align}
H_p(x_1,r;x_2,t)=\frac{2}{p}\Big[\mathfrak{M}_1(r,t)-\mathfrak{M}_{1-p}(r,t)\bigg(1+(1-p)\frac{d^2(x_1,x_2)}{2}\bigg)_{+}^{\frac{p}{p-1}}\Big].
\end{align}

When $p=1$ or $p=0$ one gets:

\begin{align}
&H_1(x_1,r;x_2,t)=2\Big[\mathfrak{M}_1(r,t)-\mathfrak{M}_0(r,t)e^{-\frac{d^2(x_1,x_2)}{2}}\Big],
\\
&H_0(x_1,r;x_2,t)=r\ln{r}+t\ln{t}-(r+t)\ln{\Big(\frac{r+t}{2+d^2(x_1,x_2)}\Big)}.
\end{align}

From the previous section, taking $X=\{x\}$, we already know that $H_p$ cannot be the square of a metric if $p\in (\frac{1}{2},1)$.
I am going to prove that even for the case $p\leq \frac{1}{2}$ the triangle inequality fails, i.e. 
\begin{equation}
\sqrt{H_p(x_1,r;x_3,t)}>\sqrt{H_p(x_1,r;x_2,s)}+\sqrt{H_p(x_2,s;x_3,t)}
\end{equation}
for given values of $r,s,t,d(x_1,x_2),d(x_2,x_3),d(x_1,x_3)$.

If $0<p\leq \frac{1}{2}$ I choose $r=s=0, t>0$ so that

\begin{align}
&H_p(x_1,0;x_3,t)=\frac{t}{p}-\frac{2^{\frac{p}{p-1}}t}{p}\Big(1+(1-p)\frac{d^2(x_1,x_3)}{2}\Big)^{\frac{p}{p-1}},
\\
&H_p(x_1,0;x_2,0)=0,
\\
&H_p(x_2,0;x_3,t)=\frac{t}{p}-\frac{2^{\frac{p}{p-1}}t}{p}\Big(1+(1-p)\frac{d^2(x_2,x_3)}{2}\Big)^{\frac{p}{p-1}}.
\end{align}
The triangle inequality is clearly not satisfied when $$d(x_1,x_3)=d(x_1,x_2)>0, \ d(x_2,x_3)=0.$$
 
When $p=0$, I choose again $r=s=0, t>0$ so that

\begin{align}
&H_0(x_1,0;x_3,t)=t\ln{t}-t\ln{\Big(\frac{t}{2+d^2(x_1,x_3)}\Big)}
\\
&H_0(x_1,0;x_2,0)=0
\\
&H_0(x_2,0;x_3,t)=t\ln{t}-t\ln{\Big(\frac{t}{2+d^2(x_2,x_3)}\Big)}.
\end{align}
Once again, the triangle inequality fails for 
$$d(x_1,x_3)=d(x_1,x_2)>0, \ d(x_2,x_3)=0.$$

If $p<0$, I choose instead $0<r<s<t$ and
$$d(x_1,x_3)=d(x_1,x_2)>0, \ d(x_2,x_3)=0$$ 
so that

\begin{align}
&H_p(x_1,r;x_3,t)=\frac{2}{p}\Big[\mathfrak{M}_1(r,t)-\mathfrak{M}_{1-p}(r,t)\Big(1+(1-p)\frac{d^2(x_1,x_3)}{2}\Big)^{\frac{p}{p-1}}\Big],
\\
&H_p(x_1,r;x_2,s)=\frac{2}{p}\Big[\mathfrak{M}_1(r,s)-\mathfrak{M}_{1-p}(r,s)\Big(1+(1-p)\frac{d^2(x_1,x_3)}{2}\Big)^{\frac{p}{p-1}}\Big],
\\
&H_p(x_2,s;x_3,t)=\frac{2}{p}\Big[\mathfrak{M}_1(s,t)-\mathfrak{M}_{1-p}(s,t)\Big].
\end{align}
It is not difficult to see that the triangle inequality fails when $d(x_1,x_3)$ is sufficiently large, because $\media_{1-p}(r,s)<\media_{1-p}(r,t)$ and 
$$\Big(1+(1-p)\frac{d^2(x_1,x_3)}{2}\Big)^{\frac{p}{p-1}}\rightarrow +\infty$$
when $d(x_1,x_3)\rightarrow +\infty$.

Let us now move to the case $p\geq 1.$

\begin{theorem}
\label{teoremone}
Let us suppose $X_1=X_2=X$ and $c=d^2$ for a metric $d$ on $X$. Then $\sqrt{H_p}$ is a metric on the cone $\mathfrak{C}$ for every $p\geq 1$.
\end{theorem}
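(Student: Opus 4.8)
Thanks to Theorem~\ref{manca la triangolare}, applied with the entropy $F=U_p$ (which for every $p\geq 1$ has a strict minimum at $s=1$) and the symmetric cost $c=d^2$, the function $\sqrt{H_p}$ is non-negative, symmetric and vanishes precisely when $(x_1,r_1)\sim(x_2,r_2)$; in particular it is a well defined function on $\mathfrak{C}$ that separates points. Hence the only thing left to prove is the triangle inequality
\[
\sqrt{H_p(x_1,r_1;x_3,r_3)}\leq \sqrt{H_p(x_1,r_1;x_2,r_2)}+\sqrt{H_p(x_2,r_2;x_3,r_3)}
\]
for arbitrary cone points $[x_i,r_i]$, $i=1,2,3$.

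The first step is to remove the geometry of $X$. Denote by $H_p^{(\delta)}(r,t)$ the value obtained from the explicit formula for $H_p$ by replacing $d(x_1,x_2)$ with a parameter $\delta\geq 0$, and set $\kappa(\delta):=(1+(1-p)\delta^2/2)_+^{p/(p-1)}$ (with $\kappa(\delta)=e^{-\delta^2/2}$ if $p=1$). Since $\kappa$ is nonincreasing and $\mathfrak{M}_{1-p}(r,t)\geq 0$, the map $\delta\mapsto H_p^{(\delta)}(r,t)$ is nondecreasing. Writing $a:=d(x_1,x_2)$ and $b:=d(x_2,x_3)$, the triangle inequality for $d$ gives $d(x_1,x_3)\leq a+b$, so replacing $d(x_1,x_3)$ by $a+b$ only increases the left-hand side. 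It therefore suffices to establish
\[
\sqrt{H_p^{(a+b)}(r_1,r_3)}\leq \sqrt{H_p^{(a)}(r_1,r_2)}+\sqrt{H_p^{(b)}(r_2,r_3)}
\]
for all $a,b\geq 0$ and $r_1,r_2,r_3\geq 0$, i.e.\ to prove the statement for $X=\mathbb{R}$ with $x_1\leq x_2\leq x_3$. The purely radial case $a=b=0$ is exactly Theorem~\ref{triangolare costless power mean}, and the $1$-homogeneity of $H_p$ in the masses lets one normalise one weight and treat the configurations with a vanishing mass directly.

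For $p=1$ the cross term is the geometric mean, $\mathfrak{M}_0(r,t)=\sqrt{r}\,\sqrt{t}$, so that in the radial coordinate $\rho=\sqrt{r}$ one has
\[
H_1^{(\delta)}(r,t)=\rho_r^2+\rho_t^2-2\rho_r\rho_t\,e^{-\delta^2/2}.
\]
Setting $\omega(s):=\arccos(e^{-s^2/2})\in[0,\tfrac{\pi}{2})$, this is precisely the Euclidean cone distance \eqref{metrica naturale cono} over the space $(X,\omega\circ d)$ (with radial coordinate $\rho$). By the cone construction (\cite{Burago}, Prop.~3.6.13) it then suffices that $\omega\circ d$ be a metric on $X$ with values in $[0,\pi]$; since $\omega(0)=0$, $\omega$ is increasing and a direct computation shows that $\omega$ is concave, Lemma~\ref{cambio metrica} yields that $\omega\circ d$ is a metric and the claim for $p=1$ follows. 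This recovers the Gaussian Hellinger--Kantorovich distance of \cite{LMS}.

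For $p>1$ the cross term $\mathfrak{M}_{1-p}(r,t)$ is a genuine power mean and does not factor as a product $\rho_r\rho_t$, so $\sqrt{H_p}$ is \emph{not} a Euclidean cone distance and the reduced inequality must be proved directly. Following the strategy of Theorem~\ref{triangolare costless power mean}, the plan is to exploit the $1$-homogeneity and the monotonicity \eqref{monotonia funzione H} to cut down the free parameters, and then to establish the resulting inequality in the mass ratios and in the transport factors $\kappa(a),\kappa(b),\kappa(a+b)$ (linked by the collinearity constraint $\sqrt{s_{a+b}}=\sqrt{s_a}+\sqrt{s_b}$, with $s_\delta=(p-1)\delta^2/2$) by monotonicity and convexity arguments built on the power-mean ordering of Theorem~\ref{proprieta' medie}, in the same spirit as the auxiliary functions $\phi_p,\psi_p$ of the costless proof; the degenerate configurations with a vanishing mass, together with the limit $p=1$ as a consistency check, are handled separately. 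I expect the coupling between the radial and the angular contributions to be the main obstacle: because the transport factor $\kappa(\delta)$ multiplies the non-separable power mean $\mathfrak{M}_{1-p}(r,t)$, the reshaping of mass and the displacement in $X$ cannot be decoupled as in the cone case $p=1$, and moreover the inequality is sharp — for equal masses and infinitesimally small distances the two sides coincide to leading order — so the estimates must be tight enough to capture this equality regime.
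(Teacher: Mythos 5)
Your preparatory reductions are correct and coincide with the paper's: Theorem~\ref{manca la triangolare} disposes of everything except the triangle inequality; the monotonicity of $\delta\mapsto H_p^{(\delta)}(r,t)$ legitimately reduces to the collinear case $d(x_1,x_3)=d(x_1,x_2)+d(x_2,x_3)$ (the paper's Lemma~\ref{crescita rispetto d}, applied after a reparametrization); and your identification of $\sqrt{H_1}$ as the cone distance \eqref{metrica naturale cono} over $(X,\omega\circ d)$ with $\omega(s)=\arccos(e^{-s^2/2})$, combined with Lemma~\ref{cambio metrica}, is exactly how the paper settles $p=1$.

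However, for $p>1$ --- the actual content of the theorem --- your proposal contains no proof, only a declared plan, and the plan points away from the paper's key idea. You assert that since $\mathfrak{M}_{1-p}$ does not factor as $\rho_r\rho_t$, the cone picture is unavailable and the inequality "must be proved directly"; the paper instead shows that the cosine normal form \emph{survives} for all $p>1$: the function $f_p(d)=\arccos\bigl[(1-(p-1)\tfrac{d^2}{2})_{+}^{p/(p-1)}\bigr]$ is concave (a computation hinging on the Bernoulli inequality) and vanishes only at $d=0$, so by Lemma~\ref{cambio metrica} it suffices to prove the triangle inequality for $\bar{H}_p(x_1,r;x_3,t)=\mathfrak{M}_1(r,t)-\mathfrak{M}_{1-p}(r,t)\cos\bigl(d(x_1,x_3)\wedge\tfrac{\pi}{2}\bigr)$ with $d$ an arbitrary metric. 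This normal form is what makes a comparison with the already-established case $p=1$ possible: after the explicit boundary regimes where some distance reaches $\tfrac{\pi}{2}$, and after reducing $t\leq s$ to $s<t$ by a monotonicity-in-$s$ lemma you do not mention, the paper linearizes the square roots via Lemma~\ref{da radici a senza} ($\sqrt{C}\leq\sqrt{A}+\sqrt{B}$ iff $C\leq A/\alpha+B/\beta$ for all $\alpha+\beta=1$), writes $\bar{H}_p=\bar{H}_1+[\mathfrak{M}_0-\mathfrak{M}_{1-p}]\cos d$ in the case $r<s<t$ and $\bar{H}_p=\bar{H}_1\theta_p+\mathfrak{M}_1(1-\theta_p)$ with $\theta_p=\mathfrak{M}_{1-p}/\mathfrak{M}_0$ in the case $s\leq r$, borrows the triangle inequality from $\bar{H}_1$, and is left with genuinely new scalar inequalities: the monotonicity of $u\mapsto g_p(u)+g_p(\beta u)$ adapted from \cite{Endres}, and, for $1<p<\tfrac{3}{2}$, sharp power-mean bounds from \cite{Kouba} whose optimal constant is exactly $\tfrac{4}{p-1}$. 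None of these steps is anticipated by your sketch, and your own closing observation --- that the inequality is sharp in the equality regime --- is precisely why soft monotonicity and convexity arguments "in the spirit of $\phi_p,\psi_p$" cannot close the case $p>1$ by themselves. As it stands, the proposal proves the theorem only for $p=1$.
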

\begin{proof}
The proof is long so I have divided it in different steps:

$\textbf{Step 1.}$ $\mathit{ The\ only \ problem \ is \ the \ triangle \ inequality.}$

It is clear that $H_p$ is finite and I can apply Theorem $\ref{manca la triangolare}$ so that it remains to prove that the square root of $H_p$ satisfies the triangle inequality.

$\textbf{Step 2.}$ $\mathit{Change \ of \ the \ space \ part \ and \ case \ p=1.}$

I use now Lemma \ref{cambio metrica} in order to change the expression of the function $H_p$ in a more familiar one.

\begin{proposition}
$H_p$ is the square of a metric on the cone if $\bar{H}_p$ is the square of a metric on the cone for every metric $d$ on $X$, where I put
\begin{equation}
\bar{H}_p(x_1,r;x_3,t):=\mathfrak{M}_1(r,t)-\mathfrak{M}_{1-p}(r,t)\cos\Big(d(x_1,x_3)\land \frac{\pi}{2}\Big).
\end{equation}
\end{proposition}
\begin{proof}
In order to apply Lemma \ref{cambio metrica}, in the case $p>1$ I define $f_p:[0,+\infty)\rightarrow [0,\frac{\pi}{2}]$, 
$$f_p(d)=\arccos\Big[(1-(p-1)\frac{d^2}{2})_{+}^{\frac{p}{p-1}}\Big].$$
Thus, I have to show that $f_p$ is a concave function and $f_p(d)=0$ if and only if $d=0$. The second statement is obvious, for the first one I notice that it is enough to prove that the function is concave when $d\in \big(0,\sqrt{\frac{2}{p-1}}\big)$. Let us compute the second derivative:
I put 
$$g_p(d)=\Big(1-(p-1)\frac{d^2}{2}\Big)^{\frac{p}{p-1}},$$
so that
$$f_p(d)=\arccos(g_p(d)),$$
$$g'_p(d)=\frac{-pdg_p(d)}{\Big(1-(p-1)\frac{d^2}{2}\Big)},$$
$$g''_p(d)=\frac{p\Big((p+1)\frac{d^2}{2}-1\Big)g_p(d)}{\Big(1-(p-1)\frac{d^2}{2}\Big)^2}.$$
Thus

\begin{equation}
\label{condizione su f_p}
f''_p(d)=-\frac{(1-g_p(d)^2)g_p''(d)+g_p(d)g_p'(d)^2}{(1-g_p(d)^2)^{\frac{3}{2}}}
=-\frac{p\Big((p+1)\frac{d^2}{2}-1\Big)g_p(d)\Big(1-g_p(d)^2\Big)+p^2d^2g_p(d)^3}{\Big(1-g_p(d)^2\Big)^{\frac{3}{2}}\Big(1-(p-1)\frac{d^2}{2}\Big)^2}.
\end{equation}
Recalling that $d\in \big(0,\sqrt{\frac{2}{p-1}}\big)$ and $g_p(d)\in (0,1)$, the function $f_p$ is concave if and only if 
\begin{equation}\label{dis: concavita trasformazione}
(p+1)\frac{d^2}{2}-1+\Big(1-(p-1)\frac{d^2}{2}\Big)^{\frac{p}{p-1}}\Big((p-1)\frac{d^2}{2}+1\Big)\geq 0.
\end{equation}
Since $\frac{p}{p-1}>1$ it holds $(1-(p-1)\frac{d^2}{2}\Big)^{\frac{p}{p-1}}\geq 1-p\frac{d^2}{2}$ by the Bernoulli inequality, so that 
\begin{equation}
(p+1)\frac{d^2}{2}-1+\Big(1-(p-1)\frac{d^2}{2}\Big)^{\frac{p}{p-1}}\Big((p-1)\frac{d^2}{2}+1\Big)\geq (p+1)\frac{d^2}{2}-1+\Big(1-p\frac{d^2}{2}\Big)\Big((p-1)\frac{d^2}{2}+1\Big)=p\frac{d^2}{2}\Big(1-(p-1)\frac{d^2}{2}\Big)
\end{equation}
and \eqref{dis: concavita trasformazione} follows.
In the case $p=1$ I have to check that $f_1:[0,+\infty)\rightarrow [0,\frac{\pi}{2})$ defined by
$$f_1(d)=\arccos(e^{-\frac{d^2}{2}})$$
is concave and $f_1(d)=0$ if and only if $d=0$, which is trivial.

\end{proof}
It is now clear that $\bar{H}_1$ is the square of a metric on the cone space, because \eqref{metrica naturale cono} is the square of a metric on the cone space and $d\wedge \frac{\pi}{2}=(d\wedge \frac{\pi}{2})\wedge \pi$ is a metric if $d$ is a metric. 

$\textbf{Step 3.}$ $\mathit{Triangle \ inequality \ for \ large \ values \ of \ d.}$

From now on, I suppose $p>1$ and I have to show that 
\begin{equation}
\bar{H}_p(x_1,r;x_3,t)=\media_1(r,t)-\media_{1-p}(r,t)\cos\big(d(x_1,x_3)\wedge \frac{\pi}{2}\big)
\end{equation}
is the square of a metric for any metric $d$ on $X$.

\begin{lemma}
\label{crescita rispetto d}
The function
\begin{equation}
d\mapsto \sqrt{\mathfrak{M}_1(r,t)-\mathfrak{M}_{1-p}(r,t)\cos\big(d\wedge\frac{\pi}{2}\big)}
\end{equation}
is increasing in $[0,\infty)$ for $p>1$.
\end{lemma}
\begin{proof}
Just notice that 
$$d\mapsto\cos\big(d\wedge\frac{\pi}{2}\big)$$
 is decreasing in $[0,\infty)$. The result follows easily.
\end{proof}

In view of the Lemma $\ref{crescita rispetto d}$, from now on I also assume 
$$d(x_1,x_3)=d(x_1,x_2)+d(x_2,x_3),$$
and I have to prove that for every $p>1$, for every metric $d$ on $X$ and for every $r,s,t\in [0,+\infty),x_1,x_2,x_3\in X$ the following triangle inequality holds:

\begin{equation}
\label{disuguaglianza triangolare caso generale}
\sqrt{\bar{H}_p(x_1,r;x_3,t)}\leq \sqrt{\bar{H}_p(x_1,r;x_2,s)}+\sqrt{\bar{H}_p(x_2,s;x_3,t)}.
\end{equation}	

I start with the case $d(x_1,x_2)\geq\frac{\pi}{2}$ and $d(x_2,x_3)\geq\frac{\pi}{2}$. Then

\begin{equation*}
\begin{aligned}
&\bar{H}_p(x_1,r;x_3,t)=\frac{r+t}{2},
\\
&\bar{H}_p(x_1,r;x_2,s)=\frac{r+s}{2},
\\
&\bar{H}_p(x_2,s;x_3,t)=\frac{s+t}{2},
\end{aligned}
\end{equation*}

and the triangle inequality $\eqref{disuguaglianza triangolare caso generale}$ follows easily.

In the case $d(x_1,x_2)\leq\frac{\pi}{2}$ and $d(x_2,x_3)\geq\frac{\pi}{2}$ it holds

\begin{equation*}
\begin{aligned}
&\bar{H}_p(x_1,r;x_3,t)=\frac{r+t}{2},
\\
&\bar{H}_p(x_1,r;x_2,s)=\mathfrak{M}_1(r,s)-\mathfrak{M}_{1-p}(r,s)\cos\big(d(x_1,x_2)\big),
\\
&\bar{H}_p(x_2,s;x_3,t)=\frac{s+t}{2}.
\end{aligned}
\end{equation*}

In view of the Lemma $\ref{crescita rispetto d}$ the worst case is when $d(x_1,x_2)=0$, so that it is sufficient to prove

\begin{equation}
\label{scelgo p=1}
\sqrt{r+t}\leq \sqrt{r+s-2\mathfrak{M}_{1-p}(r,s)}+\sqrt{s+t}.
\end{equation}
Using now the Lemma $\ref{proprieta' medie}$, the right hand side of $\eqref{scelgo p=1}$ is not lower than 
$$\sqrt{r+s-2\sqrt{rs}}+\sqrt{s+t},$$
hence I have to prove that 
$$\sqrt{r+t}\leq |\sqrt{r}-\sqrt{s}|+\sqrt{s+t},$$
which is obvious in the case $r\leq s$, on the other hand if $r>s$ one gets

\begin{equation}
\label{da elevare al quadrato}
\sqrt{r+t}+\sqrt{s}\leq \sqrt{r}+\sqrt{s+t},
\end{equation}
and taking the square of both sides $\eqref{da elevare al quadrato}$ is trivially proved.

Now I suppose $d(x_1,x_3)\geq\frac{\pi}{2}$, $d(x_1,x_2)<\frac{\pi}{2}$ and $d(x_2,x_3)< \frac{\pi}{2}$. Then

\begin{equation*}
\begin{aligned}
&\bar{H}_p(x_1,r;x_3,t)=\frac{r+t}{2},
\\
&\bar{H}_p(x_1,r;x_2,s)=\mathfrak{M}_1(r,s)-\mathfrak{M}_{1-p}(r,s)\cos\big(d(x_1,x_2)\big),
\\
&\bar{H}_p(x_2,s;x_3,t)=\mathfrak{M}_1(s,t)-\mathfrak{M}_{1-p}(s,t)\cos\big(d(x_2,x_3)\big).
\end{aligned}
\end{equation*}

By the same reasoning as before, it is sufficient to show the inequality 

\begin{equation}
\label{caso cono d_13>pi/2}
\sqrt{\frac{r+t}{2}}\leq \sqrt{\mathfrak{M}_1(r,s)-\sqrt{rs}\cos\big(d(x_1,x_2)\big)}+\sqrt{\mathfrak{M}_1(s,t)-\sqrt{st}\cos\big(d(x_2,x_3)\big)},
\end{equation}
that follows from the triangle inequality for the cone distance $d_{\mathfrak{C}}$, since 
$$\sqrt{\frac{r+t}{2}}\leq \sqrt{\mathfrak{M}_1(r,t)-\sqrt{rt}\cos\big(d(x_1,x_3)\big)}$$
if $\frac{\pi}{2}\leq d(x_1,x_3)<\pi$.

$\textbf{Step 4.}$ $\mathit{Triangle \ inequality \ with \ }d< \frac{\pi}{2} \ \mathit{and} \ t\leq s$

Thus, I can assume
$$d(x_1,x_3)<\frac{\pi}{2}, \ d(x_1,x_2)<\frac{\pi}{2}, \ d(x_2,x_3)<\frac{\pi}{2}.$$
Without loss of generality, I can also assume $r<t$ in the inequality \eqref{disuguaglianza triangolare caso generale}, so that I have to deal with three cases: $s\leq r$, $r<s<t$, $t\leq s$. 
In this step of the proof, I start with the latter case:

\begin{lemma}
For any fixed $r,t,x_1,x_2,x_3$, the function 
\begin{equation}
s\mapsto  \sqrt{\bar{H}_p(x_1,r;x_2,s)}+\sqrt{\bar{H}_p(x_2,s;x_3,t)}
\end{equation}
is increasing in $[t,+\infty)$ .
\end{lemma}
\begin{proof}
The result follows if I prove that for any fixed $x_1,x_2$ the function 
$$f_p(u)=\bar{H}_p(x_1,1;x_2,u)$$
is increasing in $[1,+\infty)$. 
This easily follows since

\begin{align}
f'_p(u)&=\frac{1}{2}-\frac{u^{-p}}{2}\Big(\frac{1+u^{1-p}}{2}\Big)^{\frac{p}{1-p}}\cos\big(d(x_1,x_2)\big)\geq \frac{1}{2}-\frac{u^{-p}}{2}\Big(\frac{1+u^{1-p}}{2}\Big)^{\frac{p}{1-p}}>0,
\end{align}
where the last inequality holds since it is equivalent to the following
$$\mathfrak{M}_{1-p}(1,u)<u.$$
\end{proof}
Thus, it is sufficient to show the case $s<t$.

$\textbf{Step 5.}$ $\mathit{Case \ }r<s<t$

I start with a useful lemma:

\begin{lemma}
\label{da radici a senza}
Let $A,B,C$ three non-negative numbers. Then 

\begin{equation}
\label{caso con radici}
\sqrt{C}\leq \sqrt{A}+\sqrt{B}
\end{equation}
if and only if for every $\alpha,\beta \in (0,1)$ such that $\alpha + \beta=1$ we have

\begin{equation}
\label{caso al quadrato}
C\leq \frac{A}{\alpha}+\frac{B}{\beta}.
\end{equation}
\end{lemma}
\begin{proof}
Let us suppose $\eqref{caso con radici}$. Then
$$C\leq \Big(\alpha\frac{\sqrt{A}}{\alpha}+\beta\frac{\sqrt{B}}{\beta}\Big)^2\leq \frac{A}{\alpha}+\frac{B}{\beta}$$
where I have used the Jensen inequality for the convex function $f(x)=x^2$.
In order to show that $\eqref{caso al quadrato}\Rightarrow \eqref{caso con radici}$ I notice that if $A=0$ or $B=0$ the result is clearly true, otherwise I choose $\alpha,\beta$ such that $\frac{\sqrt{A}}{\alpha}=\frac{\sqrt{B}}{\beta}$. Thus
$$ \big(\sqrt{A}+\sqrt{B}\big)^2=\Big(\alpha\frac{\sqrt{A}}{\alpha}+\beta\frac{\sqrt{B}}{\beta}\Big)^2= \frac{A}{\alpha}+\frac{B}{\beta}\geq C.$$
\end{proof}

In order to simplify the notation, from now on I put $d(x_1,x_3)=d_{13}, \ d(x_1,x_2)=d_{12}, \ d(x_2,x_3)=d_{23}$. Then, I can use Lemma $\ref{da radici a senza}$ and the triangle inequality in the case $p=1$ in order to derive a new inequality. Given $\alpha,\beta \in (0,1)$ such that $\alpha + \beta=1$, one gets:

\begin{multline}	
\label{riformulazione r<s<t}
\bar{H}_p(x_1,r;x_3,t)=\bar{H}_1(x_1,r;x_3,t)+\Big[\media_0(r,t)-\media_{1-p}(r,t)\Big]cos(d_{13})\leq 
\\
\frac{\bar{H}_1(x_1,r;x_2,s)}{\alpha}+\frac{\bar{H}_1(x_2,s;x_3,t)}{\beta\
}+\Big[\media_0(r,t)-\media_{1-p}(r,t)\Big]cos(d_{13})\leq 
\\
\frac{\bar{H}_p(x_1,r;x_2,s)}{\alpha}-\frac{\Big[\media_0(r,s)-\media_{1-p}(r,s)\Big]cos(d_{12})}{\alpha}+\frac{\bar{H}_p(x_2,s;x_3,t)}{\beta}-\frac{\Big[\media_0(s,t)-\media_{1-p}(s,t)\Big]cos(d_{23})}{\beta}
\\
+\Big[\media_0(r,t)-\media_{1-p}(r,t)\Big]cos(d_{13})\leq
\frac{\bar{H}_p(x_1,r;x_2,s)}{\alpha}+\frac{\bar{H}_p(x_2,s;x_3,t)}{\beta},
\end{multline}

where the last inequality in \eqref{riformulazione r<s<t} is valid if and only if (using again Lemma $\ref{da radici a senza}$):

\begin{equation}
\label{disuguaglianza r<s<t}
\sqrt{\Big[\media_0(r,t)-\media_{1-p}(r,t)\Big]cos(d_{13})}\leq 
\sqrt{\Big[\media_0(r,s)-\media_{1-p}(r,s)\Big]cos(d_{12})}
+\sqrt{\Big[\media_0(s,t)-\media_{1-p}(s,t)\Big]cos(d_{23})}.
\end{equation}

I notice that $\cos(d_{13})\leq \cos(d_{12})\wedge \cos(d_{23})$. Thus, it is enough to prove $\eqref{disuguaglianza r<s<t}$ in the case $d_{13}=d_{12}=d_{23}=0$.
Now, I adapt the strategy used in the proof of \cite[Lemma 2]{Endres}  I put $u:=\frac{r}{s}\in (0,1)$, $\beta u:=\frac{t}{s}\in (1,+\infty)$, so that $\beta$ is a real number greater than $1$. Thus, $\frac{1}{\beta}<u<1$ and, denoted by $F(s)$ the function
\begin{equation}
F(s)=\sqrt{\Big[\media_0(r,s)-\media_{1-p}(r,s)\Big]}+\sqrt{\Big[\media_0(s,t)-\media_{1-p}(s,t)\Big]},
\end{equation}
it follows

\begin{equation}
\label{derivata lato destro}
4\sqrt{s}\frac{d}{ds}F(s)=g_p(u)+g_p(\beta u),
\end{equation}

where 
\begin{equation}
g_p(u):=\frac{\media_0(u,1)-\frac{2}{u^{1-p}+1}\media_{1-p}(u,1)}{\sqrt{\media_0(u,1)-\media_{1-p}(u,1)}}.
\end{equation}

\begin{lemma}
The function 
$$u\mapsto g_p(u)+g_p(\beta u)$$
is increasing in $(\frac{1}{\beta},1)$ with only one zero inside the interval, so that $F$ is minimized when $s=r$ or $s=t$ and the inequality $\eqref{disuguaglianza r<s<t}$ holds.
\end{lemma}
\begin{proof}
Since $g_p$ is continuous in $(0,1)$ and $(1,+\infty)$, it is enough to show that $g_p$ is increasing in $(0,1)$ and $(1,+\infty)$, and 
$$\lim_{u\to 1^{-}}g_p(u)=\sqrt{2(p-1)}, \ \ \lim_{u\to 1^{+}}g_p(u)=-\sqrt{2(p-1)}.$$
The limits are easy to compute expanding the function near $u=1$. 
When $u\in(0,1)\cup(1,+\infty)$ it follows:

\begin{equation}
\label{derivata g}
g_p'(u)=
\frac{(p-\frac{1}{2})u^{-p}\Big(\frac{u^{1-p}+1}{2}\Big)^{\frac{2p}{1-p}}-pu^{-p+\frac{1}{2}}\Big(\frac{u^{1-p}+1}{2}\Big)^{\frac{2p-1}{1-p}}+\frac{1}{2}}{2\big[\media_0(u,1)-\media_{1-p}(u,1)\big]^{\frac{3}{2}}}.
\end{equation}

The proof is complete if I show that
$$(p-\frac{1}{2})u^{-p}\Big(\frac{u^{1-p}+1}{2}\Big)^{\frac{2p}{1-p}}-pu^{-p+\frac{1}{2}}\Big(\frac{u^{1-p}+1}{2}\Big)^{\frac{2p-1}{1-p}}+\frac{1}{2}>0$$
for any $p>1$ and any positive $u$.

I put $v=\frac{u^{1-p}+1}{2}$, so that I have to prove 
$$(p-\frac{1}{2})\Big(\frac{2v-1}{v^2}\Big)^{\frac{-p}{1-p}}-p\Big(\frac{2v-1}{v^2}\Big)^{\frac{1-2p}{2(1-p)}}+\frac{1}{2}>0$$
for any $p>1$ and $v\in (\frac{1}{2},+\infty)$. Finally I put $w=\Big(\frac{2v-1}{v^2}\Big)^{\frac{1}{p-1}}\in (0,1)$ and I prove that 
$$h(w):=(p-\frac{1}{2})w^p-pw^{p-\frac{1}{2}}+\frac{1}{2}>0,$$
for any $p>1$ and $w\in (0,1)$.
To prove the last inequality, I notice that $h(1)=0$ and $h$ is a decreasing function because $$h'(w)=p(p-\frac{1}{2})w^{p-\frac{3}{2}}(\sqrt{w}-1)<0.$$ 
\end{proof}

$\textbf{Step 6.}$ $\mathit{Case \ }s\leq r$

The strategy is to use again Lemma $\ref{da radici a senza}$ and the triangle inequality for the case $p=1$, but I have to derive a different inequality with respect to the previous step. 

\begin{lemma}
I denote with $\theta_p:[0,+\infty)\times [0,+\infty)\rightarrow [0,+\infty)$ the function
$$\theta_p(r,t):=\frac{\media_{1-p}(r,t)}{\media_0(r,t)}.$$
Then $\theta_p(s,t)\leq \theta_p(r,t).$
\end{lemma}
\begin{proof}
It is sufficient to prove that $\theta_p(u,1)$ is increasing in $(0,1)$. This is easy to prove, indeed
$$\sqrt{u}\frac{d}{du}\theta_p(u,1)=\theta_p(u,1)\Big(\frac{u^{1-p}}{u^{1-p}+1}-\frac{1}{2}\Big)\geq 0.$$
\end{proof}

Let $\alpha,\beta$ be any two numbers in $(0,1)$ such that $\alpha+\beta=1$. Let us suppose, at first, $\theta_p(s,r)\leq \theta_p(r,t)$. Then

\begin{multline}
\label{disuguaglianza finale primo caso}
\bar{H}_p(x_1,r;x_3,t)=
\bar{H}_1(x_1,r;x_3,t)\theta_p(r,t)+\media_1(r,t)\big(1-\theta_p(r,t)\big)\leq 
\\
\frac{\bar{H}_1(x_1,r;x_2,s)}{\alpha}\theta_p(r,t)+\frac{\media_1(r,s)}{\alpha}\big(1-\theta_p(r,t)\big)+ 
\frac{\bar{H}_1(x_2,s;x_3,t)}{\beta}\theta_p(r,t)+\frac{\media_1(s,t)}{\beta}\big(1-\theta_p(r,t)\big)\leq 
\\
\frac{\bar{H}_p(x_1,r;x_2,s)}{\alpha}+\frac{\bar{H}_p(x_2,s;x_3,t)}{\beta},
\end{multline}

where the first inequality in \eqref{disuguaglianza finale primo caso} follows by the triangle inequality for $\sqrt{\bar{H}_1}$ and $\sqrt{\media_1}$, while the second inequality follows since $\theta_p(s,r)\leq \theta_p(r,t)$, $\theta_p(s,t)\leq \theta_p(r,t)$ and $\bar{H}_1\leq \media_1$.

It remains to investigate the case $\theta_p(s,r)>\theta_p(r,t)$.
Let us suppose 

\begin{equation}
\label{disuguaglianza finale}
\sqrt{\media_1(r,t)\big(1-\theta_p(r,t)\big)}\leq
\sqrt{\media_1(s,r)\big(1-\theta_p(s,r)\big)}+\sqrt{\media_1(s,t)\big(1-\theta_p(r,t)\big)}.
\end{equation}

Then 

\begin{multline}
\bar{H}_p(x_1,r;x_3,t)=
\bar{H}_1(x_1,r;x_3,t)\theta_p(r,t)+\media_1(r,t)\big(1-\theta_p(r,t)\big)\leq 
\\
\frac{\bar{H}_1(x_1,r;x_2,s)}{\alpha}\theta_p(r,t)+\frac{\media_1(s,r)}{\alpha}\big(1-\theta_p(s,r)\big)+ 
\frac{\bar{H}_1(x_2,s;x_3,t)}{\beta}\theta_p(r,t)+\frac{\media_1(s,t)}{\beta}\big(1-\theta_p(r,t)\big)\leq 
\\
\frac{\bar{H}_1(x_1,r;x_2,s)}{\alpha}\theta_p(s,r)+\frac{\media_1(s,r)}{\alpha}\big(1-\theta_p(s,r)\big)+ 
\frac{\bar{H}_1(x_2,s;x_3,t)}{\beta}\theta_p(r,t)+\frac{\media_1(s,t)}{\beta}\big(1-\theta_p(r,t)\big)\leq 
\\
\frac{\bar{H}_p(x_1,r;x_2,s)}{\alpha}+\frac{\bar{H}_p(x_2,s;x_3,t)}{\beta},
\end{multline}

where in the first inequality I use $\eqref{disuguaglianza finale}$, in the second I use the hypothesis $\theta_p(s,r)>\theta_p(r,t)$, in the third I reason as in the second step of the inequality \eqref{disuguaglianza finale primo caso} in order to replace $\theta_p(r,t)$ with $\theta_p(s,t)$.

Finally, the proof is complete if I prove the inequality $\eqref{disuguaglianza finale}$. Since the case $r=s$ is trivial, I put $u:=\frac{s}{r}<1$, $v:=\frac{t}{r}>1$, so that I can rewrite the inequality $\eqref{disuguaglianza finale}$ in the following equivalent way

\begin{equation}
(1+\sqrt{u})^2\frac{\media_0(u,1)-\media_{-1}(u,1)}{\media_0(u,1)-\media_{1-p}(u,1)}\leq
\frac{\sqrt{v}\big(\sqrt{u+v}+\sqrt{1+v}\big)^2}{\media_0(1,v)-\media_{1-p}(1,v)}.
\end{equation}

Now I use the estimate 
$$(\sqrt{u+v}+\sqrt{1+v}\big)^2\geq 1+4v,$$
so that it is sufficient to prove that for any $u\in (0,1)$ and any $v\in (1,+\infty)$ 

\begin{equation}
\label{disuguaglianza finale in u e v}
(1+\sqrt{u})^2\frac{\media_0(u,1)-\media_{-1}(u,1)}{\media_0(u,1)-\media_{1-p}(u,1)}\leq 
\frac{\sqrt{v}(1+4v)}{\media_0(1,v)-\media_{1-p}(1,v)}
\end{equation}

It is easy to see that the last inequality is true at least if $p\geq \frac{3}{2}$. For example, one can bound the left hand side with 
$$l(u):=(1+\sqrt{u})^2\frac{\media_0(u,1)-\media_{-1}(u,1)}{\media_0(u,1)-\media_{-\frac{1}{2}}(u,1)},$$ 
and the right hand side with 
$$r(u):=\frac{\sqrt{v}(1+4v)}{\sqrt{v}-1}.$$
Then, standard computations show that:

$$\sup_{u\in (0,1)}l(u)<\inf_{u\in(1,+\infty)} r(u).$$

If $1<p<\frac{3}{2}$ one needs precise bounds that I have found in $\cite{Kouba}$. 
The supremum of the left hand side of $\eqref{disuguaglianza finale in u e v}$ is $\frac{4}{p-1}$. For the right hand side of $\eqref{disuguaglianza finale in u e v}$ one has:

\begin{equation}
\frac{\sqrt{v}(1+4v)}{\media_0(v,1)-\media_{1-p}(v,1)}=\frac{\media_{p-1}(1,v)(1+4v)}{\media_{p-1}(1,v)-\media_{0}(1,v)}\geq \frac{\sqrt{v}(1+4v)}{\media_{p-1}(1,v)-\media_{0}(1,v)}\geq 4\frac{\media_1(1,v)-\media_0(1,v)}{\media_{p-1}(1,v)-\media_{0}(1,v)},
\end{equation}

and again using the results in $\cite{Kouba}$ it is proved that the sharp lower bound for the last expression is $\frac{4}{p-1}$.

\end{proof}

\noindent\textbf{Acknowledgment.} The author thanks Prof.~Giuseppe Savar\'e for many valuable suggestions.

\bibliographystyle{IEEEtran}
\bibliography{IEEEabrv,mybibfile}

\end{document}